\patchcmd{\@maketitle}{\LARGE \@title}{\fontsize{18}{19.2}\selectfont\@title}{}{}
\newcommand{\qw}[1][-1]{\ar @{-} [0,#1]}
\newcommand{\qwx}[1][-1]{\ar @{-} [#1,0]}
\newcommand{\gate}[1]{*+<.6em>{#1} \POS ="i","i"+UR;"i"+UL **\dir{-};"i"+DL **\dir{-};"i"+DR **\dir{-};"i"+UR **\dir{-},"i" \qw}
\newcommand{\meter}{*=<1.8em,1.4em>{\xy ="j","j"-<.778em,.322em>;{"j"+<.778em,-.322em> \ellipse ur,_{}},"j"-<0em,.4em>;p+<.5em,.9em> **\dir{-},"j"+<2.2em,2.2em>*{},"j"-<2.2em,2.2em>*{} \endxy} \POS ="i","i"+UR;"i"+UL **\dir{-};"i"+DL **\dir{-};"i"+DR **\dir{-};"i"+UR **\dir{-},"i" \qw}
\newcommand{\control}{*!<0em,.025em>-=-<.2em>{\bullet}}
\newcommand{\ctrl}[1]{\control \qwx[#1] \qw}
\newcommand{\multigate}[2]{*+<1em,.9em>{\hphantom{#2}} \POS [0,0]="i",[0,0].[#1,0]="e",!C *{#2},"e"+UR;"e"+UL **\dir{-};"e"+DL **\dir{-};"e"+DR **\dir{-};"e"+UR **\dir{-},"i" \qw}
\newcommand{\ghost}[1]{*+<1em,.9em>{\hphantom{#1}} \qw}
\newcommand{\lstick}[1]{*!R!<.5em,0em>=<0em>{#1}}
\newcommand{\Qcircuit}{\xymatrix @*=<0em>}
\DeclarePairedDelimiter{\ceil}{\lceil}{\rceil}
\newcommand{\Hil}{\mathcal{H}}
\newcommand{\HilD}{\mathcal{H}^D}
\newcommand{\Hild}{\mathcal{H}^d}
\newcommand{\Hilin}{\mathcal{H}_{in}}
\newcommand{\Hilout}{\mathcal{H}_{out}}
\newcommand{\E}{\mathcal{E}}
\newcommand{\K}{\mathcal{K}}
\newcommand{\X}{\mathcal{X}}
\newcommand{\Y}{\mathcal{Y}}
\newcommand{\T}{\mathcal{T}}
\newcommand{\A}{\mathcal{A}}
\newcommand{\C}{\mathcal{C}}
\newcommand{\Sc}{\mathcal{S}}
\newcommand{\Ue}{\mathrm{U_{\mathcal{E}}}}
\newcommand{\U}{\mathrm{U}}
\newcommand{\Uset}{\mathcal{U}}
\newcommand{\Gen}{\mathrm{Gen}}
\newcommand{\mbraket}[2]{\bra{#1}#2\rangle} 
\newcommand{\negl}{negl}
\newcommand{\nonnegl}{non\text{-}\negl}
\newcommand{\kc}{\ket{\phi^c_i}}
\newcommand{\kr}{\ket{\phi^r_i}}
\newcommand{\krm}{\ket{\phi^r_i}^{\otimes M}}
\newcommand{\kp}{\ket{\phi^p_i}}
\newtheorem{theorem}{Theorem}
\newtheorem{lemma}{Lemma}
\newtheorem{definition}{Definition}
\newtheorem{game}{Game}
\begin{document}

\title{On the Connection Between Quantum Pseudorandomness and Quantum Hardware Assumptions}
% \author[1]{}
\date{}
% \author{}
\author[1]{Mina Doosti\thanks{m.doosti@sms.ed.ac.uk}}
\author[1]{Niraj Kumar}
\author[1,2]{Elham Kashefi}
\affil[1]{School of Informatics, 10 Crichton St., University of Edinburgh, United Kingdom}
\affil[2]{CNRS, LIP6, Sorbonne Universit\'{e}, 4 place Jussieu, 75005 Paris, France}
\author[1]{Kaushik Chakraborty\thanks{kchakrab@exseed.ed.ac.uk}}

% \author{Mina Doosti~\orcid{0000-0003-0920-335X}}
% \affiliation{School of Informatics, 10 Crichton Street, Edinburgh, United Kingdom, EH8 9AB.}

\maketitle
\medskip

\begin{abstract}
This paper, for the first time, addresses the questions related to the connections between quantum pseudorandomness and quantum hardware assumptions, specifically quantum physical unclonable functions (qPUFs). Our results show that efficient pseudorandom quantum states (PRS) are sufficient to construct the challenge set for universally unforgeable qPUFs, improving the previous existing constructions based on the Haar-random states. We also show that both the qPUFs and the quantum pseudorandom unitaries (PRUs) can be constructed from each other, providing new ways to obtain PRS from the hardware assumptions. Moreover, we provide a sufficient condition (in terms of the diamond norm) that a set of unitaries should have to be a PRU in order to construct a universally unforgeable qPUF, giving yet another novel insight into the properties of the PRUs. Later, as an application of our results, we show that the efficiency of an existing qPUF-based client-server identification protocol can be improved without losing the security requirements of the protocol.  
\end{abstract}
\newpage
% \tableofcontents
\maketitle

\section{Introduction} 
\label{sec:intro}

Pseudorandomness is one of the most fundamental concepts in the domain of cryptography and complexity theory. In contrast to true randomness, it captures the notion of primitives that behaves randomly to the computationally-bounded observers \cite{yao1982theory,shamir1983generation,blum1984generate}. The pseudorandom objects like pseudorandom number generators (PRGs) and pseudorandom functions (PRFs) play a crucial role in designing classical symmetric key cryptography protocols for secure communication \cite{goldreich1986construct,haastad1999pseudorandom, goldreich1984cryptographic, luby1988construct, rompel1990one}. These pseudorandom objects can be designed by exploiting the algebraic properties of the families of keyed functions like the keyed hash functions or from some hardware assumptions like the physical unclonable functions (PUFs). In the classical world, the relationship between PUF and pseudorandomness is well-studied~\cite{ruhrmair2009foundations}. Similar to classical pseudorandomness, recently Ji, Liu, and Song \cite{ji2018pseudorandom} introduced the concept of quantum pseudorandomness such as pseudorandom quantum states (PRSs) and pseudorandom unitaries (PRUs) as families of states or unitary transformations that are indistinguishable from Haar measure (true random measure) to any quantum computationally-bounded observers. On the other hand, similar to the classical PUFs, recently, we have the concept of the quantum PUFs \cite{arapinis2021quantum}. However, unlike the classical case, the relation between the quantum pseudorandomness and the quantum PUFs is not well-explored. The existing PRS schemes are constructed under computational assumptions such as quantum-secure PRF or quantum secure one-way functions~\cite{brakerski2020scalable,ji2018pseudorandom}. An interesting question that arises is whether quantum pseudorandomness can be achieved under a different set of assumptions? In this paper, to the best of our knowledge, for the first time, we show the construction of quantum pseudorandom unitaries from quantum PUFs and vice-versa. 

Quantum pseudorandom states are an ensemble of a keyed family of quantum states $\{|\phi_k\rangle\}_{k \in \K}$ that can be generated efficiently \cite{ji2018pseudorandom}. The pseudorandomness comes from the property that for any polynomial-time quantum adversary, any polynomial number of copies of the states that are sampled from the ensemble $\{|\phi_k\rangle\}_{k \in \K}$ is indistinguishable from the same number of copies of Haar-random states. Similarly, pseudorandom unitaries are an ensemble of a keyed family of unitaries $\{U_k\}_{k \in \K}$ that can be implemented efficiently \cite{ji2018pseudorandom}. Analogous to the PRS, the pseudorandomness of the PRUs implies that with oracle access to the unitary, no polynomial-time quantum adversary can distinguish between the unitaries that are sampled from $\{U_k\}_{k \in \K}$ and the Haar-measure unitaries.   

A PUF is designed to be a cost-efficient, and low resource hardware device that provides a unique physically defined digital fingerprint \cite{delvaux2017security, herder2014physical}. Corresponding to a challenge it produces a unique response that acts as an identifier. In the case of classical PUFs, the uniqueness comes from the unique physical variations that occur naturally during the manufacturing process of the device. Such subtle physical variations of the hardware components during the manufacturing process can be easily measured but are infeasible to reproduce in practice. It is well known in the classical setting that theoretically, PUFs can be considered pseudorandom functions. However, in practice, most of the classical PUFs are vulnerable to machine learning-based attacks \cite{ganji2016strong,ruhrmair2010modeling,khalafalla2019pufs}. Due to this shortcoming, there is a significant interest in designing quantum PUFs that utilise quantum mechanical properties (qPUFs), where both the challenges and responses are quantum states \cite{arapinis2021quantum, gianfelici2020theoretical,nikolopoulos2017continuous}. Quantum challenges and responses feature an additional property that they cannot be cloned by the laws of quantum mechanics \cite{wootters1982single}, in contrast to the classical case.

In general, a qPUF is modelled as a completely positive trace preserving (CPTP) map, which maps an input challenge state to a unique response state. In addition, a qPUF must also be unique i.e. two distinct qPUFs must generate different responses to any given challenge with high probability (\emph{uniqueness} property), and it must be unforgeable by any bounded (quantum or classical) adversary trying to clone the device (\emph{unforgeability property}). In \cite{arapinis2021quantum}, Arapinis et al. developed a formal security notion for the qPUFs and provided a qPUF construction based on Haar-random unitaries with the challenge states also being drawn from the Haar-random state, to satisfy the unforgeability property. Moreover, in the same paper, the authors design a generic quantum emulation-based attack for forging any qPUF and proved that their generic construction is unforgeable against any polynomial-time quantum adversary. This construction, although secure, is not practical due to the Haar-random requirement on the unitaries and the states. The reason for this is the fact that sampling from Haar measure requires exponential resources~\cite{knill1995approximation} and hence is experimentally challenging~\cite{carolan2015universal}. The construction of the unitary qPUF itself was partially improved by the result of Kumar et. al \cite{kumar2021efficient} where they constructed a qPUF based on unitary $t$-designs which are efficiently built. However, they still require the challenges to be drawn from the Haar-random set of states to prove the unforgeability property. Further, we emphasise that, unlike the classical PUFs, the literature on qPUFs is not yet mature, and we have only very few candidate designs for qPUFs as mentioned above.  

In this work, we make substantial progress in the above inefficient requirement of the challenges being chosen from the Haar measure. Specifically, we show that PRS can help reduce the challenger's overhead significantly in choosing the challenge states - from inefficient Haar-random states to efficient PRS. We further show that PRUs can be used as a viable candidate for qPUFs. This result provides yet another novel and efficient technique for constructing qPUFs.
Moreover, here we also investigate, whether qPUFs can be used as PRUs. Similar to the qPUF, PRU is also a relatively new concept, and to the best of our knowledge, there are no concrete designs for the PRUs. Our investigation in this paper helps establish a close connection between these two new fields, i.e., qPUFs and quantum pseudorandomness. This relation gives us novel insights into designing both qPUFs and the PRUs. We are optimistic that the connections that we foster here would benefit both communities and the advances in one field would help to enrich the advances in the other.
In the next subsection, we give a brief outline of our results. 

\subsection{Result overview}

In this paper, we first address the inefficiency issue of the qPUF design in \cite{arapinis2021quantum, kumar2021efficient} and prove the security against quantum polynomial-time adversaries even if the challenge states are sampled from a set of pseudorandom quantum states.

\begin{theorem}[informal]\label{th:inf-prs-unf}
Any unitary qPUF satisfies unforgeability with the challenges that are selected from a Pseudorandom Quantum States (PRS) family (instead of Haar random states) against quantum polynomial-time (QPT) adversaries. 
\end{theorem}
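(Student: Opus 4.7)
The plan is to prove this by a black-box reduction to the PRS security definition together with the existing Haar-random unforgeability theorem of \cite{arapinis2021quantum, kumar2021efficient}. Assume for contradiction that there exists a QPT adversary $\A$ that wins the qPUF unforgeability game with non-negligible advantage when the challenge states are drawn from a PRS family $\{\ket{\phi_k}\}_{k \in \K}$. From $\A$, we will construct a QPT distinguisher $\mathcal{D}$ between polynomially many copies of a PRS state and the same number of copies of a Haar-random state, contradicting the security of the PRS.

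The distinguisher $\mathcal{D}$ works as follows. It receives $m(n) = \mathrm{poly}(n)$ copies of an unknown state $\ket{\chi}$ which is either $\ket{\phi_k}$ for a uniformly random key $k \in \K$ or a Haar-random $\ket{\psi}$. Playing the role of the challenger against $\A$, $\mathcal{D}$ uses its direct access to the fixed unitary qPUF $U$ to simulate every one of $\A$'s polynomial-size adaptive learning queries by simply applying $U$ to the supplied register. When $\A$ requests the challenge, $\mathcal{D}$ hands over one copy of $\ket{\chi}$; once $\A$ outputs a candidate forgery state $\sigma$, $\mathcal{D}$ uses the remaining copies of $\ket{\chi}$ and its access to $U$ to prepare the ``true'' response $U\ket{\chi}$, then runs the same verification subroutine used in the unforgeability game (e.g.\ a SWAP test or the multi-copy verifier fixed by the definition) between $\sigma$ and $U\ket{\chi}$. $\mathcal{D}$ outputs $1$ iff the verifier accepts.

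By construction, $\mathcal{D}$'s output distribution in the PRS case coincides with the acceptance probability of $\A$ in the assumed winning PRS-challenge game, which is non-negligible, while in the Haar case it coincides with the acceptance probability of $\A$ in the Haar-challenge game, which is negligible by the prior unforgeability theorem. Hence the distinguishing advantage of $\mathcal{D}$ is non-negligible, contradicting pseudorandomness. Since every step $\mathcal{D}$ performs (simulating learning queries with $U$, handing out one copy, preparing $U\ket{\chi}$, running the fixed verifier) is QPT and uses only a polynomial number of PRS copies, the reduction is legitimate inside the PRS security game.

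The main obstacle I anticipate is bookkeeping around the verification subroutine: the unforgeability definition typically measures success by a test that implicitly assumes the challenger \emph{knows} the input state, whereas $\mathcal{D}$ only has additional copies of $\ket{\chi}$. One has to argue, using standard SWAP-test / gentle-measurement bounds, that verifying $\sigma$ against $U\ket{\chi}$ via copies produces the same accepting probability as the ideal test up to negligible error, uniformly in whether $\ket{\chi}$ is PRS or Haar. A related subtlety is ensuring that $\A$'s adaptive learning queries do not depend in a way that breaks the multi-copy PRS reduction; since $\A$ receives only the challenge copy and no side information about the other copies $\mathcal{D}$ privately holds, this is benign, but it must be stated explicitly so that the reduction applies the PRS game in its standard $m$-copy form.
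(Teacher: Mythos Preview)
Your proposal is correct and follows essentially the same approach as the paper: a reduction to PRS indistinguishability in which the distinguisher, given $m$ copies of an unknown state, simulates the unforgeability challenger (answering learning queries via access to $U$, handing one copy as the challenge, and running the SWAP/GSWAP verifier on the forgery against $U\ket{\chi}$ built from the remaining copies), then invokes the Haar-case unforgeability bound to conclude a non-negligible gap. The paper packages this same argument as a sequence of five hybrid games (showing first that receiving mixed batches $\ket{\phi^b}^{\otimes l}\otimes(U\ket{\phi^b})^{\otimes l'}$ is equivalent to the plain PRS game, then inserting the learning phase), and it makes the verification bookkeeping you flag explicit via the GSWAP acceptance formula, but the reduction and its analysis are the same as yours.
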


Here we also show that the PRUs can be used as qPUFs. Moreover, we establish a connection between a unitary family of qPUFs, with a specific practical requirement, and PRUs. 

\begin{theorem}[informal]\label{th:inf-pru-uu}
Any PRU family can be a unitary qPUF family.
\end{theorem}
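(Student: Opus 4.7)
My plan is to verify that a PRU family $\{U_k\}_{k\in\K}$, viewed as the CPTP map $\rho \mapsto U_k \rho U_k^\dagger$, satisfies the two defining security properties of a unitary qPUF family, namely (i) \emph{uniqueness} and (ii) \emph{unforgeability}. The overall strategy in both cases is a reduction to PRU security: each property is already known (or is straightforward) for a Haar-random unitary family, and the computational indistinguishability of PRUs from Haar lifts the property to the PRU setting up to a negligible loss.

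For uniqueness, I would consider two independently sampled keys $k_1, k_2 \leftarrow \K$ and show that the channels $U_{k_1}(\cdot)U_{k_1}^\dagger$ and $U_{k_2}(\cdot)U_{k_2}^\dagger$ are distinguishable by a polynomial-time procedure acting on efficiently preparable challenge states (for example, the PRS challenges from Theorem~\ref{th:inf-prs-unf}). By a standard two-step hybrid, replacing $U_{k_1}$ and then $U_{k_2}$ by independent Haar samples changes the distinguisher's advantage by at most a negligible amount, since any non-negligible change would yield a QPT distinguisher between PRU and Haar. Uniqueness for two Haar-random unitary qPUFs is already established in the construction of Arapinis et al., so the uniqueness of the PRU-based qPUF follows up to a negligible additive loss.

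For unforgeability, I would give a direct black-box reduction. Suppose towards contradiction that there is a QPT adversary $\A$ that, given polynomially many challenge-response pairs of the form $(\ket{\phi_i}, U_k\ket{\phi_i})$ for challenges drawn from the (PRS or Haar) challenge set, produces a forgery on a fresh challenge with non-negligible probability. I build a PRU distinguisher $\mathcal{D}$ that, given oracle access to a target unitary $U$, simulates the unforgeability game by sampling challenges itself, using its oracle to compute the honest responses, forwarding the learning transcript to $\A$, and finally running the qPUF verification (e.g., a SWAP or generalised SWAP test) between $\A$'s forgery and a freshly computed $U\ket{\phi^*}$ on a fresh challenge $\ket{\phi^*}$. $\mathcal{D}$ outputs ``PRU'' iff verification accepts. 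Since the Haar-random qPUF is unforgeable (by the prior result, and combined with Theorem~\ref{th:inf-prs-unf} this also holds with PRS challenges), $\A$ succeeds with only negligible probability when $U$ is Haar, but by assumption it succeeds non-negligibly when $U$ is a PRU, yielding a non-negligible distinguishing advantage and contradicting PRU security.

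The main obstacle is ensuring that the whole reduction is faithfully and efficiently simulable from oracle access to $U$ alone: $\mathcal{D}$ must be able to prepare all the challenge states (which is efficient for PRS or for a suitably approximate challenge distribution), answer every one of $\A$'s learning queries by a single forward query to $U$, and execute the verification procedure using only additional queries to $U$ rather than a side copy of the ``real'' device. A secondary subtlety is that the success probability in the unforgeability game is averaged over the key and the challenge distribution simultaneously, so some care is needed to interleave the PRU-to-Haar hybrid with the PRS-to-Haar hybrid from Theorem~\ref{th:inf-prs-unf} without blowing up the advantage; this is a standard union bound over a constant number of hybrids, but is where the bookkeeping must be done cleanly.
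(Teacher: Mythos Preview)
Your unforgeability reduction is essentially the paper's argument. The paper proves it in two steps---first PRU $\Rightarrow$ UU (Definition~\ref{def:uu}) via exactly your ``simulate the game with the oracle, then run an equality test'' distinguisher, and then invokes the prior result UU $\Rightarrow$ universal unforgeability---but the core reduction is the same as yours, just factored through the single-shot UU notion.

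Your uniqueness argument, however, differs from the paper's and has a gap. In your two-step hybrid you first replace $U_{k_1}$ by a Haar sample (fine: the reduction hardcodes $k_2$ and uses efficient computation of $U_{k_2}$), but in the second step you replace $U_{k_2}$ by Haar while the first slot is \emph{already} Haar. The reduction for that step would need to simulate a Haar-random unitary in the first slot, which is not efficient, so you do not get a QPT distinguisher and cannot invoke PRU security. (For the specific SWAP-on-a-fixed-state distinguisher this step happens to be true by a direct averaging calculation, but that is not the ``standard hybrid'' you invoke.) A second, smaller issue: citing ``uniqueness for two Haar-random qPUFs'' gives you a large diamond norm, but you need an \emph{efficient} witness of that distance to feed into the hybrids; you should name the concrete efficient test rather than appeal to an information-theoretic statement.

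The paper's route to uniqueness avoids both problems by going contrapositive and exploiting a feature of PRUs you never use: the \emph{efficient generation} algorithm $Q$. Assuming uniqueness fails (any two family members are negligibly close in diamond norm w.h.p.), the adversary samples a fresh key $k'$ locally, computes $U_{k'}\ket{\psi}$ herself, and compares it via GSWAP to the oracle output $U\ket{\psi}$. If the oracle is from the PRU family the contrapositive hypothesis forces high fidelity; if it is Haar the overlap is exponentially small. This gives a single clean PRU distinguisher with no hybrid over a Haar slot. If you want to keep your hybrid approach, the fix is to replace the second PRU-security invocation by a direct calculation for your chosen test (e.g.\ $\mathbb{E}_{U\leftarrow\mu}\,|\bra{\psi}U^{\dagger}V\ket{\psi}|^2 = 1/d$ for any fixed $V$), rather than another reduction.
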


\begin{theorem}[informal]\label{th:inf-practicaluu-pru}
A family of practically unknown unitaries is also a PRU family. Hence any unitary qPUF family that satisfies practical unknownness, is also a PRU family.
\end{theorem}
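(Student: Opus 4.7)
The plan is to prove the statement in two parts, with the first implication (practical unknownness $\Rightarrow$ PRU) being the substantive one and the second (qPUF with practical unknownness $\Rightarrow$ PRU) following as an immediate specialization. First I would pin down the definitions: a family $\{U_k\}_{k \in \K}$ is \emph{practically unknown} if, for every QPT learner $\A$ with oracle access to $U_k$ (under uniformly random $k$), the expected diamond distance between $U_k$ and the channel $\Lambda_\A$ that $\A$ outputs remains within a negligible gap of the maximum attainable diamond distance. The PRU notion from \cite{ji2018pseudorandom} in turn asks that no QPT distinguisher $\mathcal{D}$ with oracle access can distinguish the family from Haar-random unitaries with non-negligible advantage.

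The core of the proof is a contrapositive reduction. Suppose $\{U_k\}$ fails to be a PRU, so there exists a QPT distinguisher $\mathcal{D}$ with non-negligible advantage $\varepsilon$ in the PRU game. I would build a QPT learner $\A$ that, with oracle access to $U_k$, invokes $\mathcal{D}$ as a subroutine to produce an explicit description of a channel $\Lambda_\A$ whose expected diamond distance to $U_k$ is strictly less than the worst-case value, contradicting practical unknownness. The bridge from distinguishing advantage to diamond-norm approximation is provided by the diamond-norm sufficient condition already established earlier in the paper (the result connecting PRUs to unforgeable qPUFs via diamond-norm separation): reading that condition in the contrapositive, a non-negligible distinguishing advantage implies a diamond-norm separation of a matching order, which the learner can convert into an explicit approximation via standard amplification (repeated queries to reduce soundness error, together with a majority-vote or averaging step over candidate channels).

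The second statement then follows because a unitary qPUF is, by the formal model adopted in the paper, a QPT-implementable keyed unitary family; if it additionally satisfies practical unknownness, the first part applies verbatim and delivers the PRU conclusion. The only side-check is that the QPT-implementability built into the qPUF model already matches the efficiency requirement in the PRU definition, so no extra hypothesis is needed.

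The main obstacle is the distinguisher-to-learner reduction, because a PRU distinguisher outputs only a single bit per execution while a learner must produce an explicit and efficiently implementable channel approximation. I expect the cleanest route is to invoke the earlier diamond-norm condition as a black box rather than re-deriving it, but the quantifiers must be matched carefully: the parameters in the diamond-norm condition have to be reconciled with those arising from the amplified distinguishing advantage $\varepsilon$, and the learner's total runtime must stay polynomial. A secondary subtlety is that Haar-random unitaries are themselves practically unknown, so the argument cannot rest on unknownness alone; it must explicitly leverage the \emph{efficient sampleability} of the family to ensure that the reduction's queries and postprocessing remain within QPT, which is precisely what distinguishes a PRU from the Haar ensemble.
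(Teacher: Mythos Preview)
Your proposal rests on a definition of \emph{practical unknownness} that is not the one the paper uses, and this sends the argument down an unnecessarily difficult (and likely unworkable) path. You define a family as practically unknown if no QPT learner can output a channel $\Lambda_{\A}$ whose expected diamond distance to $U_k$ is non-trivially small. But the paper's Definition~\ref{def:pu} (and its polynomial-query extension, Definition~\ref{def:pu-poly}) is stated directly in terms of \emph{indistinguishability from Haar}: a unitary $U$ from $\Uset$ is $(\epsilon,d)$-practically unknown if, given any polynomial number of queries, no QPT adversary can distinguish $U$ from a Haar-random unitary with probability exceeding $\tfrac{1}{2}(1+0.5\epsilon)$. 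With that definition in hand, the implication ``practical unknownness $\Rightarrow$ PRU'' is essentially a one-line contraposition: a PRU distinguisher with non-negligible advantage $\delta$ \emph{is already} a distinguisher from Haar with success probability $\tfrac{1}{2}+\delta$, which directly violates the $\tfrac{1}{2}(1+0.5\epsilon)$ bound when $\epsilon=\negl(\lambda)$. No channel-learning step is needed.

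The ``distinguisher-to-learner'' reduction you flag as the main obstacle is therefore a self-imposed problem arising from the wrong definition, and it is also the step most likely to fail on its own terms: a single-bit distinguisher does not in general yield an efficient diamond-norm approximation of the unknown unitary, and the ``diamond-norm sufficient condition'' you plan to invoke is Theorem~\ref{th:max-unique-pru}, which (i) is proved \emph{after} the present theorem in the paper's logical order, and (ii) goes in the opposite direction (a diamond-norm uniqueness hypothesis implying PRU), so reading it contrapositively would not give what you need. Once you replace your working definition with the paper's, the proof collapses to the short contraposition above, and the second claim about unitary qPUFs follows exactly as you say, by specialization.
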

Later, we give a novel construction of PRUs from the family of qPUFs by exploring yet another hardware requirement which is their uniqueness property. The following result can also be applied to any unitary family with almost-maximal uniqueness, not only qPUFs.

\begin{theorem}[informal]\label{th:inf-unique-pru}
Any family of unitary transformation of over $d$-dimensional Hilbert space satisfying the almost-maximal uniqueness in the diamond norm is also a PRU family for sufficiently large $d$. Hence any PUF family satisfying this degree of uniqueness, is also a PRU.
\end{theorem}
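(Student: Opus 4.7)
The plan is to prove the statement by reduction to Theorem~\ref{th:inf-practicaluu-pru}, which establishes that a family of practically unknown unitaries is a PRU. Almost-maximal uniqueness in the diamond norm says that for any two distinct keys $k_1, k_2 \in \K$, $\|U_{k_1} - U_{k_2}\|_\diamond \geq 2 - \eta$ for some negligible $\eta$. A useful baseline is that independently Haar-sampled unitaries on $\mathbb{C}^d$ also satisfy this property with overwhelming probability for large $d$, via Levy-type concentration on the unitary group; so the condition is structurally compatible with Haar behaviour and cannot be ruled out on counting grounds once $d$ is large enough to accommodate an exponentially large key set.

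Given this, it suffices to show that almost-maximal uniqueness implies practical unknownness, after which Theorem~\ref{th:inf-practicaluu-pru} yields the PRU conclusion. I would argue by contrapositive. Suppose some QPT algorithm $\A$, given oracle access to $U_k$ for uniformly random $k$, can with non-negligible probability produce a classical description $\tilde U$ satisfying $\|U_k - \tilde U\|_\diamond \leq \delta$ for some $\delta < 1 - \eta/2$. By the triangle inequality combined with the uniqueness bound, for every $k' \neq k$ one obtains $\|U_{k'} - \tilde U\|_\diamond \geq 2 - \eta - \delta > 1$. Hence $\tilde U$ unambiguously singles out $k$ from the exponentially large key set, so the output of $\A$ must encode superpolynomial information about $k$ — contradicting what a QPT algorithm with polynomially many queries to a unitary on a sufficiently large Hilbert space can extract. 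The ``for sufficiently large $d$'' hypothesis is used precisely at this step to open the gap between the key-space entropy and the query-extractable information.

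The main obstacle I foresee is making the final information-theoretic step quantitative inside the oracle-query model: one must bound the bits of $k$ that $q = \mathrm{poly}(\lambda)$ adaptive queries to a member of the family can extract and verify this bound sits strictly below $\log|\K|$ in the parameter regime where the uniqueness slack $\eta$ is negligible. A secondary subtlety is a potential mismatch between the classical-description formulation of practical unknownness used above and whatever formulation Theorem~\ref{th:inf-practicaluu-pru} adopts (e.g.\ channel emulation rather than explicit description); reconciling the two may require an intermediate lemma turning an efficient emulator into an efficient description or vice versa. If this reduction route does not close cleanly, a fallback is a direct moment-matching proof, showing that almost-maximal pairwise uniqueness together with large $d$ forces the low-order moments of the family's distribution over $U(d)$ to agree with the Haar moments, which then suffices to bound any $\mathrm{poly}$-query distinguisher's advantage and hence to establish the PRU property directly.
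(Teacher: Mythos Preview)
Your route through Theorem~\ref{th:inf-practicaluu-pru} has a structural gap that is more than the ``secondary subtlety'' you flag. In the paper, practical unknownness (Definitions~\ref{def:pu} and~\ref{def:pu-poly}) is \emph{itself} a Haar-indistinguishability statement: the distinguishing probability from a Haar unitary is at most $\tfrac{1}{2}(1+\tfrac{\epsilon}{2})$ after polynomially many queries. It is not a statement about the adversary producing a classical description $\tilde U$ close to $U_k$. So the contrapositive you set up --- ``if not practically unknown then $\A$ outputs some $\tilde U$ with $\|U_k-\tilde U\|_\diamond\le\delta$'' --- does not follow from the negation of the actual definition; a distinguisher from Haar need not learn anything pointwise about $U_k$. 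Once practical unknownness is read correctly, your reduction becomes essentially circular: you would be proving ``almost-maximal uniqueness $\Rightarrow$ indistinguishable from Haar'' by first proving ``almost-maximal uniqueness $\Rightarrow$ indistinguishable from Haar''. The triangle-inequality step identifying $k$ from $\tilde U$ is fine, but it is answering the wrong question: pairwise far-apartness blocks key recovery, not Haar-distinguishing. A family can satisfy your uniqueness bound and still share an efficiently testable global feature absent from Haar, and nothing in your argument excludes that.

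The paper's proof (Theorem~\ref{th:max-unique-pru}) takes a completely different, and direct, route through random matrix theory rather than any reduction to unknownness. It translates $\|U_i-U_j\|_\diamond=2-\epsilon$ into a statement about $\delta(U_i^\dagger U_j)=\min_{\ket{\phi}}|\bra{\phi}U_i^\dagger U_j\ket{\phi}|$ being negligible, which forces the eigenvalues of $U_i^\dagger U_j$ to span arcs of length $\approx\pi$ on $\mathbb{S}^1$. Using Wieand's eigenvalue-counting theorem and a Chernoff bound, it argues the eigenvalues are asymptotically uniform on the circle; then, via the Diaconis--Shahshahani convergence of the empirical spectral measure for Haar and a Kolmogorov-distance comparison, it concludes that the sampling distribution $\chi_{\Uset}$ converges to Haar in the large-$d$ limit, from which the PRU indistinguishability is read off. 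Your fallback ``moment-matching'' idea is in spirit closer to this than your main plan, but the paper's actual mechanism is spectral rather than moment-based, and the almost-maximal slack $2-\epsilon$ is exactly what makes the arc-length computation go through.
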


The demographic summary of the above theorems can be found in Figure~\ref{fig:results}. We finish our paper with a secure and efficient qPUF-based client verification protocol using our result from Theorem $1$. 

\begin{theorem}[informal]\label{th:inf-protocols}
The qPUF-based identification protocols in~\cite{doosti2020client} can achieve the same security guarantee against QPT adversary if the Haar-random states are replaced with PRS.
\end{theorem}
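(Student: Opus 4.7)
The plan is to leverage Theorem~\ref{th:inf-prs-unf} as a black-box reduction, combined with a standard hybrid argument, to transport the security proof of the original Haar-based protocol of \cite{doosti2020client} to the PRS-based variant. First I would isolate, in the proof of security of the identification protocol of \cite{doosti2020client}, the exact step at which the Haar-random distribution of the challenges is invoked: typically this occurs only when the reduction to qPUF unforgeability is applied, since the remainder of the protocol (the CRP database, the verification test, and the acceptance rule) treats the challenges as opaque quantum states.

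Second, I would set up a two-hybrid argument. In Hybrid~0 the verifier uses challenges sampled from the PRS family, as in the proposed variant; in Hybrid~1 the verifier uses Haar-random states exactly as in \cite{doosti2020client}. By the defining property of a PRS family, any QPT adversary distinguishes these two hybrids only with negligible advantage, since otherwise the adversary together with the verifier's acceptance predicate would constitute a QPT distinguisher against the PRS ensemble, contradicting its pseudorandomness. Since \cite{doosti2020client} already bounds the impersonation probability in Hybrid~1, the same bound transfers to Hybrid~0 up to a negligible additive term. Equivalently, I could invoke Theorem~\ref{th:inf-prs-unf} directly to re-derive the soundness bound under PRS challenges, because the theorem guarantees that universal unforgeability of the underlying qPUF continues to hold when challenges are drawn from a PRS family, which is precisely the property on which the soundness of the identification scheme rests.

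Completeness is immediate: swapping the challenge distribution does not alter the honest verifier's or prover's behaviour, and the fidelity of the legitimate response computed by the qPUF on a PRS challenge is the same as that on a Haar challenge. Efficiency strictly improves because PRS are samplable by polynomial-size quantum circuits, whereas exact Haar sampling requires exponential resources, so both the enrolment and verification phases become efficient.

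The main obstacle will be verifying that no auxiliary step in the original security proof relies on Haar-specific statistical properties beyond unforgeability, for instance concentration inequalities on the inner products between distinct challenges, or tail bounds used to estimate the size of the CRP database needed to achieve a given soundness error. If any such step appears, I would either invoke Theorem~\ref{th:inf-prs-unf} at a finer granularity or re-establish the analogous estimate for PRS; whenever the quantity in question is computable by a QPT procedure from polynomially many copies, a violation for PRS would yield a QPT PRS distinguisher, so the bound transfers with only negligible loss.
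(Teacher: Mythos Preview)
Your proposal is correct and matches the paper's approach. The paper's proofs of the formal versions (Theorems~\ref{th:eff-hr-sound} and~\ref{th:eff-lr-sound}) proceed exactly as you outline in your ``equivalently'' clause: invoke Theorem~\ref{th:efficientuu-prs} (the formal Theorem~\ref{th:inf-prs-unf}) to conclude that universal unforgeability persists with PRS challenges, observe that the soundness proofs in~\cite{doosti2020client} depend on the challenge distribution only through this unforgeability property, and then import those bounds verbatim. Your anticipated obstacle---Haar-specific concentration estimates hidden in the original argument---does not in fact materialise, so no additional work is needed there. The only cosmetic difference is that the paper also replaces the qPUF itself by a PRU (via Theorem~\ref{th:pru-uu}) in the stated efficient protocols, but this is orthogonal to the PRS substitution that the informal theorem addresses.
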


\begin{figure}
\includegraphics[scale=0.35]{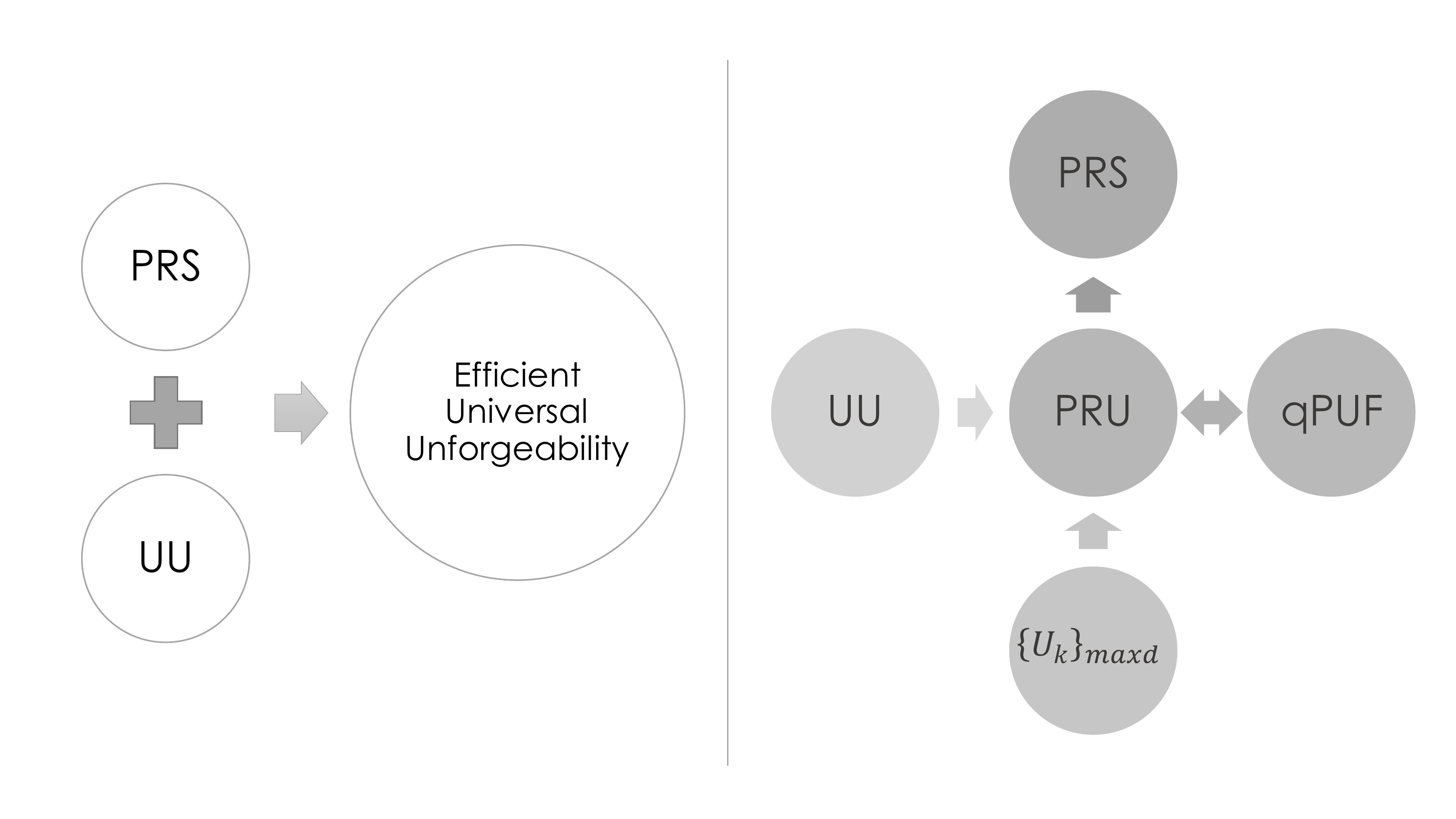}
    \centering
    \caption{Demographic Summary of the results. The left hand figure demonstrates Theorem~\ref{th:efficientuu-prs} stating that the universal unforgeability of unknown unitaries can be achieved efficiently using PRS. The right figure depicts the relationship between unknwon unitaries (UU), quantum physical unclonable functions (qPUF), pseudorandom unitaries (PRU) and families of almost maximally-distanced unitaries ($\{U_k\}_{maxd}$) proved in the Theorems~\ref{th:pru-uu}, \ref{th:pru-unique}, \ref{th:practicaluu-pru}, and \ref{th:max-unique-pru}. It also shows that they can be used as generators for pseudorandom quantum states (PRS).}
    \label{fig:results}
\end{figure}

\subsection{Notations}
Here we provide some of the widely-used notations in this paper. 
\begin{itemize}
\small
    \item PRG : Pseudorandom generator.
    \item PRF : Pseudorandom function.
    \item qPRF : Quantum-Secure Pseudorandom function.
    \item PRS : Pseudorandom state.
    \item PRU : Pseudorandom unitary.
    \item qPUF : Quantum physical unclonable function.
    \item QPT : Quantum polynomial-time.
    \item UU : Unknown Unitary transformation.
    \item CRP : Challenge response pair.
    \item BQP : Bounded quantum polynomial.
    \item CPTP : Completely positive trace preserving.
    \item $F(.,.)$ : Uhlmann's fidelity.
    \item $\mu$ : Haar measure.
    \item $\lambda$ : Security parameter.
\end{itemize}
%-------------------Section:PRELIMS------------------------------
\section{Preliminaries} \label{sec:prelims}

This section presents the various ingredients required for our results and proofs.

%%%%%%%%%% Quantum Pseudorandomness %%%%%%%%%%%%%%%
\subsection{Quantum Pseudorandomness}\label{sec:prelim-qpseudorand}
Pseudorandomness is a central concept in modern cryptography which has also been extended to the quantum regime. Here we mention different notions that have been defined or extended into the quantum world namely, Pseudorandom Quantum States (PRS), quantum-secure Pseudorandom Functions (qPRF) and their quantum analogue, namely quantum Pseudorandom Unitaries (PRU). 

\subsubsection{Pseudorandom Quantum States (PRS)}
\begin{definition}\label{def:prs}[Pseudorandom Quantum States(PRS): \cite{ji2018pseudorandom}]
Let $\Hil$ be a Hilbert space and $\K$ the key space. $\Hil$ and $\K$ depend on the security parameter $\lambda$. A keyed family of quantum states $\{\ket{\phi_k}\in S(\Hil)\}_{k\in\K}$ is \textit{pseudorandom}, if the following two conditions hold:
\begin{itemize}
    \item \textbf{Efficient generation}. There is an efficient quantum algorithm $G$ which generates the state $\ket{\phi_k}$ on input $k$. That is, for all $k\in\K, G(k) = \ket{\phi_k}$.
    \item \textbf{Pseudorandomness}. Any polynomially many copies of $\ket{\phi_k}$ with the same random $k\in\K$ is computationally indistinguishable from the same number of copies of a Haar random state. More precisely, for any efficient quantum algorithm $\A$ and any $m\in poly(\lambda)$, 
\end{itemize}
\begin{equation}
    |\underset{k \leftarrow \K}{Pr}[\A(\ket{\phi_k}^{\otimes m})=1] - \underset{\ket{\psi} \leftarrow \mu}{Pr}[\A(\ket{\psi}^{\otimes m})=1]| = \negl(\lambda).
\end{equation}
where $\mu$ is the Haar measure on $S(\Hil)$.
\end{definition}

\subsubsection{Quantum-secure Pseudorandom Function (qPRF)}
Quantum-secure pseudorandom functions are families of functions that look like truly random functions to QPT adversaries. Formally, qPRF are defined as follows:

\begin{definition}\label{def:qprf}[Quantum-Secure Pseudorandom Functions(qPRF): \cite{ji2018pseudorandom}]
Let $\K,\X,\Y$ be the keyspace, the domain and range, all implicitly depending on the security parameter $\lambda$. A keyed family of functions $\{PRF_k: \X \rightarrow \Y\}_{k\in \K}$ is a quantum-secure pseudorandom function (qPRF) if for any polynomial-time quantum oracle algorithm $\A$, $PRF_k$ with a  random $k \leftarrow \K$ is indistinguishable from a truly random function $f \leftarrow \Y^{\X}$ in the sense that:
\begin{equation}
|\underset{k \leftarrow \K}{Pr}[\A^{PRF_k}(1^{\lambda})=1] -\underset{f \leftarrow \Y^{\X}}{Pr}[\A^{f}(1^{\lambda})=1]| = \negl(\lambda).
\end{equation}
\end{definition}

\subsubsection{Pseudorandom Unitary Operators (PRUs)}
These are unitary equivalent of PRFs defined as follows.
\begin{definition}\label{def:pru}[Pseudorandom Unitary Operators(PRU): \cite{ji2018pseudorandom}]
A family of unitary operators $\{U_k \in \mathcal{U}(\Hil)\}_{k \in \mathcal{K}}$ is a pseudorandom unitary if two conditions hold:
\begin{itemize}
    \item \textbf{Efficient computation}. There is an efficient quantum algorithm $Q$ such that for all $k$ and any state $\ket{\psi} \in S(\Hil), Q(k,\ket{\psi}) = U_k\ket{\psi}$.
    \item \textbf{Pseudorandomness}. $U_k$ with a random key $k$ is computationally indistinguishable from a Haar random unitary operator. More precisely, for any efficient quantum algorithm $\A$ that makes at most polynomially many queries to the oracle:  
\end{itemize}
\begin{equation}
    |\underset{k \leftarrow \K}{Pr}[\A^{U_k}(1^{\lambda})=1] - \underset{U \leftarrow \mu}{Pr}[\A^U(1^{\lambda})=1]| = \negl(\lambda).
\end{equation}
where $\mu$ is the Haar measure on $S(\Hil)$. Note that here we focus on the Pseudorandomness condition of the PRU definition.
\end{definition}

\subsubsection{Unknown Unitary Transformations (UUs)}
We also mention a relevant notion to PRU, called a family of Unknown Unitaries (UU) defined in~\cite{arapinis2021quantum}, that can also be interpreted as single-shot pseudorandomness.

\begin{definition}[Unknown Unitary Transformation~\cite{arapinis2021quantum}]\label{def:uu} We say a family of unitary transformations $U^u$, over a $d$-dimensional Hilbert space $\Hild$ is called Unknown Unitaries if for all QPT adversaries $\A$ the probability of estimating the output of $U^u$ on any state $\ket{\psi}\in\Hild$ selected uniformly from Haar measure, is at most negligibly higher than the probability of estimating the output of a Haar random unitary operator on that state:
\begin{equation}\small
        |\underset{U \leftarrow U^u}{Pr}[F(\A(\ket{\psi}),U\ket{\psi}) \geq \delta(\lambda)] - \underset{U_{\mu} \leftarrow \mu}{Pr}[F(\A(\ket{\psi}),U_{\mu}\ket{\psi}) \geq \delta(\lambda)]| = \negl(\lambda).
\end{equation}
where $\delta(\lambda)$ is any non-negligible function in the security parameter.
\end{definition}

We note that this definition characterises a notion of \emph{single-shot} indistinguishability from the family of Haar-random unitaries. Thus the adversary has only a single black-box query access to the unitary, but can have some existing prior information from the family that can be used for estimating the output. The definition intuitively states that a family of unitary is unknown when no such useful information exist about the family prior to the query access. 

%
%---------------------------Subsection----------------------------
\subsection{Quantum Adversarial Model and Security Definitions}
Strong notions of the security for quantum cryptographic proposals require cryptanalysis against adversaries which also possess quantum capabilities of varying degrees \cite{boneh2011random, mosca2018cybersecurity, song2014note}.
The strongest of such notions is achieved by assuming no restrictions on the adversary's computational power and resources. This security model, also known as security against \emph{unbounded adversary}, is usually too strong to be achieved by most cryptographic primitives such as qPUFs.
It has been shown in~\cite{arapinis2021quantum}, that unitary qPUFs cannot remain secure against an unbounded adversary. Thus the standard security model that we also use in this paper is the notion of security against efficient quantum adversaries, or in other words, QPT adversaries. We define such an adversary in the context of qPUFs. A QPT adversary with query access to qPUF is defined as an adversary that can query the qPUF oracle with polynomially many (in the security parameter) arbitrary challenges and has a polynomial sized quantum register to store the quantum CRPs. The QPT adversary is also allowed to run any efficient quantum algorithm. The security of most qPUF-based cryptographic protocols relies on the unforgeability property of qPUF.

Here we follow the same definitions of \emph{universal} unforgeability  (Also called \emph{selective unforgeability} in the context of quantum PUFs) defined in~\cite{arapinis2021quantum,doosti2021unified} and restate them as follows:

\begin{game}\label{game:uni-unf}[Universal Unforgeability] Let $\Gen$, $\Ue$ and $\T$ be the generation, evaluation and test algorithms of the quantum primitive $\E$ respectively. We define the following game $G$ running between an adversary $\A$ and a challenger $\C$:
\begin{itemize}
    \item [] \textbf{Setup phase.} The challenger $\C$ runs $\Gen(\lambda)$ and reveals to the adversary $\A$, the domain and range Hilbert space of $\Ue$ respectively denoted by $\Hilin$ and $\Hilout$

    \item [] \textbf{Learning phase.} For $i=1:k$
        \begin{itemize}
            \item $\A$ issues arbitrary query $\rho_i \in \Sc(\Hilin)$ to $\C$;
            \item $\C$ generates $\rho_i^{out} = \Ue\rho_i\Ue^{\dagger}$ and sends $\rho_i^{out}$, to $\A$;
        \end{itemize} 
    \item [] \textbf{Challenge phase.}
        $\C$ chooses a quantum state $\rho^*$ at random from the uniform distribution (Haar) over the Hilbert space $\Hilin$ and sends $\rho^*$ to $\A$. The challenger can generate arbitrary copies of $\rho^*$.
    \item [] \textbf{Guess phase.} 
    \begin{itemize}
        \item $\A$ generates the forgery $\omega$ and sends it to~ $\C$;

        \item $\C$ runs the test algorithm $b\leftarrow \mathcal{T}((\rho*^{out})^{\otimes \kappa_1},\omega)$
        where $b\in\{0,1\}$ and outputs $b$. The adversary wins the game if $b=1$.
    \end{itemize}
\end{itemize}
\end{game}

\begin{definition}[Quantum Universal Unforgeability]\label{def:qunf}
A primitive provides quantum universal unforgeability if the success probability of any QPT adversary $\A$ in winning the Game~\ref{game:uni-unf} is negligible in the security parameter $\lambda$
\begin{equation}
Pr[1\leftarrow G(\lambda, \A)] = \negl(\lambda)
\end{equation}
\end{definition}

Throughout the paper we widely use the result from~\cite{doosti2021unified,arapinis2021quantum} implying that unknown unitary transformations as formalized by Definition~\ref{def:uu}, can satisfy the notion of universal Unforgeability:

\begin{theorem}[\cite{doosti2021unified}]\label{th:uu-unforge}
Primitives with their evaluation algorithm being an unknown unitary transformation are universally unforgeable.
\end{theorem}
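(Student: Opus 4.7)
The plan is to prove the contrapositive: any QPT adversary $\A$ that wins Game~\ref{game:uni-unf} with non-negligible probability against a primitive whose evaluation unitary is drawn from the UU family can be converted into an estimator that violates Definition~\ref{def:uu}. The key observation that makes this reduction possible is that the challenge state $\rho^*$ is sampled independently from the Haar measure, so the learning-phase transcript can be absorbed into the ``prior information about the family'' that Definition~\ref{def:uu} already allows.

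First I would fix a QPT adversary $\A$ for Game~\ref{game:uni-unf} and extract from the test algorithm $\T$ the implicit non-negligible fidelity threshold $\delta(\lambda)$ at which $\T$ accepts: winning corresponds to $F(\omega, \Ue \rho^* \Ue^{\dagger}) \geq \delta(\lambda)$. Next I would build a UU estimator $\A'$ that, on input a Haar-random state $\ket{\psi^*}$, internally simulates the learning phase of $\A$ (treating the polynomially many CRPs as auxiliary knowledge derived from the family description) and then runs $\A$'s challenge-phase channel on $\ket{\psi^*}$ to produce $\omega$. By construction, the success probability of $\A'$ as a single-shot predictor of $\Ue \ket{\psi^*}$ is at least the winning probability of $\A$ in Game~\ref{game:uni-unf}. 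Applying Definition~\ref{def:uu} to $\A'$ then bounds this quantity by
\begin{equation}
\underset{U_{\mu} \leftarrow \mu}{Pr}[F(\A'(\ket{\psi^*}), U_{\mu} \ket{\psi^*}) \geq \delta(\lambda)] + \negl(\lambda).
\end{equation}

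The final step is to argue that the right-hand probability is itself negligible when $U_{\mu}$ is Haar-random. Since $\ket{\psi^*}$ is drawn from the Haar measure and is independent of whatever learning-phase transcript $\A'$ has collected, the image $U_{\mu}\ket{\psi^*}$ is again uniformly distributed on the unit sphere of $\Hil$ and essentially independent of any polynomially-bounded side information about $U_{\mu}$. A standard measure-concentration argument (Levy's lemma on the sphere) then yields that no QPT strategy can output a state whose fidelity with $U_{\mu}\ket{\psi^*}$ exceeds a non-negligible $\delta(\lambda)$ with more than negligible probability. Chaining this with the UU bound gives $Pr[1 \leftarrow G(\lambda,\A)] = \negl(\lambda)$, as required.

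The main obstacle I expect is the careful justification of the reduction in the presence of the learning phase. Definition~\ref{def:uu} is phrased in a single-shot setting without explicit oracle access, while Game~\ref{game:uni-unf} grants $\A$ adaptive oracle access to the specific instance $\Ue$ during the learning phase. The delicate point is showing that this adaptive access on polynomially many states can be folded into the ``prior information'' clause of Definition~\ref{def:uu} without changing the estimator's advantage except negligibly; intuitively, the learning transcript constrains $\Ue$ only on a polynomial-dimensional subspace, and the Haar-random challenge is, with overwhelming probability, almost orthogonal to this subspace, so the queries supply no useful predictive power for the forgery. Making this precise is exactly the technical content formalised in~\cite{arapinis2021quantum,doosti2021unified}, which the theorem explicitly invokes.
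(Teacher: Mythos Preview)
The paper does not supply its own proof of this theorem; it is quoted as a preliminary from \cite{doosti2021unified}. The argument the paper points to is Theorem~\ref{th:sel-qCM-fid} (restated from \cite{arapinis2021quantum}): the success probability in Game~\ref{game:uni-unf} is bounded by $(\tilde{d}+1)/d$, where $\tilde{d}$ is the dimension of the subspace the adversary spans during the learning phase. For a QPT adversary $\tilde{d}=poly(\lambda)$ while $d$ is exponential, so the bound is negligible. That direct subspace-counting bound is the actual content of the result.

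Your route is different and, as you yourself flag at the end, does not close. You attempt to reduce a winning forger $\A$ to an estimator $\A'$ violating Definition~\ref{def:uu}, but $\A'$ must reproduce $\A$'s learning-phase transcript $\{\rho_i,\Ue\rho_i\Ue^{\dagger}\}$ in order to run $\A$, and Definition~\ref{def:uu} grants only single-shot access together with instance-\emph{independent} prior information about the family. Instance-specific CRPs cannot be ``absorbed'' into that clause: two draws from the UU family produce different transcripts, so the transcript is not a function of the family description alone. The patch you sketch---that the polynomially many queries span a subspace almost orthogonal to the Haar-random challenge---is precisely the subspace argument behind Theorem~\ref{th:sel-qCM-fid}, and once you invoke it you have already bounded the forgery probability directly against $\Ue$; the detour through Definition~\ref{def:uu} contributes nothing and is not justified on its own. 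In short, the ingredient you identify as the ``main obstacle'' is the entire proof, and the reduction you wrap around it is both superfluous and formally gapped.
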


%---------------------------Subsection----------------------------
\subsection{Quantum Equality Tests}\label{sec:test}
Distinguishing two unknown quantum states is a central ingredient in quantum information processing. This task is often referred to as the ``state discrimination task''. The celebrated Holevo-Helstrom bound \cite{holevo1973bounds} relates the optimal distinguishability of two unknown states with the trace distance between the density matrices. This implies that unless the states are the same (up to a global factor), it is impossible to deterministically distinguish the two states. An important application of state discrimination is the task of Equality testing \cite{buhrman2001quantum, barenco1997stabilization, xu2015experimental}. This is an extremely simple task but a building block for lots of complicated quantum protocols. The objective of Equality testing, one that we consider in our work, is to test whether two \emph{unknown} quantum states are the same. This is a well-studied topic and we describe the optimal quantum protocols for Equality testing.

%---------------------------Sub-subsection----------------------------
\subsubsection{SWAP test}\label{sec:swap}

Given a single copy of two unknown quantum states $\rho$ and $\sigma$, is there a simple test to optimally determine whether the two states are equal or not? This question was answered in affirmative by Buhrman et al \cite{buhrman2001quantum} when they provided a test called the SWAP test. This test was initially used by the authors to prove an exponential separation between classical and quantum resources in the simultaneous message passing model. Since then it has been used as a standard tool in the design of various quantum algorithms \cite{buhrman2010nonlocality,kumar2017efficient}. A SWAP test circuit takes as an input the two unknown quantum states $\rho$ and $\sigma$ and attaches an ancilla $\ket{0}$. A Hadamard gate is applied to the ancilla followed by the control-SWAP gate and again a Hadamard on the ancilla qubit. Finally, the ancilla is measured in the computational basis and we conclude that the two states are equal if the measurement outcome is `0' (labelled accept). Figure~\ref{fig:swap} illustrates this test in the special case when the state $\sigma$ is a pure state and shown by $\ket\psi$.
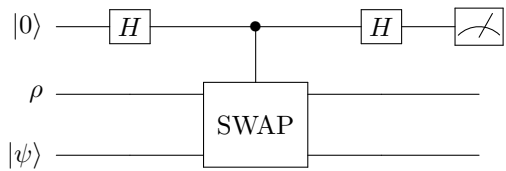
\begin{figure}[h!]
    \centering
    \[    \Qcircuit @C=2em @R=1.4em {
       & \lstick{\ket{0}} & \gate{H} & \ctrl{1} & \gate{H} & \meter \\
       & \lstick{\rho} & \qw & \multigate{1}{\text{SWAP}} & \qw & \qw \\
       & \lstick{\ket{\psi}} & \qw & \ghost{\text{SWAP}} & \qw & \qw
    }\]
    \caption{The SWAP test circuit}
    \label{fig:swap}
\end{figure}

It can be shown that the probability the SWAP test accepts the states $\rho$ and $\sigma$ is \cite{kobayashi2003quantum},
\begin{equation}
    \text{Pr}[\text{SWAP accept}] = \frac{1}{2} + \frac{1}{2}\text{Tr}(\rho\sigma)
\end{equation}

In the special case of when at least one of the states (let's say $\sigma$) is a pure state $\sigma = \ket{\psi}\bra{\psi}$, the probability of acceptance is,
\begin{equation}
     \text{Pr}[\text{SWAP accept}] = \frac{1}{2} + \frac{1}{2} |\bra{\psi}\rho\ket{\psi}| = \frac{1}{2} + \frac{1}{2}F^2(\rho, \ket{\psi}\bra{\psi})
     \label{eq:swapaccept}
\end{equation}

Thus when at-least one of the two states is a pure state, the acceptance probability is related to the fidelity between the states. This implies when the states are the same, the probability of acceptance is 1. However, when the states are different then if the SWAP test accepts the states, this implies an error. Thus the error in the SWAP test when the states are different (also called the one-sided error) is the accept probability of the SWAP test while the states are not equal. This error can, however, be brought down to any desired error $\epsilon > 0$ by running multiple instances of the SWAP test circuit. The number of instances required to bring down the error probability to a desired $\epsilon$ is,

\begin{equation*}
\begin{split}
    \text{Pr}[\text{SWAP error}] & = \prod^{M}_{j=1}\text{Pr}[\text{SWAP accept}]_j = (\frac{1}{2} + \frac{1}{2}F^2)^M = \epsilon \\
    & \Rightarrow M(\log(1+F^2)-1) = \log(\epsilon) \Rightarrow M\approx \mathcal{O}(\log(1/\epsilon))
\end{split}
\end{equation*}
where $F = F(\rho, \ket{\psi}\bra{\psi}) = \sqrt{\bra{\psi}\rho\ket{\psi}}$ and we use the fact that fidelity is independent of $\epsilon$. 

%---------------------------Sub-subsection----------------------------
\subsubsection{Generalised SWAP test}\label{sec:gswap}
The above SWAP test is optimal in Equality testing (in a single instance) of two unknown quantum states when one has a single copy of the two states. However, there are certain quantum protocols where one has access to multiple copies of one unknown state $\ket{\psi}$ and only a single copy of the other unknown state $\rho$ and the objective is to provide an optimal Equality testing circuit. Considering this scenario, Chabaud et al.~\cite{chabaud2018optimal} provided an efficient construction of such a circuit, generalised SWAP (GSWAP) test circuit. A GSWAP circuit takes as an input a single copy of $\rho$, M copies of $\ket{\psi}$ and $\ceil[\big]{\log M+1}$ copies of the ancilla qubit $\ket{0}$. The generalised circuit is then run on the inputs, and the ancilla qubits are measured in the computational bases. Figure~\ref{fig:gswap} is a generic illustration of such a circuit. For more details on the circuit refer to the original work \cite{chabaud2018optimal}.
\begin{figure}[h!]
\includegraphics[scale=0.30]{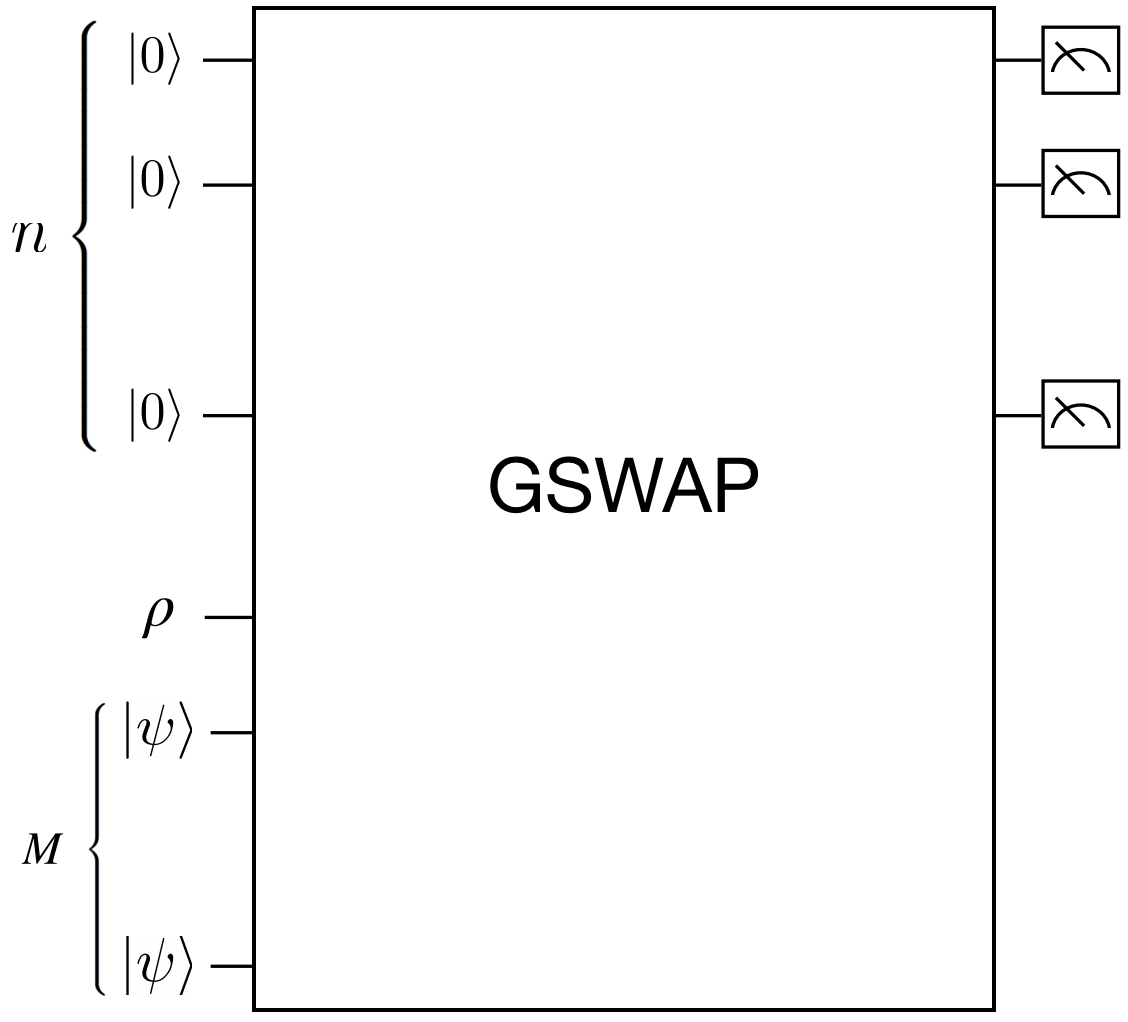}
    \centering
    \caption{GSWAP: A generalisation of the SWAP test with a single copy of $\rho$ and $M$ copies of $\ket{\psi}$. The circuit also inputs $n = \ceil[\big]{\log M+1}$ ancilla qubits in the state $\ket{0}$. At the end of the circuit, the ancilla states are measured in the computational basis.}
    \label{fig:gswap}
\end{figure}
It can be shown that the probability the GWAP circuit accepts two quantum states $\rho$ and $\ket{\psi}$ is,
\begin{equation}
     \text{Pr}[\text{GSWAP accept}] = \frac{1}{M+1} + \frac{M}{M+1} \bra{\psi}\rho\ket{\psi} = \frac{1}{M+1} + \frac{M}{M+1}F^2
     \label{eq:gswap}
\end{equation}
where $F = F(\rho, \ket{\psi}\bra{\psi})$. We note that in the special case of $M=1$, the GSWAP test reduces to the SWAP test. Also in a single instance, GSWAP provides a better Equality test compared to the SWAP test since it reduces the one-sided error probability. In the limit $M \rightarrow \infty$, we obtain the optimal acceptance probability of $\text{Pr}[\text{accept}] = F^2 = |\bra{\psi}\rho\ket{\psi}|$. Another important feature of GSWAP is that it can achieve any desired success probability $\epsilon (\geqslant F^2)$ in just a single instance which is impossible to achieve using SWAP circuit. However, the number of copies required is exponentially more than the number of instances that the SWAP circuit has to run to achieve the same error probability,
\begin{equation}
\begin{split}
    \text{Pr}[\text{GSWAP error}] & = \text{Pr}[\text{GSWAP accept}] = \frac{1}{M+1} + \frac{M}{M+1}F^2 = \epsilon \\
    & \Rightarrow M\approx \mathcal{O}(1/\epsilon)
\end{split}
\label{eq:gswaperror}
\end{equation}
Hence one decides the use of either SWAP test or GSWAP test depending on the specific application.

%%%%%%%%%% QPUF %%%%%%%%%%%%%%%
\subsection{Quantum Physical Unclonable Functions}\label{sec:prelim-qpuf}
A Quantum Physical Unclonable Function, or qPUF, is a secure hardware cryptographic device that is, by assumption, hard to clone or reproduce and utilises properties of quantum mechanics \cite{arapinis2021quantum}. Similar to a classical PUF \cite{armknecht2016towards}, a qPUF is assessed via CRPs. However, in contrast to a classical PUF where the CRPs are classical states, the qPUF CRPs are quantum states. We use the definition of qPUF introduced in~\cite{arapinis2021quantum} for the purpose of this paper. 

A qPUF manufacturing process involves a quantum generation algorithm, `qGen', which takes as an input a security parameter $\lambda$ and generates a PUF with a unique identifier \textbf{id},

\begin{equation}
    \text{qPUF}_{\textbf{id}} \leftarrow \text{qGen}(\lambda)
\end{equation}

Next we define the mapping provided by $\text{qPUF}_{\textbf{id}}$ which takes any input quantum state $\rho_{in} \in \Sc(\mathcal{H}^{d_{in}})$ to the output state $\rho_{out} \in \Sc(\mathcal{H}^{d_{out}})$. Here $\mathcal{H}^{d_{in}}$ and $\mathcal{H}^{d_{out}}$ are the input and output Hilbert spaces respectively corresponding to  the mapping that $\text{qPUF}_{\textbf{id}}$ provides. This process is captured by the `qEval' algorithm which takes as an input a unique $\text{qPUF}_{\textbf{id}}$ device and the state $\rho_{in}$ and produces the state $\rho_{out}$,

\begin{equation}
    \rho_{out} \leftarrow \text{qEval}(\text{qPUF}_{\textbf{id}}, \rho_{in})
\end{equation}

A qPUF needs to satisfy a few requirements. The first property, \textbf{$\delta_r$-Robustness}~\cite{arapinis2021quantum}, ensures that if the qPUF is queried separately with two input quantum states $\rho_{in}$ and $\sigma_{in}$ that are $\delta_r$-indistinguishable to each other, then the output quantum states $\rho_{out}$ and $\sigma_{out}$ must also be $\delta_r$-indistinguishable. The second property, \textbf{$\delta_c$-Collision resistance}~\cite{arapinis2021quantum}, ensures that if the same qPUF is queried separately with two input quantum states $\rho_{in}$ and $\sigma_{in}$ that are $\delta_c$-distinguishable, then the output states $\rho_{out}$ and $\sigma_{out}$ must also be $\delta_c$-distinguishable with an overwhelmingly high probability. Here, the distinguishability is defined with respect to fidelity such that two quantum states $\rho$ and $\sigma$ are $\delta$-distinguishable if $0 \leqslant F(\rho, \sigma) \leqslant  1 - \delta$, where $F(\rho, \sigma)$ is the Uhlmann's fidelity. Alternatively, other distance measures such as trace norm, euclidean norm (any Schatten p-norm) can also be used to define security requirements for qPUF.

The last requirement, which we will use in this paper is the \textbf{$\delta_u$-Uniqueness}~\cite{arapinis2021quantum}. This property ensures that the generation process of qPUF can generate sufficiently distinguishable qPUFs. This is captured by modeling each qPUF
as a quantum operation characterised by a CPTP map that takes the input quantum states in $\mathcal{H}^{d_{in}}$ to output states in $\mathcal{H}^{d_{out}}$. We say that two such maps $\Lambda^{qPUF}_{i}$ and $\Lambda^{qPUF}_{j}$ are $\delta_u$ distinguishable if
\begin{equation}\label{eq:unique}
    Pr[\parallel \Lambda^{qPUF}_{i} - \Lambda^{qPUF}_{j}\parallel_{\diamond} \geq \delta_u[i\neq j]] \geq 1 - \epsilon(\lambda)
\end{equation}

Where $\parallel .\parallel_{\diamond}$ is the diamond norm distance measure for the distinguishability of two quantum operations, and $\epsilon(\lambda)$ is a negligible function in the security parameter $\lambda$. 

The diamond norm is a distance metric for any two completely positive
trace preserving quantum operations $\Lambda_1$, $\Lambda_2$. It is defined as,
\begin{equation}
    \parallel \Lambda_1 - \Lambda_2 \parallel_{\diamond} = \underset{\rho}{\max} (\parallel(\Lambda_1 \otimes \mathbb{I})[\rho] - (\Lambda_2 \otimes \mathbb{I})[\rho]\parallel_1) 
\end{equation}
Operationally it quantifies the maximum probability of distinguishing operation $\Lambda_1$ from $\Lambda_2$ in a single-use.

It has been shown in~\cite{arapinis2021quantum} that unitary maps and $\epsilon$-close to unitary channels, under certain additional conditions, can be considered as a qPUF. We restate the following theorem from~\cite{arapinis2021quantum}:
\begin{theorem}[from~\cite{arapinis2021quantum}]\label{theorem:non-unitary}
Let $\E(\rho)$ be a completely positive and trace-preserving (CPT) map described as follows:
\begin{equation}\label{eq:non-unitary-puf}
    \E(\rho) = (1-\epsilon)U \rho U^{\dagger} + \epsilon \Tilde{\E}(\rho) 
\end{equation}
where $U$ is a unitary transformation, $\Tilde{\E}$ is an arbitrary (non-negligibly) contractive channel and $0 \leq \epsilon \leq 1$. Then $\E(\rho)$ is a ($\lambda,\delta_r,\delta_c$)-qPUF for any $\lambda$, $\delta_r$, and $\delta_c$ and with the same dimension of domain and range Hilbert space, if and only if $\epsilon = \negl(\lambda)$.
\end{theorem}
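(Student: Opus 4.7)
The plan is to prove the two directions of the biconditional separately, using the fact that the unitary channel $U(\cdot)U^{\dagger}$ trivially satisfies both the $\delta_r$-robustness and $\delta_c$-collision resistance properties (because unitaries preserve the Uhlmann fidelity exactly) and then controlling how far $\E$ deviates from this ideal behaviour via the weight $\epsilon$ on the contractive part $\tilde{\E}$.

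For the \textbf{if} direction, assume $\epsilon = \negl(\lambda)$. First, I would rewrite $\E(\rho) - U\rho U^{\dagger} = \epsilon\bigl(\tilde{\E}(\rho) - U\rho U^{\dagger}\bigr)$, so that $\lVert \E - U(\cdot)U^{\dagger}\rVert_{\diamond} \leq 2\epsilon$. Using the well-known inequality between fidelity and trace distance ($1 - F(\rho,\sigma) \leq \tfrac{1}{2}\lVert\rho-\sigma\rVert_1 \leq \sqrt{1-F^2(\rho,\sigma)}$), I would propagate this $\negl(\lambda)$ deviation to the fidelity of outputs. Concretely, for robustness, given two inputs with $F(\rho_{in},\sigma_{in}) \geq 1-\delta_r$, the triangle inequality for trace distance together with the fact that $U\rho_{in}U^{\dagger}$ and $U\sigma_{in}U^{\dagger}$ have exactly the same fidelity as $\rho_{in}$ and $\sigma_{in}$ gives $F(\E(\rho_{in}),\E(\sigma_{in})) \geq 1 - \delta_r - \negl(\lambda)$, which is within the allowed tolerance. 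For collision resistance, the same triangle inequality on the fidelity (combined with the contractivity of the map, ensuring outputs stay in $[0,1]$) shows that two inputs that are $\delta_c$-distinguishable remain $\delta_c$-distinguishable after $\E$ up to a negligible correction with overwhelming probability.

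For the \textbf{only if} direction, I would argue by contrapositive: assume $\epsilon$ is non-negligible and exhibit a violation of $\delta_c$-collision resistance. By hypothesis $\tilde{\E}$ is \emph{non-negligibly contractive}, meaning there exist input states $\rho_{in},\sigma_{in}$ with $F(\rho_{in},\sigma_{in}) \leq 1-\delta_c$ such that $F(\tilde{\E}(\rho_{in}),\tilde{\E}(\sigma_{in}))$ is non-negligibly larger than $F(\rho_{in},\sigma_{in})$. Plugging into the mixture $\E(\rho) = (1-\epsilon)U\rho U^{\dagger} + \epsilon\tilde{\E}(\rho)$ and using joint concavity of fidelity in its arguments (or more directly, convexity of the trace distance), a non-negligible weight $\epsilon$ on the contractive branch pulls the output fidelity above $1-\delta_c$ with non-negligible probability, contradicting the collision-resistance requirement. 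Hence $\epsilon$ must be negligible.

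The main obstacle I anticipate is the collision-resistance direction: converting a diamond-norm/contractivity statement on $\tilde{\E}$ into a concrete statement about fidelities of specific input pairs, and doing so \emph{uniformly} over the Haar-randomly chosen challenges that the qPUF definition quantifies over. One also has to be careful that the ``if'' direction holds for \emph{any} choice of $(\delta_r,\delta_c)$ parameters (as the theorem claims), which requires the corrections to truly be $\negl(\lambda)$ independent of $\delta_r,\delta_c$; the argument above does achieve this because the bounds are additive in $\epsilon$ and do not multiply the $\delta$ parameters. Joint concavity of fidelity and the operational interpretation of the diamond norm as the maximum single-shot distinguishing probability will be the main tools throughout.
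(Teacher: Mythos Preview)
The paper does not actually prove this theorem: it is restated verbatim from~\cite{arapinis2021quantum} as background in Section~\ref{sec:prelim-qpuf}, and the only justification offered is the single sentence immediately following the statement (``This is because the properties of robustness and collision resistance can be satisfied by an almost unitary map as a subclass of all CPTP qPUFs''). There is therefore no proof in this paper to compare your proposal against.

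That said, your outline is consistent with that one-line intuition and is essentially the natural argument. A couple of technical points to tighten: (i) fidelity is not jointly concave, but $\sqrt{F}$ satisfies the strong concavity inequality $\sqrt{F}\bigl(\sum_i p_i\rho_i,\sum_i p_i\sigma_i\bigr)\geq\sum_i p_i\sqrt{F}(\rho_i,\sigma_i)$, which is what you actually need in the only-if direction; (ii) in that direction, the violation you produce is for one specific value of $\delta_c$ (namely $\delta_c = 1 - F(\rho_{in},\sigma_{in})$ for the contractive pair), and that suffices because the theorem claims the qPUF property for \emph{all} $\delta_c$. Your anticipated obstacle about translating ``non-negligibly contractive'' into a concrete fidelity gap for a specific input pair is exactly the place where one must appeal to whatever precise definition of contractive channel is used in~\cite{arapinis2021quantum}, which this paper does not reproduce.
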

This is because the properties of robustness and collision resistance can be satisfied by an almost unitary map as a subclass of all CPTP qPUFs.
The uniqueness property on the other hand, is quite challenging and \cite{arapinis2021quantum,kumar2021efficient} showed that one can achieve uniqueness if one samples a unitary from a Haar random set of unitaries. Further, \cite{kumar2021efficient} numerically showed that one can achieve uniqueness if one sample the unitary from a unitary $t$-design set. In this work, we show that one can achieve this property by sampling from a PRU set. Here we consider the qPUF construction to be a unitary matrix $U \in \mathbb{C}^{d \times d}$, where $d = d_{in} = d_{out}$.

A crucial security feature of the qPUF device is its unforgeability property. The unforgeability for qPUFs as a quantum primitive is captured by Definition~\ref{def:qunf}.

It has also been shown in~\cite{arapinis2021quantum} that even though qPUFs cannot satisfy a general existential unforgeability, which is a strong notion for capturing the unpredictability of such hardware, all unitary qPUFs that satisfy the notion of unknownness can satisfy the notion of universal unforgeability. This general possibility result is the consequence of the following theorem proved against any QPT adversary:
\begin{theorem}[restated from~\cite{arapinis2021quantum}]\label{th:sel-qCM-fid}
For any unitary qPUF characterised, and any non-zero acceptance threshold $\delta$ in the fidelity, the success probability of any QPT adversary $\A$ in the universal unforgeability game is bounded as follows:
\begin{equation}
    Pr[1\leftarrow G(\lambda, \A)] \leq \frac{\tilde{d}+1}{d}
\end{equation}
where $d$ is the dimension of the domain Hilbert space, and $0\leq \tilde{d} \leq d-1$ is the dimension of the largest subspace of $\Hild$ that the adversary can span in the learning phase of the Game~\ref{game:uni-unf}.
\end{theorem}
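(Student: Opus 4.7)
The plan is to isolate the part of the challenge space on which the adversary has no information, and then use Haar averaging. First I would consider the subspace $\Hil_\ell \subseteq \Hild$ spanned by the supports of the (at most $\mathrm{poly}(\lambda)$) density operators queried during the learning phase, whose dimension is exactly the $\tilde{d}$ appearing in the statement. Let $P$ and $Q=\mathbb{I}-P$ be the projectors onto $\Hil_\ell$ and its orthogonal complement. By linearity of $U$, the transcript of the learning phase fully determines $U$'s restriction to $\Hil_\ell$, while unitarity only guarantees that $U\Hil_\ell^\perp$ is the $(d-\tilde{d})$-dimensional orthogonal complement of $U\Hil_\ell$; on that block $U$ is, from the adversary's perspective, a uniformly random isometry.

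Next I would decompose the Haar-random challenge as $\ket{\psi^*}=\sqrt{p}\,\ket{\phi_\ell}+\sqrt{1-p}\,\ket{\phi_\perp}$ with $\ket{\phi_\ell}\in \Hil_\ell$, $\ket{\phi_\perp}\in \Hil_\ell^\perp$, and $p=\bra{\psi^*}P\ket{\psi^*}$, so that $U\ket{\psi^*}=\sqrt{p}\,U\ket{\phi_\ell}+\sqrt{1-p}\,U\ket{\phi_\perp}$. Because the test algorithm (SWAP or GSWAP, as recalled in Section~\ref{sec:test}) accepts with probability monotone in $F^2(\omega,U\rho^* U^\dagger)$, it suffices to upper-bound this squared fidelity. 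The adversary can in principle match the $\ket{\phi_\ell}$-component exactly, but $U\ket{\phi_\perp}$ is a Haar-random pure state in $U\Hil_\ell^\perp$, whose best expected squared fidelity with any fixed guess is $1/(d-\tilde{d})$; cross terms vanish on average via the first-order Haar identity $\mathbb{E}[U\ket{\phi_\perp}\bra{\phi_\perp}U^\dagger]=\frac{1}{d-\tilde{d}}\,Q_U$ on $U\Hil_\ell^\perp$. Averaging over the Haar-random $\ket{\psi^*}$, for which $\mathbb{E}[p]=\tilde{d}/d$, yields
\[
\mathbb{E}\bigl[F^2(\omega,U\rho^* U^\dagger)\bigr]\;\leq\; \tfrac{\tilde{d}}{d}\cdot 1 \;+\; \tfrac{d-\tilde{d}}{d}\cdot \tfrac{1}{d-\tilde{d}} \;=\; \tfrac{\tilde{d}+1}{d},
\]
and transferring this through the acceptance rule of the test gives $\Pr[1\leftarrow G(\lambda,\A)]\leq (\tilde{d}+1)/d$.

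The main obstacle lies precisely in this fidelity argument: rigorously ruling out that a QPT adversary, using entangled queries, coherent post-processing of the received response registers, and a forgery $\omega$ that depends on $\rho^*$, could beat the naive ``match-the-known, guess-the-unknown'' bound. This step invokes the unknown-unitary property (Definition~\ref{def:uu}) in an essential way, so that $UQ$ is genuinely Haar-distributed on $\Hil_\ell^\perp$ from $\A$'s perspective, together with a Haar-averaging argument that washes out any correlations the adversary might engineer between its forgery register and the unseen block of $U$. The remaining ingredients --- the subspace decomposition, the Haar expectation $\mathbb{E}[p]=\tilde{d}/d$, and the monotonicity of the test acceptance in $F^2$ --- are routine bookkeeping.
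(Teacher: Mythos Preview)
The paper does not contain a proof of this statement: Theorem~\ref{th:sel-qCM-fid} is explicitly labelled ``restated from~\cite{arapinis2021quantum}'' and appears in the preliminaries section without any accompanying argument. It is quoted as an input to later results (notably Theorem~\ref{th:uu-unforge} and the protocol analyses), not derived here. So there is no in-paper proof to compare your proposal against.

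That said, a few remarks on the proposal itself. Your subspace decomposition and the Haar average $\mathbb{E}[p]=\tilde d/d$ are the natural ingredients, and the inequality $\mathbb{E}[F^2]\le(\tilde d+1)/d$ you derive is correct as an upper bound once one grants that $U$ restricted to $\Hil_\ell^\perp$ is Haar from $\A$'s viewpoint. Two points need tightening. First, the step ``transferring this through the acceptance rule of the test'' is not automatic: for the SWAP or GSWAP tests of Section~\ref{sec:test} the acceptance probability is $\tfrac{1}{2}+\tfrac{1}{2}F^2$ (resp.\ $\tfrac{1}{M+1}+\tfrac{M}{M+1}F^2$), not $\le F^2$, so a bound on $\mathbb{E}[F^2]$ does not immediately yield the same bound on $\Pr[1\leftarrow G]$; you need to specify that the test is the projective one (accept with probability exactly $F^2$), or else explain how the threshold-$\delta$ phrasing in the statement is handled. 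Second, invoking Definition~\ref{def:uu} to conclude that ``$UQ$ is genuinely Haar-distributed on $\Hil_\ell^\perp$'' is a category error: Definition~\ref{def:uu} is a single-shot computational indistinguishability condition on the \emph{family}, not a statement that the residual action of a fixed $U$ on the unlearned subspace is Haar. The argument in~\cite{arapinis2021quantum} has to supply this step by a different route, and your sketch does not actually close that gap.
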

This possibility result, has been later used in~\cite{doosti2020client} to prove security of qPUF-based identification protocols.

%---------------------------Section-3----------------------------
\section{Efficient Unforgeability with PRS}\label{sec:unf-prs}
In this section, we investigate the problem of \textit{universal unforgeability} with efficiently producible pseudorandom quantum states. As specified in the Game~\ref{game:uni-unf}, the challenge states need to be picked at random from Haar measure by the challenger. This is an important condition for the unforgeability of unknown unitary transformations. Nevertheless, producing Haar random state is a challenging and resource-intensive task. Hence to take the first step towards the realization of universally unforgeable schemes, we attempt to replace this condition with its computational equivalent, i.e. the notion of PRS, introduced in the preliminary. We first relax this condition by defining a variation of the universal unforgeability game, namely \emph{Efficient Universal Unforgeability} where the challenger picks the challenge states from a pseudorandom family of quantum states. Then we formally prove that unknown unitaries satisfy this notion of unforgeability. Furthermore, we discuss how such pseudorandom quantum states can be efficiently generated using classical pseudorandom functions.

We define the \emph{Efficient Universal Unforgeability} as bellow:

\begin{definition}[Efficent Quantum Universal Unforgeability]\label{def:qunf-efficient}
Let Game $G_{eqUnf}$ be same as Game~\ref{game:uni-unf} except that in the challenge phase, the challenge states are being picked from PRS family of states with a generation algorithm $G(k)$ with a key $k\in\K$, realised in the setup phase. A primitive provides efficient quantum universal unforgeability if the success probability of any QPT adversary $\A$ in winning the $G_{eqUnf}$ is negligible in the security parameter $\lambda$,
\begin{equation}
Pr[1\leftarrow G_{eqUnf}(\lambda, \A)] = \negl(\lambda)
\end{equation}
\end{definition}

Now, for simplicity in the proof, we also define the pseudorandomness property of the PRS with a game as formalized in the following:

\begin{game}\label{game:prs}[PRS distinguishability game] Let $\Hil$ be a Hilbert space and $\K$ the key space. The dimension of $\Hil$ and size of $\K$ depend on the security parameter $\lambda$. Let $\{\ket{\phi_k}\in S(\Hil)\}_{k\in\K}$ be a keyed family of quantum states with efficient generation algorithm $G(k) = \ket{\phi_k}$ on input $k$. We define the following distinguishability game between an adversary $\A$ and a challenger $\C$:
\begin{itemize}
    \item [] \textbf{Setup phase.} The challenger $\C$ selects $k \overset{\$}{\leftarrow} \K$ and $b \overset{\$}{\leftarrow} \{0,1\}$ at random.
    \item [] \textbf{Challenge phase.}
        \begin{itemize}
            \item If $b = 0$ (PRS world): $\C$ prepares $m$ copies of $\ket{\phi^0} = \ket{\phi_k}$ by running $G(k)$. 
            \item If $b = 1$ (Random world): $\C$ prepares $m$ copies of a Haar-random state $\ket{\phi^1} = \ket{\psi}$.
            \item $\C$ sends $\ket{\phi^b}^{\otimes m}$ to $\A$.
        \end{itemize} 
    \item [] \textbf{Guess phase.} $\A$ guesses $b$.
\end{itemize}
\end{game}

Now we establish our main result regarding the efficient unforgeability of unknown unitary primitives. 

\begin{theorem}\label{th:efficientuu-prs}
Any unitary transformation U selected from an unknown unitary family according to Definition~\ref{def:uu}, satisfies efficient universal unforgeability against QPT adversaries. 
\end{theorem}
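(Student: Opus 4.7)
The plan is to argue by a reduction: if some QPT adversary $\mathcal{A}$ breaks efficient universal unforgeability, then one can build a QPT distinguisher $\mathcal{D}$ that wins the PRS distinguishability game (Game~\ref{game:prs}) with non-negligible advantage, contradicting Definition~\ref{def:prs}. The core idea is that the only thing that distinguishes $G_{eqUnf}$ from the standard unforgeability game of Game~\ref{game:uni-unf} is how the challenge state is sampled, so any forger that exploits the PRS sampling must, in essence, be detecting non-Haar structure in the challenge.

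I would construct $\mathcal{D}$ as follows. Upon receiving $m=\text{poly}(\lambda)$ copies of $\ket{\phi^b}$ from the PRS challenger, $\mathcal{D}$ samples a unitary $U$ from the unknown-unitary family (using its efficient implementation) and simulates the challenger of $G_{eqUnf}$ towards $\mathcal{A}$. Concretely, during the learning phase $\mathcal{D}$ answers each query $\rho_i$ by returning $U\rho_i U^{\dagger}$; during the challenge phase it forwards one copy of $\ket{\phi^b}$ as $\rho^*$; in the guess phase it applies $U$ to $\kappa_1$ further copies to obtain $(U\ket{\phi^b})^{\otimes \kappa_1}$ and runs the test $\mathcal{T}((\rho^{*out})^{\otimes\kappa_1},\omega)$ on $\mathcal{A}$'s forgery $\omega$. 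Finally $\mathcal{D}$ outputs $b'=0$ if the test accepts and $b'=1$ otherwise.

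The analysis then splits along the two branches. In the $b=0$ branch, $\mathcal{A}$'s view is exactly its view in $G_{eqUnf}$, so by the contradiction hypothesis the test accepts with some non-negligible probability $p$. In the $b=1$ branch, $\mathcal{A}$'s view coincides with that of the standard universal unforgeability game, and because $U$ is drawn from an unknown-unitary family, Theorem~\ref{th:uu-unforge} guarantees that the test accepts with only $\negl(\lambda)$ probability. Hence $|\Pr[\mathcal{D}=1\mid b=0]-\Pr[\mathcal{D}=1\mid b=1]| \geq p - \negl(\lambda)$, which is non-negligible, contradicting pseudorandomness.

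The main technical care I expect concerns the copy budget and the faithfulness of the simulation. The distinguisher must serve both the challenge state and the $\kappa_1$ test copies out of its $m$ PRS samples, so I must verify that $\kappa_1+1$ remains polynomial, which holds since $\mathcal{T}$ is efficient; and that $\mathcal{D}$ itself runs in QPT, which follows from the efficient implementability of $U$ and of the test. A subtler point is showing that the simulation is view-equivalent to the genuine game in each branch: the learning phase is identical in both worlds (nothing in it depends on $b$), and the challenge/guess phases differ only by the identity of the challenge state, which is exactly what $b$ controls. No modification of $\mathcal{A}$'s internal behaviour is needed across the two branches, which is precisely what makes the reduction tight.
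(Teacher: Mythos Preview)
Your proposal is correct and is essentially the paper's own argument, collapsed into a single reduction: the paper threads the same idea through a chain of five hybrid games (Games~2--5), first arguing that applying a public or unknown unitary to the PRS samples preserves indistinguishability and then embedding the forger inside that modified distinguishing game, whereas you directly let the distinguisher sample $U$, simulate the learning and test phases, and invoke Theorem~\ref{th:uu-unforge} on the Haar branch. The ingredients (efficient access to $U$, efficiency of $\mathcal{T}$, and unforgeability of UU on Haar challenges) and the resulting contradiction are identical; your version simply skips the intermediate hops.
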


\begin{proof}
We prove by contraposition in a game-based setting. We want to show that starting from the assumption of pseudorandomness of PRS in the efficient universal unforgeability game, if there exists a QPT adversary who succeeds to win this game, with non-negligible probability, there will also exist an adversary who can efficiently distinguish between PRS and Haar random states, which is in contrast with the initial assumption and as a result show a contradiction. First, we need to specify the following games:
\begin{itemize}
    \item Game 1: This is the universal unforgeability game as specified in Game~\ref{game:uni-unf}, with the only difference that the challenge state $\rho^* = \ket{\phi_{k^*}}\bra{\phi_{k^*}}$ is chosen from a PRS family.
    \item Game 2: This is the PRS distinguishability game as specified in Game~\ref{game:prs}. 
    \item Game 3: This is a variation of Game~\ref{game:prs} where $\C$ in addition to initial resources, has also access to a publicly known and implementable unitary $U$. In the challenge phase, $\C$ does the following: Generates $m$ copies of $\ket{\phi^0} = \ket{\phi_k}$ using $G(k)$, or $m$ copies of Haar random states $\ket{\phi^1} = \ket{\psi}$ depending of $b$, then on each copies applies the public unitary $U$ and sends $(U\ket{\phi^b})^{\otimes m}$ to $\A$. The rest of the game is similar to Game 2.  
    \item Game 4: This game is similar to Game 3, except that $\C$ publicly chooses an $l$ and $l'$ such that $l + l' = m$ and sends $l$ copies of the generated state and $l'$ copies of the state after applying the unitary $U$, i.e. sends $\ket{\phi^b}^{\otimes l} \otimes (U\ket{\phi^b})^{\otimes l'}$ to $\A$.  
    \item Game 5: This game is similar to Game 4 except the public unitary has been replaced by an unknown unitary $\tilde{U}$ of the same dimension. Hence in this game, similar to Game~\ref{def:prs}, we also assume a learning phase for $\A$ before the challenge phase. The learning phase is as follows: $\A$ issues $q=poly(\lambda)$ queries $\{\rho_i\}^q_{i=1}$ to $\C$, on each query $\C$ generates $\rho^{out}_i = \tilde{U}\rho_i\tilde{U}^{\dagger}$ by applying the unitary on the query state and sends $\rho^{out}_i$ to $\A$. Then the rest of the Game is similar to Game 4 and at the end of the challenge phase $\A$ receives $\ket{\phi^b}^{\otimes l} \otimes (\tilde{U}\ket{\phi^b})^{\otimes l'}$
\end{itemize}

\begin{figure}[h!]
\includegraphics[scale=0.45]{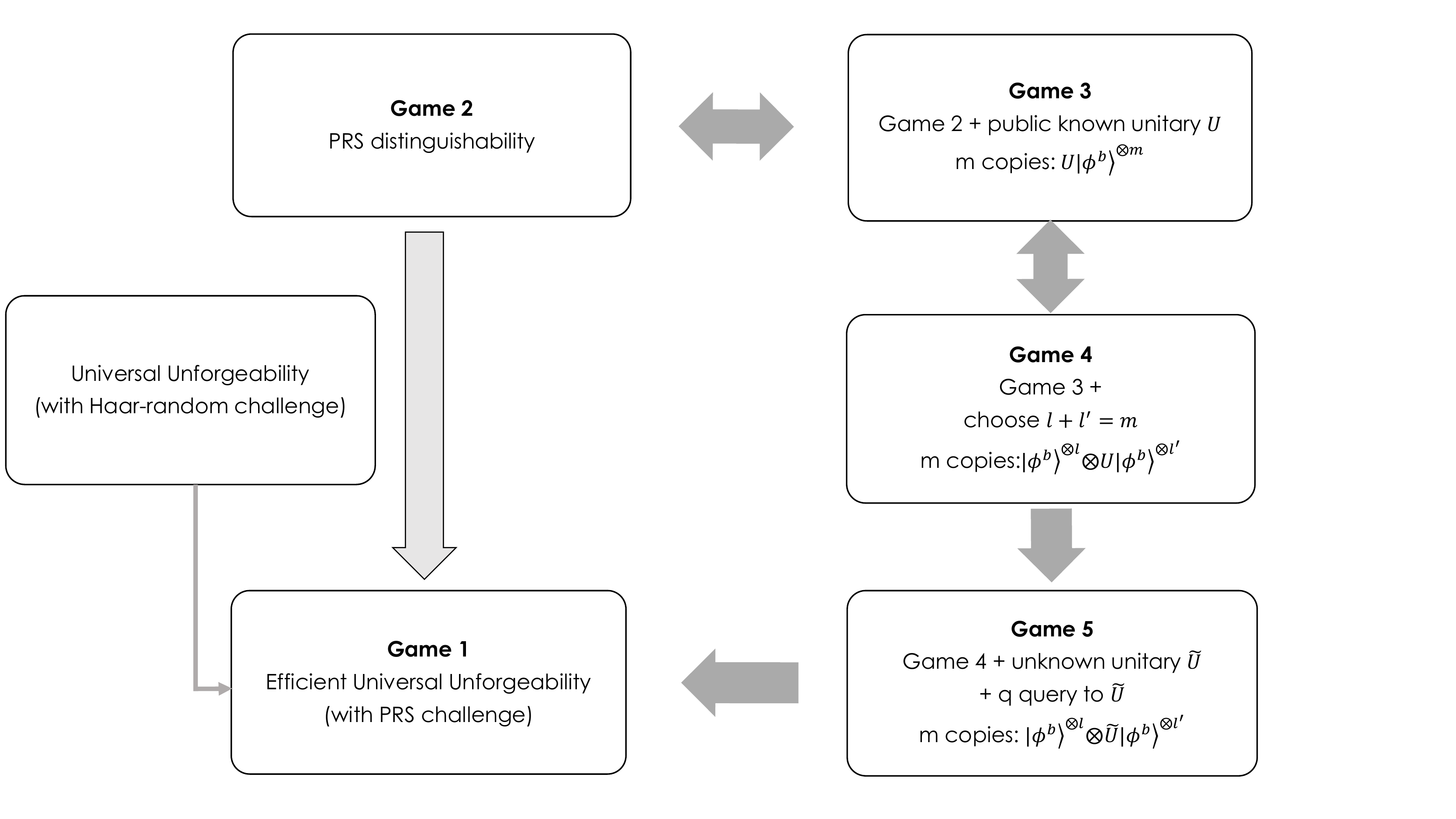}
    \centering
    \caption{Proof sketch of Theorem~\ref{th:efficientuu-prs} with the intermediate games.}
    \label{fig:prs-unf-proof}
\end{figure}

Figure~\ref{fig:prs-unf-proof} illustrates the sketch of the proof. We first show that Game 2, Game 3 and Game 4 are equivalent. We note that unitary transformations are distance invariant and hence they also preserve the distribution of states, as a result applying a unitary to the state will not affect the distribution and the distinguishability of the quantum states, and as a result Game 2 and 3 are equivalent. Furthermore, in Game 4, since the unitary is public, $\A$ can either apply $U$ on the first $l$ copies $\ket{\phi^b}^{\otimes l}$ and end up with $m$ copies of $(U \ket{\phi^b})^{\otimes m}$ or alternatively apply $U^{\dagger}$ on the next $l'$ copies $(U\ket{\phi^b})^{\otimes l'}$ and get $m$ copies of $\ket{\phi^b}^{\otimes m}$, and hence be reduced to either Game 2 or Game 3. As a result, we have
\begin{equation}
    \text{Game 2} \equiv \text{Game 3} \equiv \text{Game 4}
\end{equation}
Now we show that Game 4 implies Game 5 i.e. if an adversary wins the distinguishability in Game 5 with a probability $p$, she will also win in Game 4 with the same probability.

The proof is straightforward as highlighted here. Let $\A$ be an adversary who wins Game 5, which means after the learning phase leading to a polynomial-size database of the input-outputs of the unknown unitary $\tilde{U}$, and receiving $\ket{\phi^b}^{\otimes l} \otimes (\tilde{U}\ket{\phi^b})^{\otimes l'}$, they can guess $b$ with non-negligible probability better than random guess:
\begin{equation}
    \underset{\ket{\phi^b}}{Pr}[b \leftarrow \A(\ket{\phi^b}^{\otimes l} \otimes (\tilde{U}\ket{\phi^b})^{\otimes l'})] = \frac{1}{2} + \nonnegl(\lambda).
\end{equation}
Now let's assume an adversary $\A'$ who plays the Game 4 and has to guess $b$ by receiving the state $\ket{\phi^b}^{\otimes l} \otimes (U\ket{\phi^b})^{\otimes l'}$ can guess $b$ with same $l$ and $l'$ where $U$ is a public unitary. Now $\A'$ can run $\A$ as a subroutine and $\A'$ sends to $\A$ the response to the same learning phase states from $U$. Since $U$ is public $\A'$ can run it locally and produce the required queries. Then $\A'$ also sends the state $\ket{\phi^b}^{\otimes l} \otimes (U\ket{\phi^b})^{\otimes l'}$ to $\A$ and since $\A$ guesses the $b$ with a probability non-negligibly better than half, so does $\A'$. As a result, we have shown that:
\begin{equation}
    \text{Game 4} \Rightarrow \text{Game 5}
\end{equation}
Finally, we show that Game 5 implies Game 1. By contradiction, we assume there exist an adversary $\A$ who wins the unforgeability game with non-negligible probability. Let $\tilde{U}$ be the unknown unitary and $\A$'s forgery state be $\ket{\omega}$ and let the challenge state of Game 1 be a PRS state $\ket{\phi_k}$. We have:
\begin{equation}
\begin{split}
        Pr[1\leftarrow G_{eqUnf}(\lambda, \A)] & = \underset{k}{Pr}[1 \leftarrow \T(\ket{\omega}, (\tilde{U}\ket{\phi_k})^{\otimes \kappa})] \\
        & = \underset{k}{Pr}[F(\ket{\omega}, \tilde{U}\ket{\phi_k}) = \nonnegl(\lambda)] \\
        & = \nonnegl(\lambda).
\end{split}
\end{equation}
Now we construct an adversary $\A'$ playing an instance of Game 5 where $l = 1$ and $l' = m - 1$. In the learning phase $\A$ interacts with the unknown unitary $\tilde{U}$ with the same learning phase states required for $\A$ and sends the query states $\{\rho^{out}_i\}^q_{i=1}$ together with the challenge state $\ket{\phi^b}$ to $\A$. Then $\A$ produces the forgery $\ket{\omega}$ as their guess for $\tilde{U}\ket{\phi^b}$. Now $\A'$ verifies $\ket{\omega}$ with the same test algorithm $\T$ where $\kappa = m - 1$, since $\A'$ has $m-1$ copies of $\tilde{U}\ket{\phi^b}$ to check with. Then $\A'$ outputs the same $b$ as outputted by the $\T$. The success probability of $\A'$ is as follows. If $b=0$, the state is a PRS and the contradiction assumption is satisfied. Hence $\A$'s forgery state will pass the test algorithm with high probability. On the other hand if $b=1$, the state has been picked from Haar measure and as a result of Theorem~\ref{th:uu-unforge}, the success probability of $\A$ winning the forgery game and producing a state to pass the test is negligible. Since guessing $b$ in Game 5 with probability better than random guess is equivalent to the difference between the success probability of $\A'$ in winning the game in the two different scenarios, we have:
\begin{equation}
\begin{split}
        & |\underset{k \leftarrow \K}{Pr}[\A'(\ket{\phi_k} \otimes (\tilde{U}\ket{\phi_k})^{\otimes m-1})=1] - \underset{\ket{\psi} \leftarrow \mu}{Pr}[\A'(\ket{\psi} \otimes (\tilde{U}\ket{\psi})^{\otimes m-1})=1]| \\
        & = |\underset{k \leftarrow \K}{Pr}[\A(\ket{\phi_k})=1] - \underset{\ket{\psi} \leftarrow \mu}{Pr}[\A(\ket{\psi})=1]| \\
        & = \nonnegl(\lambda) - \negl(\lambda) = \nonnegl(\lambda)
\end{split}
\end{equation}
Here, as a specific example, we can consider the GSWAP to be the equality test and, we show how this check can efficiently be performed to show the gap and hence the implication of the two later games. Let us denote the adversary's purified forgery state as $\ket{\omega_b}$. According to equation~(\ref{eq:gswap}), the probability of the GSWAP accepting this state given $m-1$ copies of reference state $\tilde{U}\ket{\phi^b}$, has the following relation with the fidelity of the forgery state:
\begin{equation}\label{eq:gswap-attack-test}
     \text{Pr}[\text{GSWAP accept}] = \frac{1}{m} + \frac{m-1}{m} F(\tilde{U}\ket{\phi^b}, \ket{\omega_b})^2
\end{equation}
Assuming $\A$ wins the unforgeability game for PRS state with non-negligible probability implies that this fidelity is a non-negligible value in the security parameter, hence $F(\tilde{U}\ket{\phi^0}, \ket{\omega_0}) = \delta = \nonnegl(\lambda)$. On the other hand, for Haar-random state this fidelity is always a negligible value and we have that $F(\tilde{U}\ket{\phi^1}, \ket{\omega_1}) = \negl(\lambda)$. As a result the difference between $\A$'s success probability in the two cases is as follows:
\begin{equation}
\begin{split}
        & |\underset{k \leftarrow \K}{Pr}[\A'(\ket{\phi_k} \otimes (\tilde{U}\ket{\phi_k})^{\otimes m-1})=1] - \underset{\ket{\psi} \leftarrow \mu}{Pr}[\A'(\ket{\psi} \otimes (\tilde{U}\ket{\psi})^{\otimes m-1})=1]| \\
        & = \frac{1}{m} + \frac{m-1}{m}F(\tilde{U}\ket{\phi^0}, \ket{\omega_0}) - \frac{1}{m} + \frac{m-1}{m}F(\tilde{U}\ket{\phi^1}, \ket{\omega_1}) \\
        & = \frac{m-1}{m}(\delta - \negl(\lambda)) \approx \frac{m-1}{m}\delta = \nonnegl(\lambda)
\end{split}
\end{equation}

As a result, we have shown that there exist a non-negligible gap and hence $\A'$ can also win the Game 5. In conclusion, we have shown the following relation:
\begin{equation}
    \text{Game 2} \equiv \text{Game 3} \equiv \text{Game 4} \Rightarrow \text{Game 5} \Rightarrow \text{Game 1}
\end{equation}
This means that an adversary winning the unforgeability game, with the challenge being picked from a PRS family, can also distinguish the PRS states from Haar random state which is a contradiction and it concludes the proof. 
\end{proof}

We have formally shown that PRS states are enough to achieve quantum universal unforgeability. The next question is that how such states can be constructed. Ji, Liu, and Song~\cite{ji2018pseudorandom} propose several constructions for generating a PRS family using classical quantum-secure PRFs. Hence they show that PRS can be constructed under the assumption that quantum-secure one-way function exists. Another similar notion called Asymptotically Random State (ARS) has also been introduced in~\cite{brakerski2019pseudo}. In both works, first, oracle access to a classical random function is given to efficiently construct a PRS that is indistinguishable to Haar random states even for exponential adversaries. Then by relying on the existence of quantum-secure one-way function they replace the random function with a post-quantum secure PRF to achieve security against polynomial adversaries. With this approach, one can construct computationally secure $n$-qubit PRS which is also desired for unforgeability security property. Nevertheless, as discussed in~\cite{brakerski2020scalable}, these methods are not scalable and also an n-qubit PRS generator cannot necessarily be used to produce a random state for k-qubit where $k < n$. For these reasons, in~\cite{brakerski2020scalable} the authors introduce a scalable construction for PRS which, unlike prior works, relies on randomising the amplitudes of the states instead of the phase. The authors use Gaussian sampling methods to efficiently achieve PRS.

%---------------------------Section-4----------------------------
\section{From Pseudorandom Unitaries to Unknown Unitaries}\label{sec:unf-pru-tdesign}
We prove that a family of unitaries satisfying the computational assumption of PRU, is also a family of unknown unitary transformation. As a result of this implication, efficient constructions such as PRU or t-design can also satisfy the notion of universal unforgeability. Moreover, this result establishes for the first time, a link between a computational assumption of PRU with a hardware assumption such as unknownness.

\begin{theorem}\label{th:pru-uu}
A family of PRU, $\Uset = \{U_k\}_{k \in \K}$ is also a family of unknown unitary (UU) with respect to Definition~\ref{def:uu}.
\end{theorem}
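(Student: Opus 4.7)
The plan is to prove this by contraposition: assume the PRU family $\Uset$ fails to be unknown and extract from that assumption a QPT oracle distinguisher between $\Uset$ and the Haar measure, contradicting Definition~\ref{def:pru}. Concretely, suppose there exist a QPT adversary $\A$ and a non-negligible $\delta(\lambda)$ witnessing the UU violation, i.e.\ a non-negligible gap $\gamma(\lambda)$ between $\Pr_{k\leftarrow\K}[F(\A(\ket{\psi}),U_k\ket{\psi})\geq\delta]$ and $\Pr_{U\leftarrow\mu}[F(\A(\ket{\psi}),U\ket{\psi})\geq\delta]$ for Haar-random $\ket{\psi}$. My reduction will treat $\A$ as an oracle-independent subroutine and reserve the PRU oracle only for \emph{verifying} $\A$'s guess.

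The distinguisher $\mathcal{D}^U$ I would build proceeds as follows: (i) generate polynomially many copies of a state $\ket{\psi}$ drawn from a PRS family so that the whole procedure remains QPT; (ii) hand one copy to $\A$ and record its output register $\sigma$; (iii) query the oracle $U$ on the remaining copies to obtain several copies of $U\ket{\psi}$; (iv) run the (G)SWAP test of Section~\ref{sec:test} between $\sigma$ and the reference copies, and output its decision. By equation~(\ref{eq:swapaccept}) the acceptance probability equals $\tfrac{1}{2}+\tfrac{1}{2}F^2(\sigma,U\ket{\psi})$. When the oracle is Haar-random, $U\ket{\psi}$ is a Haar-random pure state independent of $\sigma$, so $\mathbb{E}[F^2]=1/d=\negl(\lambda)$ (and a Markov bound controls the tail); when the oracle is drawn from $\Uset$, the assumed UU breach forces $F\geq\delta$ with probability at least $\gamma-\negl(\lambda)$, producing a non-negligible gap of order $\gamma\delta^2$ in the (G)SWAP acceptance probabilities. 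Amplifying over polynomially many independent runs, $\mathcal{D}$ distinguishes the two worlds with non-negligible advantage, contradicting Definition~\ref{def:pru}.

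The main obstacle is that the UU violation is stated for Haar-random $\ket{\psi}$ while $\mathcal{D}$ must itself be QPT and therefore cannot sample from Haar directly. I would close this gap via a single PRS hybrid: viewing $\A$ followed by the (G)SWAP test and the final output bit as one QPT circuit acting on the $\ket{\psi}$ register (with the oracle regarded as fixed in each world), the pseudorandomness condition of PRS (Definition~\ref{def:prs}) implies that this circuit's expected acceptance probability under a PRS input is negligibly close to its expected acceptance probability under a Haar input, in each of the two worlds separately. Therefore the non-negligible Haar-vs-PRU acceptance gap derived above survives replacing Haar-sampled $\ket{\psi}$ by a PRS sample; the resulting $\mathcal{D}$ is genuinely QPT, and its existence contradicts Definition~\ref{def:pru}, completing the reduction.
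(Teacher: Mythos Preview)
Your approach mirrors the paper's: both assume the UU condition fails, hand the challenge $\ket{\psi}$ to $\A$, query the oracle to obtain the reference $U\ket{\psi}$, and use a (G)SWAP equality test to convert the fidelity gap into a PRU distinguishing advantage. The paper simply has its distinguisher $\A'$ ``pick $\ket{\psi}$ as one of her chosen inputs'' without ever addressing how a Haar-random challenge is prepared by a QPT machine; you go further and handle that with a PRS hybrid, which is a genuine improvement in rigour over the paper's argument.

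One small correction to your hybrid step: you say PRS indistinguishability lets you swap Haar-$\ket{\psi}$ for PRS-$\ket{\psi}$ ``in each of the two worlds separately.'' In the PRU world this is fine, since the combined circuit (sample $k$, apply the efficiently computable $U_k$, run $\A$, SWAP-test) is QPT and Definition~\ref{def:prs} applies directly. In the Haar-unitary world, however, applying a Haar-random $U$ is not an efficient operation, so Definition~\ref{def:prs} does not literally cover that case. Fortunately you do not need it there: as you already observed, when $U$ is Haar-random, $U\ket{\psi}$ is a Haar-random state independent of $\sigma=\A(\ket{\psi})$ for \emph{any} fixed $\ket{\psi}$, so the SWAP acceptance is $\tfrac{1}{2}+\mathcal{O}(1/d)$ regardless of whether $\ket{\psi}$ came from Haar or from a PRS. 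Hence the PRS hybrid is only needed on the PRU side, and with that adjustment your reduction goes through.
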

\begin{proof}
We prove this by contradiction. Let $\Uset$ be a family of PRU but not a family of UU which means that there is a quantum polynomial-time (QPT) adversary $\A$ who can estimate the output of a randomly picked $U \leftarrow \Uset$ where $\Uset$ is a UU, on a state $\ket{\psi}$, non-negligibly better than the output of a $U \leftarrow \mu$ picked from a Haar-random unitaries $\mu$ over a $d$-dimensional Hilbert space. Thus for $\A$ the following holds: 
\begin{equation}
\begin{split}
        & |\underset{U \leftarrow \Uset}{Pr}[F(\A(\ket{\psi}),U\ket{\psi}) \geq \nonnegl(\lambda)] - \underset{U_{\mu} \leftarrow \mu}{Pr}[F(\A(\ket{\psi}),U_{\mu}\ket{\psi}) \geq \nonnegl(\lambda)]| \\
        & = \nonnegl(\lambda).
\end{split}
\end{equation}
Let $\A$' be a QPT adversary who aims to break the pseudorandomness property of $\Uset$ using $\A$, and works as follows:\\
\textit{$\A$' picks $\ket\psi$ as one of her chosen inputs in the learning phase of the pseudorandomness game. Then $\A$' also runs $\A$ internally on $\ket{\psi}$.}\\
From the previous equation we know that $\A$ can estimate the output of $U \ket{\psi}$ better than $\U_{\mu}\ket{\psi}$ where $\U_{\mu}$ is a Haar random unitary, by a non-negligible value. Also by definition, we know that The probability that any QPT algorithm estimates the output of any Haar randomly given unitary, is negligible, as the response maps to any random state in the Hilbert space $\Hild$ with exponential distribution~\cite{dankert2006c,nielsen2010quantum}. Thus the equation implies that:
\begin{equation}
        |\underset{U \leftarrow \Uset}{Pr}[F(\A(\ket{\psi}),U\ket{\psi}) \geq \nonnegl(\lambda)]| = \nonnegl(\lambda).
\end{equation}
This means that $\A$ can estimate the output with non-negligible fidelity if the $U$ had been picked from the family. Now $\A$' runs a quantum equality test on the $U \ket{\psi}$ obtained in the learning phase and $\A(\ket{\psi})$. In the case where $U$ is picked from the PRU family, the estimated output and the real output have non-negligible fidelity and the test returns equality with a non-negligible probability. Otherwise, the test shows that they are not equal and $\A$' can conclude that the unitary has been picked from Haar unitaries. Thus for $\A$' we have:
\begin{equation}
    \underset{U \leftarrow \Uset}{Pr}[\A'^{U}(1^{\lambda})=1] - \underset{U_{\mu} \leftarrow \mu}{Pr}[\A'^{U_{\mu}}(1^{\lambda})=1]=\nonnegl(\lambda)
\end{equation}
Therefore we conclude the contradiction.
\end{proof}

We have shown that PRU implies unknown unitaries and followed by the results of~\cite{arapinis2021quantum} for unforgeability of UU, we conclude that PRU makes a set of universally unforgeable unitaries. Now we show that PRU can also be considered as a PUF family. In order to do that we need to show that the PUF requirements discussed in Preliminary~\ref{sec:prelim-qpuf} are satisfied. Since the $\delta_r$-Robustness and $\delta_c$-Collision Resistance are trivially satisfied by the unitarity, we only need to argue the $\delta_u$-Uniqueness requirement. 

\begin{theorem}\label{th:pru-unique}
Let $\Uset \in U(d) = \{U_k\}_{k \in \K}$ be a family of PRU and universally-unforgeable unitary matrices. Then there exist a $\delta_u = \nonnegl(\lambda) = \nonnegl(polylog(d))$ such that $\Uset$ satisfies $\delta_u$-Uniqueness.
\end{theorem}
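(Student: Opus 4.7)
The plan is to proceed by contraposition, turning a failure of uniqueness into a distinguisher against the PRU property of $\mathcal{U}$. Suppose that $\mathcal{U}$ does \emph{not} satisfy $\delta_u$-uniqueness for any non-negligible $\delta_u$: then there is a negligible $\delta^\ast(\lambda)$ such that independent samples $k_i, k_j \leftarrow \K$ with $i\neq j$ yield $\|\Lambda^{U_{k_i}}-\Lambda^{U_{k_j}}\|_{\diamond} \leq \delta^\ast(\lambda)$ with some non-negligible probability $p(\lambda)$. I will construct a QPT distinguisher $\A'$ that, given oracle access to a unitary $W$, samples a fresh independent key $k\leftarrow\K$, uses the efficient PRU generator to implement $U_k$ locally, and then performs a channel-equality test between $W$ and $U_k$. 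Concretely $\A'$ prepares a maximally entangled state $\ket{\phi^+}=\tfrac{1}{\sqrt d}\sum_i\ket{i}\ket{i}$, applies $W\otimes I$ and separately $U_k\otimes I$ to obtain the two Choi states, then runs the SWAP test of Section~\ref{sec:swap} on them (amplified by polynomially many independent repetitions, each using one fresh query to $W$).

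A short calculation gives a SWAP-acceptance probability equal to $\tfrac12 + \tfrac12\,|\mathrm{Tr}(W^\dagger U_k)/d|^2$. The diamond norm between the unitary channels $\Lambda^W$ and $\Lambda^{U_k}$ is controlled by this same quantity: $\|\Lambda^W-\Lambda^{U_k}\|_{\diamond}$ is small precisely when $|\mathrm{Tr}(W^\dagger U_k)/d|$ is close to $1$ (this is the standard equivalence between the diamond distance and the normalized-trace/Hilbert--Schmidt distance for unitary channels). In the PRU world, $W = U_{k'}$ for a random $k'\leftarrow\K$, and by the contraposition hypothesis with probability at least $p(\lambda)$ the pair $(k,k')$ is a near-collision, so the SWAP test accepts with probability $1-\mathrm{negl}(\lambda)$. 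In the Haar world $W = U_\mu$, and by standard concentration of measure on the unitary group (via Lipschitz concentration, or directly via Weingarten moment bounds on $|\mathrm{Tr}(U_\mu^\dagger U_k)/d|$) one gets $|\mathrm{Tr}(U_\mu^\dagger U_k)/d|^2 = O(1/d)$ with overwhelming probability; since the theorem's regime is $\lambda = \mathrm{polylog}(d)$, that is $d = 2^{\mathrm{poly}(\lambda)}$, so $1/d$ is negligible in $\lambda$.

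Putting the two sides together, the expected SWAP acceptance is at least $\tfrac12 + p(\lambda)/2 - \mathrm{negl}(\lambda)$ in the PRU world and at most $\tfrac12 + \mathrm{negl}(\lambda)$ in the Haar world; amplifying by $\mathrm{poly}(\lambda)$ independent trials converts this non-negligible gap into a reliable decision bit, so $\A'$ is a QPT distinguisher against $\mathcal{U}$ with non-negligible advantage, contradicting the PRU assumption. Hence the hypothesised rate of near-collisions cannot occur, and there exists some $\delta_u = 1/\mathrm{poly}(\lambda) = 1/\mathrm{poly}(\mathrm{polylog}(d))$ for which the uniqueness condition~(\ref{eq:unique}) holds with negligible $\epsilon(\lambda)$, as required. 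The step I expect to be the main technical obstacle is the quantitative translation between a ``negligible diamond distance'' assumption and a ``SWAP-acceptance close to $1$'' conclusion (since the equivalence of these two metrics on unitary channels is tight only up to polynomial factors), together with obtaining the right concentration tail on $|\mathrm{Tr}(U_\mu^\dagger U_k)/d|$; both are standard and do not require any new tools beyond those already invoked in the proof of Theorem~\ref{th:pru-uu}.
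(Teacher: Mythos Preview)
Your proposal is correct and follows the same overall strategy as the paper: argue by contraposition, have the distinguisher sample a fresh key $k\leftarrow\K$ and implement $U_k$ locally via the PRU generator, then test whether the oracle $W$ behaves like $U_k$---near-collisions being non-negligibly likely in the PRU world and negligibly likely in the Haar world. The paper instantiates the equality test by applying both unitaries to a single Haar-random input $\ket{\psi}$ and running a GSWAP test on $W\ket{\psi}$ versus $U_k\ket{\psi}$ (narratively framed as the adversary ``playing the universal unforgeability game'' with forgery $U_k\ket{\psi}$), and for the Haar side uses the state-overlap bound $\Pr[|\langle\phi|\psi\rangle|^2\geq\epsilon]\leq e^{-\epsilon d}$ from Kretschmer. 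Your Choi-state SWAP test with the trace bound $\mathbb{E}\,|\Tr(U_\mu)/d|^2=O(1/d)$ is an equally valid and arguably more direct instantiation of the same idea, and it makes transparent that the universal-unforgeability hypothesis in the theorem statement is never actually used (the paper's proof also does not use it beyond the framing).

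One small quibble on your contraposition: the failure of $\delta_u$-uniqueness for every non-negligible $\delta_u$ does not immediately give you a \emph{negligible} $\delta^\ast$ with non-negligible collision probability; it only gives non-negligible collision probability for each fixed non-negligible threshold. This is harmless, since your argument goes through unchanged with, say, the constant threshold $\delta^\ast=1$ (diamond distance $\leq 1$ already forces $|\Tr(W^\dagger U_k)/d|^2\geq 3/4$), and the paper's contraposition is set up with the same imprecision.
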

\begin{proof}
We prove by contraposition and we assume that a non-negligible $\delta_u$ to satisfy the $\delta_u$-Uniqueness does not exist. This means that for any two unitary $U_i$ and $U_j$ picked uniformly at random from $\Uset$, the two unitary are $\zeta$-close in the diamond norm with a high probability. Otherwise if there exist a minimum $\zeta_{min} = \nonnegl(\lambda)$ distance in diamond norm between any two unitaries we have already shown the $\delta_u$ exists. Hence we assume that we have the following condition:
\begin{equation}\label{eq:proof-contra-diamond}
    Pr[\parallel (U_i - U_j)_{i\neq j}\parallel_\diamond \leq \zeta] \geq 1 - \epsilon(\lambda)
\end{equation}
where both $\zeta$ and $\epsilon(\lambda)$ are negligible function in the security parameter. Now we assume an adversary $\A$ wants to distinguish between $\Uset$ and the set of Haar-random unitaries. By assumption, we have that all the unitaries in $\Uset$ are universally unforgeable. So now we let the $\A$ play the PRU game (similar to Game~\ref{game:prs}) while running the universal unforgeability game as a distinguishing subroutine. Let $\C$ be the honest party picking at random a bit $b \in \{0,1\}$ where if $b=0$, a unitary $U$ is picked at random from $\Uset$ and we are in the PRU world and otherwise $U$ is picked from $\mu$ that denotes the set of Haar-random unitary matrices. Then $\A$ gets polynomial oracle access to the $U$ and after the interaction, needs to guess $b$. Now, since there exist an efficient public generation algorithm $Q$ for the PRU set, we let the adversary sample another unitary $U'$ from $Q$ locally and uniformly at random. According to the contraposition assumption give in Equation~\ref{eq:proof-contra-diamond}, if $b=0$, with a high probability these two unitaries are $\zeta$-close in the diamond norm, i.e. $\parallel (U - U')\parallel_\diamond \leq \zeta$. Given this promise, the adversary performs the following strategy: $\A$ locally plays the universal unforgeability game on the $U$, by picking a state $\ket{\psi}$ uniformly at random from Haar measure and querying it to $\C$ as a part of the polynomial oracle interaction with $U$. $\A$ will receive $U\ket{\psi}$ and can ask for multiple copies of it so long as the total number of queries to the oracle remains polynomial. Now we also rely on the fact that since PRU has the efficient computation property, meaning that $\A$ can locally compute the $U'\ket{\psi}$ to get multiple copies. Now $\A$'s strategy to win the unforgeability game is to output $U'\ket{\psi}$ as the forgery for $\ket{\psi}$.

Again in the case of $b=0$, since the two unitaries are negligibly close in the diamond norm with a high probability we have the following:
\begin{equation}
    Pr[\parallel (U - U')\parallel_\diamond \leq \zeta] \geq 1 - \epsilon \Rightarrow Pr[F(U\ket{\psi}, U'\ket{\psi}) \geq 1 - \zeta] \geq 1 - \epsilon
\end{equation}
This holds since the diamond norm is defined as a maximum over all of the density matrices, hence if the two unitaries are very close in the diamond norm, their output over a random state is also very close on average. Thus, the adversary can run a local efficient verification test (for instance a GSWAP test) between $U'\ket{\psi}$ and $U\ket{\psi}$ and use the output of the test as a distinguisher between pseudorandom and Haar-random world. If $b=0$, we have:
\begin{equation}
    Pr[F(U\ket{\psi}, U'\ket{\psi}) \geq 1 - \zeta] \geq 1 - \epsilon \Rightarrow Pr[1\leftarrow G(\lambda, \A)] = \nonnegl(\lambda)
\end{equation}
Hence $\A$ will win the game with a high probability. However, in the case of $b=1$ where $U$ is a Haar-random unitary, we can use lemma 16 in~\cite{kretschmer2021quantum}, that states for a fixed state $\ket{\phi} \in \Hild$ and a Haar-random state $\ket{\psi} \leftarrow \mu$, and any $\epsilon > 0$ we have:
\begin{equation}
    \underset{\ket{\psi} \leftarrow \mu}{Pr}[|\mbraket{\phi}{\psi}|^2 \geq \epsilon] \leq e^{-\epsilon d}
\end{equation}
This implies that taking the $U'\ket{\psi} = \ket{\phi}$ to be the fixed state, we denote that since $U$ is a Haar-random unitary then $U\ket{\psi}$ is also a Haar-random state and hence the probability that the fidelity $F(U\ket{\psi}, U'\ket{\psi})$ is a non-negligible value (with respect to $polylog(d)$) like $ 1 - \zeta$ is exponentially low. Hence in case $b=1$, the probability that the adversary's state passes the verification is exponentially low. Hence using this strategy, there will be always a distinguisher that can distinguish between $\Uset$ and Haar-random unitaries i.e.:
\begin{equation}
    \underset{U \leftarrow \Uset}{Pr}[\A'^{U}(1^{\lambda})=1] - \underset{U_{\mu} \leftarrow \mu}{Pr}[\A'^{U_{\mu}}(1^{\lambda})=1]=\nonnegl(\lambda)
\end{equation}
But this is in contrast with the assumption that $\Uset$ is a PRU. Hence we have reached a contradiction and the proof is complete.
\end{proof}

% %---------------------------Section-5----------------------------
\section{Pseudorandom Unitaries and States from Hardware Assumptions}\label{sec:composable}
As discussed earlier pseudorandom quantum states can be constructed under the assumption of qPRF or quantum one-way functions. Given the relationship that we have explored in the previous section between the unforgeability of qPUF and quantum pseudorandomness, here we ask whether it is possible to construct pseudorandom quantum states under a different set of assumptions? In this section, we discuss how one can achieve PRU and PRS under hardware assumptions on a family of unitary transformations. These hardware assumptions are generally discussed in the context of quantum PUFs, nevertheless, our results can be in general applied to any sets of unitaries with the given properties as long as they can be assumed on a hardware level.

Let $\Uset \subseteq U(d) = \{U_i\}^{\K}_{i=1}$ be a family of unitaries with certain specific assumption that is given by their physical nature. We want to use the above family as a PRU family or generators for PRS. As shown in~\cite{ji2018pseudorandom}, if $\Uset$ is a PRU then it is also a generators for PRS i.e. $G(k) = U_k\ket{0} = \ket{\phi_k}$. To this end, we investigate the properties of a qPUF family that can be used to achieve pseudorandomness. In the last section, we have shown that PRU implies the notion of unknown unitary assumption, or in other words single-shot unknownness. Now we explore the relation of PRU and another notion of unknownness called \emph{practical unknownness} by Kumar et al.~\cite{kumar2021efficient}. This definition is a more suited definition for t-design unitary sets constructions and is defined as follows:

\begin{definition}[$\epsilon,t,d-$ Practical unkownnness~\cite{kumar2021efficient}]\label{def:pu} We say a unitary transformations $U$, from a set $\Uset \subseteq U(d)$ is $(\epsilon,t,d)$- practically unknown if provided a bounded number $t \leq poly(\log_2 d)$ of queries $U\rho U^{\dagger}$, for any $\rho \in \Hild$, the probability that any $poly(\log_2 d)$-time adversary can perfectly distinguish $U$ from a Haar distributed unitary is upper bounded by $1/2(1 + 0.5 \epsilon)$. Here $0 < \epsilon < 1$, t are functions of $log_2 d$, and $\lim_{\log_2(d) \rightarrow \infty}\epsilon = 0$.
\end{definition}

For the sake of our proof, we need a variation of this definition which is for any polynomial number of queries in the security parameter:

\begin{definition}[$\epsilon,d-$ Practical unkownnness]\label{def:pu-poly} We say a unitary transformations $U$, from a set $\Uset \subseteq U(d)$ is $(\epsilon,d)$- practically unknown if it is $(\epsilon,t,d)$- practically unknown for any $t=poly(\lambda) = poly(\log d)$.
\end{definition}

Now we show that the assumption of $\epsilon,d-$ Practical unkownness implies PRU.

\begin{theorem}\label{th:practicaluu-pru}
A family of $(\epsilon,d)$- practically unknown unitaries where $\epsilon = \negl(\lambda)$ is a PRU family.
\end{theorem}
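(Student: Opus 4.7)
The plan is to give a direct reduction: the definition of $(\epsilon, d)$-practical unknownness is essentially a restatement of the pseudorandomness condition of Definition~\ref{def:pru}, written as a success probability rather than as a distinguishing advantage. So what remains is to translate between the two formulations and to verify that the efficient computation condition of PRU is also inherited.

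First, I would set up a standard two-world distinguishing experiment. Fix any QPT oracle algorithm $\A'$ making at most $t = \poly(\lambda)$ queries. Let
\begin{equation}
p_0 = \underset{k\leftarrow \K}{Pr}[\A'^{U_k}(1^\lambda)=1], \qquad p_1 = \underset{U_\mu\leftarrow \mu}{Pr}[\A'^{U_\mu}(1^\lambda)=1].
\end{equation}
From $\A'$ one constructs the usual bit-guessing adversary $\A$ in the game where a challenger flips $b\overset{\$}{\leftarrow}\{0,1\}$ and hands out either $U_k$ (if $b=0$) or $U_\mu$ (if $b=1$); $\A$ simulates $\A'$ and outputs $0$ if $\A'$ returns $1$, else $1$. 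The probability that $\A$ guesses $b$ correctly is exactly $\tfrac{1}{2}+\tfrac{1}{2}(p_0-p_1)$, so any bound on the guessing probability transfers to a bound on $|p_0-p_1|$ by a factor of $2$.

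Next I would invoke Definition~\ref{def:pu-poly}: since $\Uset$ is $(\epsilon,d)$-practically unknown for every polynomial query count $t$, the guessing probability of $\A$ is at most $\tfrac{1}{2}\bigl(1 + \tfrac{\epsilon}{2}\bigr)$, i.e.\ $\tfrac{1}{2}+\tfrac{\epsilon}{4}$. Combining with the equivalence above yields $|p_0-p_1|\leq \epsilon/2$. By the hypothesis $\epsilon = \negl(\lambda)$ (and the identification $\lambda = \poly(\log d)$ used throughout this section), we get
\begin{equation}
|p_0-p_1| = \negl(\lambda),
\end{equation}
which is precisely the pseudorandomness requirement of Definition~\ref{def:pru}. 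Since this holds for every QPT $\A'$ making polynomially many queries, $\Uset$ satisfies the second PRU condition.

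Finally, I would argue the efficient computation condition of PRU. For $\Uset$ to be meaningfully accessed as an oracle by the QPT adversary in Definition~\ref{def:pu-poly}, the unitaries must already be implementable in polynomial time on the hardware side; this is the same implicit assumption used throughout the qPUF/practical unknownness literature (and is consistent with Definition~\ref{def:pu}, where $t\leq\poly(\log_2 d)$ queries of the form $U\rho U^\dagger$ are produced). Thus $\Uset$ already supplies an efficient generation algorithm $Q$ with $Q(k,\ket\psi)=U_k\ket\psi$, meeting the first PRU condition. I expect the only delicate point to be the interpretation of ``perfectly distinguish'' in Definition~\ref{def:pu}: I would read it as the standard bit-guessing distinguisher, which is the reading consistent with the quoted bound $\tfrac{1}{2}(1+\epsilon/2)$ on the adversary's success probability; otherwise the bound would be trivial. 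With that reading, the reduction above is tight and the theorem follows.
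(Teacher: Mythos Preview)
Your proof is correct and follows essentially the same approach as the paper: both arguments convert between the distinguishing-advantage formulation of PRU (Definition~\ref{def:pru}) and the bit-guessing success-probability formulation of practical unknownness (Definition~\ref{def:pu-poly}), the paper phrasing it as a contraposition while you do it directly. You are in fact slightly more careful about the factor of $2$ in this conversion, and you also address the efficient-computation clause of Definition~\ref{def:pru}, which the paper's own proof leaves implicit.
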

\begin{proof}
We prove this by contraposition. Let $\Uset = \{U_k\}^{\K}_{i=1} \subseteq U(d)$ be a $(\epsilon,d)$- practically unknown family, that is not a PRU. This means that there exists a QPT adversary $\A$ for which we have the following after some $q=poly(\lambda) = poly(\log(d))$ queries to the unitary oracle:
\begin{equation}
    |\underset{k \leftarrow \K}{Pr}[\A^{U_k}(1^{\lambda})=1] - \underset{U \leftarrow \mu}{Pr}[\A^U(1^{\lambda})=1]| = \delta = \nonnegl(\lambda).
\end{equation}
Equivalently, we can say that if a unitary is randomly picked from either of the sets $\Uset$ or a set of Haar-random distributed unitaries with a random bit $b$, the advantage of the adversary in guessing bit $b$ is a non-negligible function $\delta$ greater than $\frac{1}{2}$. Now if such adversary exists, there exists also an adversary $\A'$ that querying the same $q$ states, can distinguish the $U_k \in \Uset$ from a Haar-random unitary with the following probability:
\begin{equation}
    Pr[\text{distinguish } U_k] \geq \frac{1}{2} + \delta
\end{equation}
On the other hand, if $\Uset$ is $(\epsilon,d)$-practically unknown this probability is equal to $\frac{1}{2}(1 + 0.5 \epsilon)$ where $\frac{\epsilon}{4}$ is a negligible function while as $\delta$ is non-negligible. Hence we reach a contradiction and the proof is complete.  
\end{proof}

We have shown that given the hardware assumption of practical unknownness, over a set of unitary transformations such as unitary qPUFs, one can get PRU and as a result generate PRS by applying random elements of the set on the computational basis state. Now, we want to look at another property of a family of qPUFs and see whether pseudorandomness can be achieved under other related assumptions of such families. One of the main requirements on a qPUF family is the uniqueness property that ensures any two qPUFs in the family are sufficiently distinguishable in the diamond norm. The uniqueness property is formally defined in preliminary section~\ref{sec:prelim-qpuf}, equation (\ref{eq:unique}). In what follows we show a family of unknown and maximally distinguishable unitary matrices, such as unitary qPUFs, also form a family of PRU and are a generator for PRS. 

\begin{theorem}\label{th:max-unique-pru}
Let $\Uset_{\K} = \{U_k\}^{\K}_{k=1} \subseteq U(d)$ be a family of unitary transformation selected at random from a distribution $\chi_{\Uset}$ such that they satisfy almost maximal uniqueness i.e. for any randomly picked pairs of unitary matrices from $\Uset_{\K}$, we have $\parallel (U_i - U_j)_{i\neq j}\parallel_\diamond = 2 - \epsilon$ where $\epsilon = \negl(\lambda)$, then for a sufficiently large $\K$ and $d$, the $\Uset_{\K}$ is also a PRU.
\end{theorem}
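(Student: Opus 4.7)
The plan is to proceed by contraposition, closely following the template used in the proof of Theorem~\ref{th:pru-unique}. Assume $\Uset_\K$ satisfies almost-maximal uniqueness but is not a PRU. Then there exists a QPT distinguisher $\A$ that, after $q = \text{poly}(\lambda)$ oracle queries, distinguishes a uniformly chosen $U \leftarrow \Uset_\K$ from a Haar-random $U \leftarrow \mu$ with some non-negligible advantage $\delta = \nonnegl(\lambda)$.

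My first step is to establish that Haar-random unitaries in dimension $d$ themselves satisfy almost-maximal uniqueness with overwhelming probability. For two independent $U, V \leftarrow \mu$, the product $V^{\dagger}U$ is also Haar-distributed, and for sufficiently large $d$ its spectrum concentrates on the unit circle in such a way that $\|U - V\|_{\diamond} \geq 2 - \negl(d)$ with probability $1 - \negl(d)$. This follows from Levy's lemma combined with the standard characterisation of the diamond-norm distance between two unitary channels in terms of the distance from the origin to the convex hull of the eigenvalues of $V^{\dagger}U$. The consequence is that the pairwise diamond-norm statistics of $\Uset_\K$ and of Haar measure are indistinguishable up to a negligible error in $\lambda$ (since $d$ is at least exponential in $\lambda$).

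My second step is to embed $\A$ into a reduction that mirrors the structure used in the proof of Theorem~\ref{th:pru-unique}. The challenger provides oracle access to the challenge $U$ while the reduction locally samples an auxiliary unitary $U'$ from the same distribution. In both the $\Uset_\K$ world and the Haar world, the pair $(U, U')$ has the same almost-maximal diamond-norm separation with overwhelming probability, so any decision procedure $\A$ performs that depends on pairwise separation must produce statistically close acceptance profiles in both worlds. Running $\A$ on the $q$-query transcript and applying a hybrid argument over the adaptive queries, the total distinguishing advantage must collapse to $\negl(\lambda)$, contradicting the assumed $\delta$.

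The main obstacle, as in the earlier proofs of this section, is the classical gap between a single-shot diamond-norm guarantee and the multi-query indistinguishability required by the PRU definition. I expect this to be bridged by exploiting that both $\K$ and $d$ are sufficiently large: uniform sampling from an almost-maximally-distanced family matches the low-order moment statistics of the Haar measure, so adaptive polynomial queries cannot accumulate more than negligible distinguishing information. A cleaner backup route that I would try in parallel is to first show that almost-maximal uniqueness implies $(\epsilon,d)$-practical unknownness in the sense of Definition~\ref{def:pu-poly}, and then invoke Theorem~\ref{th:practicaluu-pru} to obtain PRU. Reducing to practical unknownness localises the difficulty to showing that a polynomial-time adversary cannot exploit the $\negl(\lambda)$-scale pairwise-distance deviation to identify the chosen $U_k$ — essentially a data-processing argument across the $q$ adaptive queries.
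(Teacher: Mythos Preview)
Your approach is fundamentally different from the paper's, and it has a genuine gap that I do not see how to close along the lines you sketch.

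\textbf{What the paper actually does.} The paper does not argue by contraposition or by any reduction to a distinguisher. Instead it gives a direct random-matrix-theory argument that the distribution $\chi_{\Uset}$ itself converges to Haar measure for large $d$. Concretely: from $\|U_i-U_j\|_\diamond=2-\epsilon$ it extracts $\delta(U_i^{\dagger}U_j)^2\approx 0$, translates this into a statement about the arc length $\theta\approx\pi$ in which pairs of eigenvalues lie, invokes Wieand's theorem (Theorem~\ref{th:wieand}) for $\mathbb{E}[N_\theta]$ and $\mathrm{Var}(N_\theta)$, applies a Chernoff bound to conclude that the eigenvalues spread uniformly on $\mathbb{S}^1$, and then uses the Diaconis--Shashahani convergence (Theorem~\ref{th:diaconis-shah}) together with a Kolmogorov-distance bound to conclude that the empirical spectral measure of $\chi_{\Uset}$ is negligibly close to that of Haar. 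Indistinguishability for a QPT adversary is then read off from closeness of the two measures, not from any game-based reduction.

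\textbf{Where your plan breaks.} Your core step is: ``the pairwise diamond-norm statistics of $\Uset_\K$ and of Haar measure are indistinguishable, so any decision procedure that depends on pairwise separation must produce statistically close acceptance profiles.'' But a QPT adversary with oracle access is not restricted to functionals of pairwise diamond distances; it can query $U$ on arbitrary states and process the outputs arbitrarily. Matching a single scalar statistic (the typical value of $\|U_i-U_j\|_\diamond$) says nothing about matching the full output distributions the adversary sees. The template from Theorem~\ref{th:pru-unique} does not transfer here: in that proof the reduction exploits the \emph{failure} of uniqueness (small diamond distance lets a locally-sampled $U'$ mimic $U$); in your direction both worlds have large pairwise distance, so the locally sampled $U'$ tells you essentially nothing about the oracle $U$ in either world, and no contradiction is produced. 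The ``hybrid over adaptive queries'' you mention would need, at minimum, a moment-matching statement (something like a $t$-design property) that you have not derived from the uniqueness hypothesis; the backup route via practical unknownness faces exactly the same missing step. The paper closes this gap not by a reduction but by importing the heavy random-matrix machinery above to argue closeness of the underlying measures directly.
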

\begin{proof}
We first show that if the maximum uniqueness is on average satisfied for any pairs of unitary matrices of $\Uset_{\K}$, then the distribution $\chi_{\Uset}$ converges to Haar measure in the limits of large $d$. The first part of our proof is in the spirit of a proof given in~\cite{kumar2021efficient} for proving uniqueness of Haar-random unitaries. We attempt to prove the other direction for a specific degree of uniqueness which is $2 - \epsilon$ where the maximum of the diamond norm is 2. We have,
\begin{equation}
    \parallel (U_i - U_j)_{i\neq j}\parallel_\diamond = 2 - \epsilon = 2\sqrt{1 - \delta(U_i^{\dagger}U_j)^2}
\end{equation}
Where the $\delta(M) = \underset{\ket{\phi}}{min}|\bra{\phi}M\ket{\phi}|$ is the minimum of absolute value over the numerical range of the operator $M$. From the above equation we have:
\begin{equation}
     \delta(U_i^{\dagger}U_j)^2 = \epsilon - \frac{\epsilon^2}{4} \approx 0
\end{equation}
Since the diamond norm is unitary invariant, we can multiply all the unitaries of the family by a fixed unitary matrix which results in the set including the identity matrix $\mathcal{I}$, hence the above equation can be rewritten as:
\begin{equation}
     \delta(U'_k)^2 = \epsilon - \frac{\epsilon^2}{4}
\end{equation}
where the set of unitary matrices $U'$ is equivalent to the initial set up to a unitary transformation. Now let $\{e^{i\theta_1},\dots,e^{i\theta_d}\}$ be the eigenvalues of $U'_k$. The eigenvalues of a unitary matrix lie on a unit circle $\mathbb{S}^1 \subset \mathcal{C}$. As shown in the Lemma 1.1 of~\cite{kumar2021efficient}, the following relation exist between the distribution of the eigenvalues of a general unitary matrix in an arc of size $\theta$, and the function $\delta(U)$:
\begin{equation}
     \delta(U'_k)^2 = \frac{1}{2} + \frac{1}{2} \cos{\theta}
\end{equation}
Where $\theta = \theta_j - \theta_k$ for pairs of eigenvalues $\{e^{i\theta_j},e^{i\theta_k}\}$. From the above equation we have:
\begin{equation}
     \theta = \theta_j - \theta_k = \arccos{(-1 + 2\epsilon - \frac{\epsilon^2}{2})} \approx \pi - \sqrt{\epsilon} + \dots
\end{equation}
Now we can use Theorem~\ref{th:wieand}. Let $N_{\theta}$ be a random variable that represents the number of eigenvalues in an arc of size $\theta$. Then we have the expectation value of this random variable for the given distribution where the $\theta = \pi - \epsilon'$, and $\epsilon' = \negl(\lambda)$, to be
\begin{equation}
     \mathbb{E}_d[N_{\theta}] = \frac{d\times\theta}{2\pi} = \frac{d}{2} - \frac{\epsilon'd}{2\pi}
\end{equation}
which is close to half of the total number of eigenvalues since the second term is always smaller than 1. This means that in the limit of large $d$, every diameter of the unit circle divide the circle into two areas that each on average includes half of the eigenvalues. Also the variance of the random variable $N_{\theta}$ will be: 
\begin{equation}
     Var(N_{\theta}) = \frac{1}{\pi^2}(\log(d) + 1 + \gamma + \log|2\sin(\frac{\pi - \epsilon'}{2})|) + o(1) \approx \frac{\log(d)}{\pi^2} + c' + o(1)
\end{equation}
where $\gamma \approx 0.577$ and $c' < 1$. Next we calculate the probability that for our given distribution, there are more than half of the eigenvalues in each half of the circle denoted by an arc or size $\pi - \epsilon'$. Using the Chernoff bound we have:
\begin{equation}
     Pr[N_{\pi - \epsilon'} - \mathbb{E}_d[N_{\pi - \epsilon'}]| > x\mathbb{E}_d[N_{\pi - \epsilon'}]] \leq e^{-\frac{x^2}{2+x}\mathbb{E}_d[N_{\pi - \epsilon'}]}
\end{equation}
Here we want the $x\mathbb{E}_d[N_{\pi - \epsilon'}]$ to be equal to $\frac{d}{2}$, so we have $x = \frac{d/2}{d/2 - d\epsilon'/2\pi} = \frac{1}{1 - \epsilon'/\pi}$ and since the $x$ is a small value the above inequality can be used. Substituting this into the above equation we will have:
\begin{equation}
     Pr[N_{\pi - \epsilon'} - \mathbb{E}_d[N_{\pi - \epsilon'}]| > \frac{d}{2}] \leq e^{-\frac{(\frac{1}{1 - \epsilon'/\pi})^2}{2+\frac{1}{1 - \epsilon'/\pi}}\times(d/2 - \epsilon'd/2\pi)} \approx e^{-d/6}
\end{equation}
since $\epsilon'$ is negligible. This shows that with a very high probability, on every half of the unit circle, there exist half of the eigenvalues of the random matrix from our specified distribution. We conclude eigenvalues of a random unitary from the distribution $\chi_{\Uset}$ are uniformly distributed on the unit circle. Let us denote this uniform distribution on $\mathbb{S}^1$ by $\nu$. In order to compare the distribution of $\chi_{\Uset}$ with the Haar measure, we use the empirical spectral measure introduces in the Appendix~\ref{ap:haar}.
We denote the empirical spectral distance of $\chi_{\Uset}$ as $\tilde{\mu}_{\chi}$ and for Haar measure we denote it as $\tilde{\mu}_{H}$. Since we have shown that the eigenvalues of matrices from $\chi_{\Uset}$ are distributed uniformly on $\mathbb{S}^1$, it is easy to see that $\mathbb{E}(\tilde{\mu}_{\chi}) = \nu$ and in the limit of large $d$ we have the convergence in probability $\tilde{\mu} \overset{d\rightarrow \infty}{\longrightarrow} \nu$. Now we use the Theorem~\ref{th:diaconis-shah} (Appendix~\ref{ap:haar}) that implies the convergence of the empirical spectral measure of the set of unitaries picked from Haar measure to $\nu$, in the limit of large $d$. Having the these two convergence and the properties of the limit we can conclude that the empirical spectral measure for $\chi_{\Uset}$ converges to the one for Haar measure. Then we look at Kolmogorov distance of the eigenvalues of these two distributions. We rely on the result given in~\cite{meckes2019sharp} that shows the Kolmogorov distance between the distributions of eigenvalues of random unitary matrices is given by $d_K(\mu, \nu) = \underset{0\leq \theta < 2\pi}{sup} |\frac{N_{\theta}}{d} - \frac{\theta}{2\pi}|$ and specifically for Haar measure it is bounded by 
\begin{equation}
     d_K(\mu_{H}, \nu) \leq c \frac{\log(d)}{d}
\end{equation}
Where $c > 0$ is a universal constant. Given the fact that for the specific value of $\theta$ for the distribution of $\chi_{\Uset}$ the Kolmogorov distance $d_K(\mu_{\chi}, \nu)$ is of the order $\frac{1}{d}$ which is negligible and using the triangle inequality for the Kolmogorov distance we have
\begin{equation}
\begin{split}
         d_K(\mu_{H}, \mu_{\chi}) & \leq d_K(\mu_{H}, \nu) + d_K(\nu, \mu_{\chi})\\
         & \leq c \frac{\log(d)}{d} + \negl(\lambda)\\
         & \leq \negl(\lambda)
\end{split}
\end{equation}
Thus the distribution of the eigenvalues of the random matrices of $\chi_{\Uset}$ is negligibly close to the Haar measure. Also for any randomly picked matrix from each of these distributions, the eigenvalues are fixed. As a result, the convergence between the distribution of the eigenvalues of matrices leads to the fact that in the limit of large $d$, $\chi_{\Uset}$ converges to the Haar measure on the unitary set. 

Finally, we show that a polynomial time quantum adversary given a polynomial query to each unknown unitary $U_k$ cannot distinguish any member of this family from Haar measure. This is straightforward since the two distributions are asymptotically  close. Thus we have:
\begin{equation}
    |\underset{k \leftarrow \K}{Pr}[\A^{U_k}(1^{\lambda})=1] - \underset{U \leftarrow \mu}{Pr}[\A^U(1^{\lambda})=1]| = \negl(\lambda).
\end{equation}
And we have shown that the set $\Uset_{\K}$ is a PRU.
\end{proof}

% %---------------------------Section-5----------------------------
\section{Efficient Quantum Identification Protocols Using Quantum Pseudorandomness}\label{sec:efficient-qpufid}
We discuss the application of some of our previously established results, in order to achieve an efficient quantum identification protocol. \emph{Identification} (also called Entity authentication), is a method to prove the identity of one party called \emph{prover} to another party called \emph{verifier}. In the quantum setting, either the verifier or the prover or both have some quantum capabilities and the properties of quantum mechanics are used to enhance the security of such protocols against powerful quantum adversaries. Here we focus on two quantum identification protocols, proposed in~\cite{doosti2020client}. These identification protocols are based on quantum PUFs and use their unforgeability property to achieve exponential security against QPT adversary (polynomial time in the learning phase, and unbounded during the quantum communication) in a polynomial number of rounds. Even though these protocols are resource-efficient in many aspects, one of the main practical challenges in implementing these protocols is the fact that in order to use the unforgeability property of quantum PUFs, the challenge states needs to be sampled at random from Haar measure. As a result, relying on Theorem~\ref{th:efficientuu-prs}, we show that these protocols can still achieve exponential security using PRS. This brings us one step closer to the practical implementations of quantum identification protocols with exponential security against powerful quantum adversaries and can lead to promising solutions to the problem of untrusted manufacturers. Furthermore, using Theorem~\ref{th:pru-uu}, we show that PRU can also be used as an alternative to Hardware assumptions in order to run these identification protocols.

First, we briefly mention the two protocols. The full description of both protocols can be found in Appendix~\ref{ap:qpufid-protocols}. 

\subsection{Identification protocol with high-resource verifier}
In this qPUF-based protocol, The verifier uses a database of the challenge-responses of the qPUF in order to identify a party who has access to the qPUF device. Since the verifier needs to run a quantum verification algorithm to check the response-state received by the prover, this protocol is referred to as \emph{high-resource verifier}. The protocol has three phases: \emph{Setup phase}, \emph{Identification phase} and \emph{Verification phase}.

In the \emph{Setup phase} the verifier who has physical access to the qPUF device samples some quantum challenges at random from the Haar measure and records the response state of the qPUF on a quantum database. Then publicly sends the qPUF over to the prover. At this stage, polynomial access to the qPUF device has been assumed for the adversary i.e. the quantum adversary can query the qPUF with a polynomial number of arbitrary quantum states attempting to learn the behaviour of the underlying unitary transformation.

In the \emph{Identification phase}, the verifier picks one of the challenges in the database at random and sends them to the prover over a public quantum channel, while an adversary has full control over the channel. Then the prover who acquires the qPUF obtains the correct response to the challenge states and sends it back through the same public quantum channel. 

Finally, in the \emph{Verification phase}, the verifier needs to verify the challenge state to confirm the identity of the other party. To this end, the verifier needs to run a quantum verification or test algorithm on the received response, and the $M$ copies of the correct response that is stored in the database. In~\cite{doosti2020client} the protocol has been proposed and analysed with both SWAP and GSWAP test as the verification algorithm.

The following theorem states the security or soundness of the above protocol with both of the tests:
\begin{theorem}[\textbf{Th. 2 and 4 in~\cite{doosti2020client}}]\label{th:hr-sound} Let qPUF be a selectively unforgeable\footnote{Universally unforgeable in our terminology} unitary PUF over $\Hild$. The success probability of the adversary to pass the {\normalfont SWAP}-test or {\normalfont GSWAP}-test verification of the high resource verifier protocol is at most $\epsilon$, given that there are $N$ different CRPs, each with $M$ copies. The $\epsilon$ is bounded as follows for each verification:

\begin{equation}
         \text{Pr}[\text{Ver accept}_{\A}] \leqslant \epsilon \quad \quad \epsilon_{\text{SWAP}} \approx \mathcal{O}(\frac{1}{2^{NM}}) \quad \quad \epsilon_{\text{GSWAP}} \approx \mathcal{O}\big(\frac{1}{(M+1)^{N}}\big)
\end{equation}
\end{theorem}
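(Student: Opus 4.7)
The strategy is to reduce the soundness of the protocol (both the SWAP and GSWAP variants) to the universal unforgeability of the underlying qPUF, combined with the explicit acceptance formulas for the two equality tests recalled in Section~\ref{sec:test}. Concretely, I plan to view every single challenge sent to the prover as an instance of the universal unforgeability game of Definition~\ref{def:qunf}, and then translate the resulting fidelity bound into a verifier acceptance probability via Eq.~(\ref{eq:swapaccept}) for the SWAP case and Eq.~(\ref{eq:gswap}) for the GSWAP case. Since in the high-resource verifier protocol the challenges are drawn independently (and fresh from the database), the $N$ verification instances can be treated as independent trials and the overall acceptance probability of a malicious prover is the product of the per-challenge acceptance probabilities.

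The first step is to fix an arbitrary QPT adversary $\A$ attacking the protocol and consider any single round: $\A$ receives a challenge state $\rho^*$ drawn from the challenge distribution (Haar in the original statement, PRS after applying Theorem~\ref{th:efficientuu-prs}) and must produce a forgery $\omega$. Universal unforgeability (Theorem~\ref{th:uu-unforge}, via the unknownness of the unitary) ensures that the fidelity $F(\omega, U\rho^* U^{\dagger})$ is upper bounded by a negligible function of $\lambda$, except with negligible probability. The second step is to plug this bound into the test formulas: for a single SWAP instance the acceptance probability is $\tfrac12 + \tfrac12 F^2 \leq \tfrac12 + \negl(\lambda)$, while for a single GSWAP instance using the $M$ stored copies it is $\tfrac{1}{M+1} + \tfrac{M}{M+1} F^2 \leq \tfrac{1}{M+1} + \negl(\lambda)$.

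The third step is to aggregate over the $N$ CRPs and $M$ copies. In the SWAP variant the verifier runs a total of $NM$ independent SWAP tests (one per copy of each of the $N$ responses), and the adversary wins only if every one of them accepts; multiplying the per-instance bound gives $\epsilon_{\text{SWAP}} \leq (\tfrac12 + \negl)^{NM} \approx \mathcal{O}(2^{-NM})$. In the GSWAP variant the $M$ copies of each response are consumed in a single GSWAP test, yielding $N$ independent GSWAP instances and hence $\epsilon_{\text{GSWAP}} \leq (\tfrac{1}{M+1} + \negl)^N \approx \mathcal{O}((M+1)^{-N})$, matching the claimed bounds.

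The main obstacle I anticipate is justifying the ``independence'' step cleanly, since a QPT adversary that holds coherent state across all $N$ rounds could in principle correlate its forgeries. The way I would handle this is to note that the universal unforgeability bound of Theorem~\ref{th:sel-qCM-fid} holds for each freshly sampled challenge regardless of the adversary's prior internal state (the learning phase is already absorbed into the definition), so conditioning on any particular history at round $i$ still yields a per-round acceptance bound of the form $\tfrac12 + \negl(\lambda)$ (resp. $\tfrac{1}{M+1} + \negl(\lambda)$); a straightforward induction on the rounds then delivers the product bound. The remainder of the argument is a bookkeeping exercise that reuses Eqs.~(\ref{eq:swapaccept}) and (\ref{eq:gswap}) verbatim, so no further technical difficulty is expected.
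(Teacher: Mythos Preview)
The paper does not actually prove Theorem~\ref{th:hr-sound}; it is quoted verbatim as ``Th.~2 and 4 in~\cite{doosti2020client}'' and used as a black box (see the proof of Theorem~\ref{th:eff-hr-sound}, which simply says ``we can directly use the result of~\cite{doosti2020client}''). So there is no in-paper proof to compare your proposal against.

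That said, your sketch is the natural reconstruction of the argument from the cited work: bound the per-round fidelity of any forgery via universal unforgeability, convert that to a per-test acceptance probability through Eq.~(\ref{eq:swapaccept}) or Eq.~(\ref{eq:gswap}), and then take the product over the $NM$ SWAP instances or the $N$ GSWAP instances. The one place to be careful is exactly the point you flag: the product step over rounds against a coherent adversary. Your proposed fix (condition on history and invoke the per-challenge unforgeability bound, which holds for any prior adversarial state since the learning phase is already absorbed into Game~\ref{game:uni-unf}) is the right idea, but to make it airtight you need the challenges across rounds to be \emph{fresh} Haar (or PRS) samples independent of the adversary's transcript, and you need to argue that the test outcomes in earlier rounds do not leak information that boosts later-round fidelity beyond $\negl(\lambda)$. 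In~\cite{doosti2020client} this is handled by the protocol design (distinct CRPs drawn from the database), so if you want a self-contained proof you should make that independence explicit rather than leave it as ``a straightforward induction''.
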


We now introduce a computationally efficient variation of this protocol which we denote as \emph{Efficient hr-verifier identification protocol}, by replacing the qPUF with any general universally unforgeable pseudorandom unitary and the Haar-random challenges with pseudorandom quantum states as follows:

\begin{enumerate}
    \item \emph{Setup phase}:
            \begin{enumerate}
                \item Verifier has access to a PRU family $\Uset \subseteq U(d) = \{U_i\}^\K_{i=1}$.
                \item Verifier samples at random $k \overset{\$}{\leftarrow} \K$ and selects $U_k$.
                \item Verifier has also access to a family of PRS $\{\ket{\phi_{k'}}\in S(\Hild)\}_{k'\in\K'}$ and randomly picks $Q \in \mathcal{O}(\text{poly} \log d)$ of them as the challenge states.
                \item Verifier queries the $U_k$ individually with each challenge $\ket{\phi_{k'}}$ a total of $M$ number of times to obtain $M$ copies of the response state $\ket{\phi^r_{k'}}$ and stores them in their local database $S$. 
                \item The verifier transfers the $U_k$ to Prover or securely sends the key $k$.
            \end{enumerate}
      \item \emph{Identification phase}:
            \begin{enumerate}
                \item Verifier uniformly selects a challenge labelled ($i \xleftarrow{\$} [Q]$), and sends the state $\ket{\phi_i}$ over a public quantum channel to Prover.
                \item Prover generates the output $\ket{\phi^p_i}$ by querying the challenge from $U_k$.
                \item The output state $\ket{\phi^p_i}$ is sent to Veifier over a public quantum channel.
                \item This procedure is repeated with the same or different states a total of $R \leq Q$ times.
            \end{enumerate}
        \item \emph{Verification phase}:
            \begin{enumerate}
                \item Verifier runs a quantum equality test algorithm on the received response from Bob and the $M$ copies of the correct response that she has in the database. This algorithm is run for all the $R$ CRP pairs.
                \item Verifier outputs `1' implying successful identification if the test algorithm returns `1' on all CRPs. Otherwise, outputs `0'. 
            \end{enumerate}
\end{enumerate}

We note that the protocol assumes that the adversary has only query access to the unitary $U_k$ from a PRU family as it is also assumed in the Definition~\ref{def:pru}. The following theorem which is a corollary of the previous results shows that the \emph{Efficient hr-verifier identification protocol} is also exponentially secure against QPT adversary with the same security bounds.

\begin{theorem}\label{th:eff-hr-sound} Let $U_k \in \Uset$ be unitary randomly selected from a PRU family $\Uset \subseteq U(d)$. The success probability of the adversary to pass the {\normalfont SWAP}-test or {\normalfont GSWAP}-test verification of the \emph{Efficient hr-verifier protocol} is at most $\epsilon$, given that there are $N$ different CRPs, each with $M$ copies. The $\epsilon$ is bounded as follows for each verification:
\begin{equation}
         \text{Pr}[\text{Ver accept}_{\A}] \leqslant \epsilon \quad \quad \epsilon_{\text{SWAP}} \approx \mathcal{O}(\frac{1}{2^{NM}}) \quad \quad \epsilon_{\text{GSWAP}} \approx \mathcal{O}\big(\frac{1}{(M+1)^{N}}\big)
\end{equation}
\end{theorem}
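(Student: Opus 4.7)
The plan is to reduce the security of the \emph{Efficient hr-verifier protocol} to the security of the original high-resource verifier protocol by chaining the results already established in the paper. First I would invoke Theorem~\ref{th:pru-uu}: since $\Uset$ is a PRU family, any $U_k$ sampled from it belongs to an unknown unitary family in the sense of Definition~\ref{def:uu}. Combined with Theorem~\ref{th:uu-unforge}, this gives that $U_k$ satisfies universal unforgeability, and then Theorem~\ref{th:efficientuu-prs} upgrades this to \emph{efficient} universal unforgeability, i.e.\ the same unforgeability holds when the challenges are drawn from a PRS family $\{\ket{\phi_{k'}}\}_{k'\in\K'}$ rather than from the Haar measure. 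Crucially, the adversary's query budget in the Learning phase of the protocol is polynomial, so the QPT constraint of Theorem~\ref{th:efficientuu-prs} is respected.

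Next I would follow the round-by-round analysis used to prove Theorem~\ref{th:hr-sound}, simply substituting the PRS/PRU primitives in place of the Haar-random ones. For each of the $N$ identification rounds the verifier applies an equality test between the adversary's forgery $\ket{\omega_i}$ and the $M$ stored copies of the genuine response $U_k\ket{\phi_i}$. Efficient universal unforgeability forces $F(U_k\ket{\phi_i},\ket{\omega_i})=\negl(\lambda)$, so the single-round acceptance probabilities from Equations~(\ref{eq:swapaccept}) and (\ref{eq:gswap}) collapse, up to a negligible additive term in $F^2$, to $(\tfrac12)^M$ for SWAP and $\tfrac{1}{M+1}$ for GSWAP.

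Because the $R\leq Q$ challenges are sampled independently from the PRS family and the adversary cannot correlate outcomes across rounds (the Learning-phase transcript is fixed before the Identification phase and each fresh PRS challenge is computationally indistinguishable from a fresh Haar state by Definition~\ref{def:prs}), the per-round bounds multiply and yield the claimed
\begin{equation*}
\epsilon_{\text{SWAP}}\approx\mathcal{O}\!\left(\tfrac{1}{2^{NM}}\right),\qquad \epsilon_{\text{GSWAP}}\approx\mathcal{O}\!\left(\tfrac{1}{(M+1)^{N}}\right).
\end{equation*}

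The main obstacle is making the independence-across-rounds argument rigorous despite the non-trivial base acceptance of the tests (the $\tfrac{1}{M+1}$ term in GSWAP), since naive composition could in principle allow a mild adversarial advantage to compound. I would handle this by a standard hybrid argument: replace the PRS challenges one round at a time with Haar-random states, using Theorem~\ref{th:efficientuu-prs} to bound the change per hybrid by $\negl(\lambda)$, and in the final hybrid one is exactly in the setting of Theorem~\ref{th:hr-sound}, whose bounds already account for the test baselines. Summing $N=\mathrm{poly}(\lambda)$ negligible gaps remains negligible, so the overall security bound is preserved up to a negligible additive term that is absorbed into the asymptotic $\mathcal{O}(\cdot)$.
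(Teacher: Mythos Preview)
Your proposal is correct and follows essentially the same approach as the paper: invoke Theorem~\ref{th:pru-uu} to get that the PRU family is an unknown-unitary family, then Theorem~\ref{th:efficientuu-prs} to obtain efficient universal unforgeability with PRS challenges, and finally appeal to the bounds of Theorem~\ref{th:hr-sound} from~\cite{doosti2020client}. The paper's proof is terser---it simply says ``directly use the result of~\cite{doosti2020client}'' once efficient universal unforgeability is established---whereas you spell out the per-round fidelity collapse and add a hybrid argument swapping PRS for Haar one round at a time; this extra care is not needed in the paper's version but is a legitimate way to make the reduction explicit, and it does not change the route.
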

\begin{proof}
First, we use Theorem~\ref{th:pru-uu} that shows $\Uset$ is also an unknown unitary family. Then we use Theorem~\ref{th:efficientuu-prs} that states any UU unitary satisfies efficient universal unforgeability which is the universal unforgeability given the states are picked from the PRS family. These two results suggest that the $U_k$ within the protocol satisfies the same notion of universal unforgeability that qPUF satisfies in the original protocol. Now we can directly use the result of~\cite{doosti2020client} using the SWAP and GSWAP test which will result in the same security bound in the number of rounds and copies of challenge-response pairs.
\end{proof}

\subsection{Identification protocol with low-resource verifier}
The second identification protocol also introduced in~\cite{doosti2020client}, enables a weak verifier to identify a quantum server prover in the network. The main idea behind this protocol is to delegate the equality testing to the prover so that the verifier can only run a classical verification algorithm. While it might seem this delegation could damage the security, it has been shown that the unforgeability property of qPUF combined with some trapification techniques used in the protocol leads to yet another exponentially secure qPUF-based identification protocol. In addition to enabling the clients to identify quantum servers on the clouds, this protocol has the advantage of one-way quantum communication compared to the previous protocol. We give a brief description of the protocol here. The complete protocol can be found in Appendix~\ref{ap:low-verifier}.

The \emph{Setup phase} is similar to the previous protocol, except that in addition to the challenge-response pairs, the verifier also generates some trap states. These trap states need to be orthogonal to the challenge subspace.

In the \emph{Identification phase}, the verifier sends two quantum states in every communication rounds. One of the states is the challenge state and the other one is either the correct response or the trap states with no overlap with the correct response. The verifier selects the correct or trap response at random with probability $1/2$\footnote{It has been shown that this probability can be generalised to arbitrary distribution.}. In other words in $N$ rounds, around $N/2$ positions are the states sent with their correct responses.

In the \emph{Verification phase}, the prover generates the valid response for every challenge by interacting them with the qPUF device and then runs a SWAP-test on the response produced by the qPUF and the other state sent by the verifier. The prover then sends the classical output of the test to the verifier who receives a classical $S_N = s_1,...,s_N$ where $s_i \in \{0,1\}$. Finally, the verifier runs a classical verification algorithm on this string that checks the expected result for the positions with the valid responses and also the statistics of the remaining positions. 

The protocol has been proven secure against both collective and coherent attacks with the following bound which we restate from the original paper:

\begin{theorem}[\textbf{Th. 6, 7 and 8 in~\cite{doosti2020client}}]\label{th:lr-sound} Let qPUF be a universally unforgeable unitary PUF over $\Hild$. The success probability of any QPT adversary $\A$ (using coherent or collective strategy) to pass the verification of the low resource verifier protocol is at most $\epsilon$, in $N$ rounds. The $\epsilon$ is of the order $\mathcal{O}(\frac{1}{2^{N}})$
\begin{equation}
         \text{Pr}[\text{Ver accept}_{\A}] \leqslant \epsilon \quad \quad \epsilon \approx \mathcal{O}(\frac{1}{2^{N}})
\end{equation}
\end{theorem}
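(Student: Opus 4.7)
The plan is to reduce the adversary's success probability round-by-round to the universal unforgeability guarantee on the qPUF (Theorem~\ref{th:sel-qCM-fid}/Theorem~\ref{th:uu-unforge}), combined with a statistical argument arising from the hidden trap pattern. The key observation is that, in each round $i$, an honest prover running a SWAP test between the qPUF response $\Ue\ket{\phi^c_i}$ and the verifier-supplied second state $\ket{\phi^s_i}$ produces outcome $0$ with probability $\frac{1}{2}+\frac{1}{2}|\mbraket{\phi^s_i}{\phi^r_i}|^2$, which equals $1$ in ``valid'' positions (where $\ket{\phi^s_i}=\ket{\phi^r_i}$) and exactly $\frac{1}{2}$ in ``trap'' positions (where $\ket{\phi^s_i}$ is orthogonal to $\ket{\phi^r_i}$). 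The verifier's classical check therefore has two ingredients: a deterministic test on the valid positions (the reported $s_i$ must be $0$), and a concentration test on the trap positions (the reported bits must look unbiased).

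My first step is to rule out collective attacks. Here the adversary applies the same (possibly qPUF-learned) quantum strategy $\mathcal{A}_i$ to each intercepted pair and outputs a bit $s_i$ independently. Universal unforgeability implies that, for challenges drawn uniformly from Haar (or, using Theorem~\ref{th:efficientuu-prs}, from a PRS family), the probability that $\mathcal{A}_i$ outputs a state of non-negligible fidelity with $\Ue\ket{\phi^c_i}$ is negligible in $\lambda$. I would combine this with the SWAP-test acceptance formula from equation~(\ref{eq:swapaccept}) to show that no single-round strategy can satisfy both the valid-position and trap-position checks with probability better than some constant $p<1$ bounded away from $1$; a natural target is $p\le \tfrac{3}{4}+\negl(\lambda)$, arising from the optimum over ``always output $0$'' versus ``guess uniformly.'' Taking products over the $N$ independent rounds then gives the $\mathcal{O}(2^{-N})$ bound claimed.

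The main obstacle is promoting this to coherent attacks, where the adversary entangles all $N$ intercepted pairs and performs one joint measurement that outputs the full bitstring $S_N$. The cleanest route I would pursue is a symmetrization argument: because the trap/valid labels are drawn i.i.d.\ from an unknown random pattern, and because the challenge states are i.i.d.\ samples from the (PRS or Haar) distribution, the verifier's decision function is invariant under permutations of the rounds. This lets me apply a post-selection or quantum de Finetti-type reduction that bounds the coherent success probability by (at most a polynomial factor times) the collective success probability. Care is needed because the adversary's ancilla from the learning phase is classical-quantum correlated with the qPUF, so I would first condition on that register, apply universal unforgeability conditionally on each fixed classical transcript, and then invoke symmetry in the challenge register. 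Once the reduction to the collective bound is complete, the exponential decay $\mathcal{O}(2^{-N})$ propagates, and combining both cases yields the stated theorem.

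Finally, I would confirm that the verification algorithm on the classical string $S_N$ can simultaneously check valid positions deterministically and trap positions via a tight binomial concentration (Hoeffding or Chernoff), so that there is no trade-off the adversary can exploit by biasing one check to help the other; this joint check is what ultimately produces the single $\mathcal{O}(2^{-N})$ exponent rather than two competing bounds.
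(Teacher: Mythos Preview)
The paper does not prove Theorem~\ref{th:lr-sound}. It is explicitly labelled ``Th.~6, 7 and 8 in~\cite{doosti2020client}'' and is simply restated from that reference; no proof appears in the present paper. So there is no ``paper's own proof'' to compare your proposal against.

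That said, the proof of the subsequent Theorem~\ref{th:eff-lr-sound} reveals the intended decomposition of the cited result, and it differs from yours. There the argument is split as
\[
\text{Pr}[\text{Ver accept}_{\A}] \leq \text{Pr}[\text{Ver accept}_{\text{Classical Adv}}] + \mathcal{O}(2^{-N}),
\]
i.e.\ one first bounds a purely classical adversary against the \texttt{cVer} algorithm (Theorem~6 of~\cite{doosti2020client}), and then separately bounds the \emph{extra} advantage a QPT adversary gains from qPUF side-information, phrased as the probability $\text{Pr}[b \leftarrow \Lambda_{\A}] \le \tfrac{1}{2} + \mathcal{O}(2^{-N})$ of distinguishing trap positions from valid ones (Theorems~7 and~8 there, for collective and coherent attacks respectively). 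Universal unforgeability enters only in this second ``trap-guessing'' step: because the adversary cannot produce anything close to $\Ue\ket{\phi^c_i}$, the second state in each pair looks essentially uncorrelated with the hidden bit $b$, so the label remains hidden.

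Your proposal instead tries to get a per-round constant success bound and multiply over rounds, then lift to coherent attacks via a de Finetti/symmetrization reduction. This is a plausible alternative route in spirit, but two points deserve caution. First, the reduction you sketch (``coherent $\le$ poly $\times$ collective'') would typically introduce a polynomial-in-$N$ prefactor, and you would need to check it does not spoil the clean $\mathcal{O}(2^{-N})$ exponent; the cited paper apparently handles the coherent case directly rather than via de Finetti. Second, your per-round constant ``$p \le \tfrac{3}{4} + \negl(\lambda)$'' is heuristic: the actual \texttt{cVer} test is a joint test on the two halves of the string (deterministic on valid positions, concentration on trap positions), not a conjunction of per-round events, so a straight product argument does not immediately apply. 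The classical-plus-advantage decomposition used in~\cite{doosti2020client} sidesteps both issues.
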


Similar to the previous protocol, we introduce an efficient version of this protocol by replacing the Haar-random states with PRS and the qPUF with an unknown unitary selected from a PRU family. We denote this protocol as \emph{Efficient lr-verifier identification protocol} and it is described as follows:

\begin{enumerate}
    \item \emph{Setup phase}:
            \begin{enumerate}
                \item Verifier has access to a PRU family $\Uset \subseteq U(d) = \{U_i\}^\K_{i=1}$.
                \item Verifier samples at random $k \overset{\$}{\leftarrow} \K$ and selects $U_k$.
                \item Verifier has also access to a family of PRS $\{\ket{\phi_{k'}}\in S(\Hild)\}_{k'\in\K'}$ and randomly picks $Q \in \mathcal{O}(\text{poly} \log d)$ of them as the challenge states.
                \item Verifier queries the $U_k$ individually with each challenge $\ket{\phi_{k'}}$ a total of $M$ number of times to obtain $M$ copies of the response state $\ket{\phi^r_{k'}}$ and stores them in their local database $S$. 
                \item Verifier selects states $\ket{\phi^{\perp}}$ orthogonal to the selected challenge's subspace and queries the $U_k$ with them to obtain the trap states labelled as $\ket{\phi^{\text{trap}}}$. The unitary property ensures that $\langle \phi^{\text{trap}}|\phi_{k'}^r\rangle = 0$.
                \item The verifier transfers the $U_k$ to Prover or securely sends the key $k$.
            \end{enumerate}
      \item \emph{Identification phase}:
            \begin{enumerate}
                \item Verifier randomly selects a subset $N \subseteq \K'$ different challenges from the database, and sends the state $\ket{\phi_i}$ over a public quantum channel to Prover.
                \item Verifier randomly selects $N/2$ positions, marks them $b = 1$ and sends the valid response states $\ket{\phi_i^1} = \ket{\phi_i^r}$ to Prover. On the remaining $N/2$ positions, marked as $b = 0$, Verifier sends the trap states $\ket{\phi_i^0} = \ket{\phi_i^{\text{trap}}}$.

            \end{enumerate}
        \item \emph{Verification phase}:
            \begin{enumerate}
                \item Prover queries $U_k$ with the challenge states to generate the response states $\ket{\phi^p_i}$ for all $i \in [N]$. 
                \item Prover performs a SWAP test between $\ket{\phi^p_i}$ and the response state $\ket{\phi_i^b}$ received from the Verifier. This algorithm is repeated for all the $N$ distinct challenges.   
                \item Prover labels the outcome of $N$ instances of the SWAP test algorithm by $s_i \in \{0,1\}$ and sends them over a classical channel to Prover.
                \item Verifier runs a classical verification algorithm \texttt{cVer($s_1,...,s_N$)}(as specified in~\cite{doosti2020client} and Appendix~\ref{ap:low-verifier}) and outputs `1' implying that identification has been successful and outputs `0' otherwise.
            \end{enumerate}
\end{enumerate}

Again using the previous proof techniques presented in~\cite{doosti2020client} and our results, we show that the \emph{Efficient lr-verifier identification protocol} satisfies exponential security against QPT adversary both under the coherent and collective attack models.

\begin{theorem}\label{th:eff-lr-sound} Let $U_k \in \Uset$ be unitary randomly selected from a PRU family $\Uset \subseteq U(d)$. The success probability of a QPT adversary $\A$ to pass the verification of the \emph{Efficient lr-verifier protocol} is at most $\epsilon$, in $N$ rounds. The $\epsilon$ is bounded as follows:
\begin{equation}
         \text{Pr}[\text{Ver accept}_{\A}] \leqslant \epsilon \quad \quad \epsilon \approx \mathcal{O}(\frac{1}{2^{N}})
\end{equation}
\end{theorem}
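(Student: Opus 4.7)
The plan is to mirror the structure of the proof of Theorem~\ref{th:eff-hr-sound} and lift the exponential security bound of Theorem~\ref{th:lr-sound} from the original qPUF-based low-resource-verifier protocol to its efficient variant by substituting the hardware/information-theoretic primitives with their pseudorandom counterparts in a modular fashion. First, I would invoke Theorem~\ref{th:pru-uu} to conclude that the PRU family $\Uset$ from which $U_k$ is drawn also constitutes a family of unknown unitaries in the sense of Definition~\ref{def:uu}. Then, by Theorem~\ref{th:efficientuu-prs}, the unitary $U_k$ satisfies efficient universal unforgeability, meaning that no QPT adversary can produce a valid forgery on a challenge drawn from the PRS family $\{\ket{\phi_{k'}}\}_{k'\in\K'}$ except with negligible probability in $\lambda$.

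Next, I would re-examine the security reduction used to establish Theorem~\ref{th:lr-sound} in~\cite{doosti2020client}. The key observation is that this reduction uses the unforgeability of the qPUF only as a black box: it bounds, round by round, the adversary's advantage in passing the SWAP verification at a ``valid response'' position by the winning probability of the corresponding universal unforgeability game, and then combines these per-round bounds using the trap statistics to yield the $\mathcal{O}(1/2^N)$ bound. Since efficient universal unforgeability provides exactly the same guarantee when challenges are PRS rather than Haar (with only negligible loss), each per-round bound transfers verbatim, and the classical verification algorithm \texttt{cVer} behaves identically on the received bit-string $s_1,\dots,s_N$.

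I would then verify that the trap construction carries over: because the trap states are generated by applying $U_k$ to states orthogonal to the span of the selected challenges, unitarity alone ensures $\langle \phi^{\text{trap}}|\phi_{k'}^r\rangle = 0$, irrespective of whether challenges were sampled from Haar or PRS. Hence the trapification argument, which relies only on orthogonality and the unforgeability reduction, applies unchanged. The same reasoning handles both the collective and the coherent attack models, since in the original analysis these models are treated by invoking the unforgeability bound on the joint state rather than on each round independently.

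The main obstacle I anticipate is a careful accounting of the PRS-to-Haar replacement in the coherent case: when the adversary acts jointly across $N$ rounds, the distinguishability gap introduced by swapping each Haar challenge for a PRS challenge could in principle accumulate. I would resolve this with a standard hybrid argument, replacing one challenge at a time with a Haar-random state and bounding the total drift by $N\cdot \negl(\lambda)$, which remains negligible in $\lambda$ since $N$ is polynomial. The exponential $\mathcal{O}(1/2^N)$ term is untouched by this computational loss, and combining it with the negligible hybrid term yields the stated bound.
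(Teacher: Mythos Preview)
Your proposal is correct and follows essentially the same approach as the paper: invoke Theorem~\ref{th:pru-uu} to get that $\Uset$ is a UU family, then Theorem~\ref{th:efficientuu-prs} to obtain efficient universal unforgeability with PRS challenges, and finally lift the $\mathcal{O}(1/2^N)$ bound directly from Theorems~6--8 of~\cite{doosti2020client} since those results use unforgeability only as a black box. Your additional hybrid-argument discussion for the coherent case is more explicit than what the paper provides, but the underlying strategy is identical.
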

\begin{proof}
First, we specify that we can directly use the result of the Theorem (6) in~\cite{doosti2020client} which bounds the success probability of classical adversary in passing the classical verification algorithm. Then the success probability against a quantum adversary with the collective and coherent attack is defined as the advantage of the quantum adversary over that classical adversary in guessing the trap states, using all the side information obtained from the $U_k$ in the learning phase. 
Using Theorem~\ref{th:pru-uu} we have that $\Uset$ is also an unknown unitary family. Then we use Theorem~\ref{th:efficientuu-prs} that states any UU unitary satisfies efficient universal unforgeability which is the universal unforgeability given the states are picked from the PRS family. Next, the conditions of Theorems (7) and (8) in~\cite{doosti2020client} are satisfied and we can directly use those results that state the success probability of such adversaries in guessing the traps is bounded as follows:
\begin{equation}
    \text{Pr}[b \leftarrow \Lambda_{\A}] \leqslant \frac{1}{2} + \mathcal{O}(2^{-N})
\end{equation}
Where $\Lambda_{\A}$ denotes any map that $\A$ uses to distinguish the traps states.
Finally, putting all the above results together we have
\begin{equation}
\text{Pr}[\text{Ver accept}_{\A}]  \leqslant \epsilon = \text{Pr}[\text{Ver accept}_{\text{Classical Adv}}] + \mathcal{O}(2^{-N}) \approx \mathcal{O}(2^{-N}) 
\end{equation}
This concludes the soundness proof of \emph{Efficient lr-verifier protocol}.
\end{proof}

%---------------------------Section-6----------------------------
\section{Conclusion and Discussion}\label{sec:conclusion}
We have explored the relationship between quantum pseudorandomness and quantum hardware assumptions such as quantum physical unclonability. Since one of the main cryptographic properties of quantum physical unclonable functions is the notion of universal unforgeability, we have inspected whether quantum pseudorandomness would be enough as a challenge sampling requirement, to achieve this level of unforgeability. We have formally proved that the answer to this question is positive. This result can improve the practicality of qPUF-based constructions and protocols since it will replace the requirement of Haar-randomness on the challenge states, which is resourceful and experimentally challenging.

We have also established the link between the notions of unknown unitary and PRU. We proved that any family of PRU is also a family of unknown unitaries and, hence they could be a potential candidate for the construction of qPUF devices. This result can be a complement to the result of~\cite{kumar2021efficient} where they show t-designs can also satisfy a similar notion, namely practical unknownnes, which leads to an efficient proposal for constructing quantum PUFs. 

Then we also looked at the problem of generating pseudorandom quantum states from hardware assumptions. Our results show that different physical assumptions that were proposed in the context of PUFs, such as uniqueness or practical unknownnes, can also imply quantum pseudorandomness. This is of theoretical interest as it shows an alternative way for achieving quantum pseudorandomness which is different from current approaches based on post-quantum and computational assumptions. Apart from the cryptography perspective, having a different set of assumptions for PRS and PRU can find potential applications in physics~\cite{bouland2019computational}. Another interesting future direction would be to further explore the relationship between unclonability and quantum pseudorandomness that has been initially proposed in~\cite{ji2018pseudorandom}, relying upon our new results.

Finally, to show the consequence of our result in the practicality of qPUF-based protocols, we have revisited the qPUF-based identification protocols proposed in~\cite{doosti2020client} using PRS and we have shown that this more efficient version of these protocols can achieve the same security guarantee as they were initially proposed. 

An important note that needs to be emphasised regarding these protocols is that they assume during the transition stage (e) in the setup phase, the adversary has only query access to the device. If the PRU is realised by hardware assumptions such as practical unknownness as showed in Theorem~(\ref{th:practicaluu-pru}), then this requirement is satisfied by assumption. Otherwise, the unitary circuit of the $U_k$ selected from the PRU needs to be obfuscated or hidden from the adversary \cite{alagic2016quantum,brakerski2020quantum}. This problem mainly arises if the PRU is built from classical PRF and hence the underlying circuit is publicly known. Another alternative way to go around this problem is that only the key index of the selected unitary is sent in a secure way to the other party who is running the selected unitary locally. Thus the above protocol works naturally with hardware assumptions that imply the unitary transformation is unknown. Nevertheless, using PRU constructions with known unitary circuits has the advantage that removes the quantum memory requirements to store the response pairs, as one can presumably compute the response state having access to the circuit and only store the related classical parameters.

Yet another interesting future direction would be to establish concrete bounds on the randomness and pseudorandomness of unitary families given different degrees of uniqueness or distinguishability (not negligibly close to perfect distinguishability), in terms of the diamond norm. This is also related to the study of t-design unitaries and the toolkit from random matrix theory, which we used in this paper can be potentially beneficial and powerful tools for this study.

\section*{Acknowledgement}
We acknowledge the UK Engineering and Physical Sciences Research Council grant  EP/N003829/1 and as well as Innovate UK  funded project called AirQKD : product of a UK industry pipeline, grant number 106178.

\section*{Competing Interests}
The authors declare no competing interest.

\medskip
 \bibliographystyle{unsrt}
\bibliography{efficient-qunforgeability}

\appendix
\section*{Appendix}
%%%%%%%%%% Haar %%%%%%%%%%%%%%%
\section{Haar Measure Group and Properties of Random Matrices}\label{ap:haar}
A Haar measure is a non-zero measure on any locally compact group $G$ such that $\mu: G \rightarrow [0,\infty)$ 
% \Kaushik{The range should be $[0,\infty)$}
such that for all $X \subset G$ and $x \in G$ we have the following translation invariance property for $\mu(X) = \int_{x \in G} d\mu(x)$:
\begin{equation}
    \mu(xX) = \mu(Xx) = \mu(X)
\end{equation}
In particular, the Haar measure $d\mu(U)$ can be defined for a unitary group $U(d)$. Sampling unitaries from Haar measure on $U(d)$ is equivalent to geometrically uniform sampling from unitary groups of a certain dimension. In practice, however, sampling from the Haar measure requires exponential (in $d$) resources~\cite{knill1995approximation}.

In this work, we are interested in characterising the properties of the eigenvalues of Haar-random unitary matrices and their distributions. To this end, we introduce the following important results from the random matrix theory.
The first result that we need, is known as \emph{Weyl density formula} or \emph{Weyl integration formula}, and is stated as follows:

\begin{lemma}[Weyl integration formula on $\mathbb{U}(n)$~\cite{meckes2019random}]
Let $\{e^{i\theta_j}\}^n_{j=1}$ be the eigenvalues of $n\times n$ random unitary matrix. The unordered eigenvalues of a random unitary matrix have the following eigenvalue density
\begin{equation}
    \frac{1}{n!(2\pi)^n} \prod_{1\leq j < k \leq n} |e^{i\theta_j} - e^{i\theta_k}|^2
\end{equation}
with respect to $d\theta_1\dots d\theta_n$ on $(2\pi)^n$. That is, for any $g:\mathbb{U}(n)\rightarrow R$ with
\begin{equation*}
    g(U) = g(VUV^*) \quad \text{for any } U,V \in \mathbb{U}(n), 
\end{equation*}
(i.e., $g$ is a class function), if $U$ is Haar-distributed on $\mathbb{U}(n)$, then
\begin{equation}
    \mathbb{E}[g(U)] = \frac{1}{n!(2\pi)^n} \int_{[0,2\pi)^n} \tilde{g}(\theta_1,\dots,\theta_n) \prod_{1\leq j < k \leq n} |e^{i\theta_j} - e^{i\theta_k}|^2 d\theta_1\dots d\theta_n
\end{equation}
where $\tilde{g}: [0,2\pi)^n \rightarrow \mathbb{R}$ is the (necessarily symmetric) expression of $g(U)$ as a
function of the eigenvalues of $U$.
% \Niraj{Describe in a line what is eigenvalue density}.
\end{lemma}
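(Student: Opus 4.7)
The plan is to prove the Weyl integration formula by changing variables from the standard Haar parameterization of $\mathbb{U}(n)$ to the spectral parameterization induced by the decomposition $U = V D_\theta V^*$, and then computing the Jacobian of this change of variables. Every unitary matrix admits such a decomposition, so the map
\begin{equation*}
    \Phi : [0,2\pi)^n \times (\mathbb{U}(n)/\mathbb{T}^n) \longrightarrow \mathbb{U}(n), \qquad (\theta,[V]) \longmapsto V D_\theta V^*
\end{equation*}
is surjective. Here $\mathbb{T}^n$ is the diagonal torus, which is quotiented out because it commutes with $D_\theta$ and therefore does not change $V D_\theta V^*$. Generically, $\Phi$ is $n!$-to-$1$, because permuting the eigenvalues (and adjusting $V$ by the corresponding permutation matrix) gives the same unitary. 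This accounts for the $n!$ in the denominator, while the $(2\pi)^n$ arises because we integrate each angular variable over $[0,2\pi)$ rather than over the circle normalized to unit total mass.

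Next I would compute the Jacobian of $\Phi$ at a point where the fiber structure is transparent, namely at $V=I$, so $U_0=D_\theta$. Using left-translation invariance of the Haar measure, it suffices to compute at the identity in the $V$ factor, and identify tangent vectors with anti-Hermitian matrices $A = -A^*$. Split $A = A_{\mathrm{diag}} + A_{\mathrm{off}}$. The diagonal anti-Hermitian part corresponds precisely to infinitesimal variation of the angles $\theta_j$, contributing the flat measure $d\theta_1 \cdots d\theta_n$. For the off-diagonal part, conjugation perturbation gives
\begin{equation*}
    \left. \tfrac{d}{dt}\right|_{t=0} e^{tA_{\mathrm{off}}} D_\theta\, e^{-tA_{\mathrm{off}}} = [A_{\mathrm{off}}, D_\theta],
\end{equation*}
and the $(j,k)$ entry of this commutator equals $A_{jk}(e^{i\theta_k} - e^{i\theta_j})$. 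So the differential of $\Phi$ acts diagonally on the off-diagonal modes by multiplication by $(e^{i\theta_k}-e^{i\theta_j})$. Since each off-diagonal complex entry $A_{jk}$ with $j<k$ carries two real dimensions and the entry with $k<j$ is determined by anti-Hermiticity, the Jacobian determinant, expressed as the ratio of the pulled-back Haar measure to the product of the natural measure on $[0,2\pi)^n$ and the $\mathbb{U}(n)/\mathbb{T}^n$ quotient measure, is precisely
\begin{equation*}
    \prod_{1\leq j < k \leq n} \bigl|e^{i\theta_j} - e^{i\theta_k}\bigr|^2.
\end{equation*}

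Putting the pieces together, for any $g$ that is a class function, $\tilde g(\theta_1,\ldots,\theta_n)$ does not depend on the $V$-coordinate, so integration over $\mathbb{U}(n)/\mathbb{T}^n$ factors out and contributes a known constant; combining this constant with the $n!$-to-$1$ multiplicity of $\Phi$ and the $(2\pi)^n$ normalization of the torus yields the prefactor $\frac{1}{n!(2\pi)^n}$, and the formula follows after verifying that both sides integrate $g\equiv 1$ to $1$ (which fixes the constants unambiguously). The main obstacle is the careful bookkeeping at the final step: one must verify that the local Jacobian computation at $U_0 = D_\theta$ actually assembles into a global density on $\mathbb{U}(n)$, which requires checking that $\Phi$ is a local diffeomorphism away from the measure-zero set where eigenvalues coincide (precisely where the Vandermonde factor vanishes), and that the chosen normalizations of the Haar measures on $\mathbb{U}(n)$, $\mathbb{T}^n$, and $\mathbb{U}(n)/\mathbb{T}^n$ are mutually consistent. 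Once those normalization constants are pinned down by evaluating both sides on $g\equiv 1$, the identity in the statement follows directly.
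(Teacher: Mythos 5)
The paper does not prove this lemma at all: it is imported verbatim from the cited reference \cite{meckes2019random} and used as a black box, so there is no in-paper argument to compare against. Your sketch is the standard Lie-theoretic proof of the Weyl integration formula --- the same change of variables $U = VD_\theta V^*$, the $n!$-to-$1$ fiber count, and the Jacobian computed from the commutator $[A_{\mathrm{off}}, D_\theta]$ --- and it is essentially the proof given in that reference, so the approach is sound. Two small points of care: to compare the image of the differential with the Haar (bi-invariant) volume you must translate the tangent vector $[A_{\mathrm{off}},D_\theta]$ at $D_\theta$ back to the identity, giving entries $A_{jk}(e^{i(\theta_k-\theta_j)}-1)$ whose modulus is still $|e^{i\theta_j}-e^{i\theta_k}|$, so the determinant is unchanged but the intermediate claim as you state it is off by a harmless unitary factor; and the ``global assembly'' step you flag is genuinely where the work lives (checking $\Phi$ is a local diffeomorphism off the locus of repeated eigenvalues and that the three normalizations are consistent), though fixing the constant by evaluating on $g\equiv 1$ is the standard and correct way to close it.
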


As discussed in~\cite{meckes2019random}, one consequence of the above lemma is that the eigenvalues of random unitary matrices want to spread out. For any given pair of eigenvalues labelled by $(j,k)$, $|e^{i\theta_j} - e^{i\theta_k}|^2$ is zero if $\theta_j = \theta_k$, and is 4 if $\theta_j = \theta_k + \pi$ (and in that neighborhood if they are roughly antipodal). This produces the
effect alternatively known as ``eigenvalue repulsion".

Another important tool in the study of the eigenvalues of random matrices is the \emph{empirical spectral measure} defined as,
\begin{equation}
     \tilde{\mu} = \frac{1}{n}\sum^n_{j=1} \delta_{e^{i\theta_j}}
\end{equation}
where $e^{i\theta_j}$ are the eigenvalues of the unitary matrix and $\delta$ is the probability distribution function over the eigenvalues. The empirical spectral measure is a probability measure to encode the ensemble of eigenvalues which puts equal mass at each of the eigenvalues of $U$. This encoding is very useful for representing the spreading of the eigenvalues on the complex unit circle denoted by $\mathbb{S}^1 \subseteq \mathbb{C}$.

Next, we need the following important theorem by Diaconis-Shashahani~\cite{diaconis1994eigenvalues}, that shows the convergence of the eigenvalues of the Haar-random matrices to the uniform distribution over the unit circle:

\begin{theorem}\label{th:diaconis-shah}
Let $U$ be uniformly chosen from Haar-measure in $U(d)$, Let $\nu$ be the uniform distribution on $\mathbb{S}^1$. Then as $d \rightarrow \infty$, the $\tilde{\mu}_U$ converges, weakly in probability, to $\nu$:
\begin{equation}
    \tilde{\mu}_U \overset{d\rightarrow \infty}{\longrightarrow} \nu
\end{equation}
\end{theorem}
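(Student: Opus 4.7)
The plan is to prove weak convergence in probability by the method of moments, reducing the statement to the asymptotic vanishing of normalised traces of powers of $U$. By the Stone-Weierstrass theorem, the trigonometric polynomials spanned by $\{z^k : k \in \mathbb{Z}\}$ are dense in $C(\mathbb{S}^1)$, so it suffices to show that for each fixed $k \in \mathbb{Z}\setminus\{0\}$,
\[
\int_{\mathbb{S}^1} z^k \, d\tilde{\mu}_U \;=\; \frac{1}{d}\,\mathrm{Tr}(U^k) \;\xrightarrow{P}\; 0 \;=\; \int_{\mathbb{S}^1} z^k \, d\nu
\]
as $d \to \infty$. The case $k=0$ is trivial since both sides equal $1$, and by Hermitian symmetry $\mathrm{Tr}(U^{-k}) = \overline{\mathrm{Tr}(U^k)}$ it is enough to treat $k \geq 1$.

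Next, I would establish the moment bounds
\[
\mathbb{E}\!\left[\mathrm{Tr}(U^k)\right] = 0 \quad \text{and} \quad \mathbb{E}\!\left[|\mathrm{Tr}(U^k)|^2\right] = \min(k,d)
\]
for every $k\geq 1$. These can be derived from the Weyl integration formula recalled in the appendix by expanding the power sum $p_k(e^{i\theta_1},\dots,e^{i\theta_d}) = \sum_j e^{ik\theta_j}$ in the Schur polynomial basis and invoking Schur orthogonality $\int \chi_\lambda(U)\overline{\chi_\mu(U)}\,dU = \delta_{\lambda,\mu}$ for the irreducible polynomial characters of $U(d)$. Concretely, the Murnaghan-Nakayama rule writes $p_k$ as a signed sum of Schur polynomials indexed by hook partitions of $k$, none of which is the trivial representation, so the expectation vanishes. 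Expanding $|p_k|^2$ and collecting pairs of matching irreducible components then counts hooks, giving exactly $k$ as soon as $d \geq k$.

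Third, I apply Chebyshev's inequality to upgrade the second-moment bound to convergence in probability:
\[
\mathrm{Pr}\!\left[\,\left|\tfrac{1}{d}\mathrm{Tr}(U^k)\right| > \epsilon\right] \;\leq\; \frac{\mathbb{E}\!\left[|\mathrm{Tr}(U^k)|^2\right]}{d^2\epsilon^2} \;\leq\; \frac{k}{d^2\epsilon^2} \;\xrightarrow{d\to\infty}\; 0
\]
for every fixed $k$ and $\epsilon > 0$. Combining this with Stone-Weierstrass density through a standard three-$\epsilon$ approximation argument yields $\int f\,d\tilde{\mu}_U \to \int f\,d\nu$ in probability for every $f \in C(\mathbb{S}^1)$, which is precisely the stated weak convergence in probability.

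The main obstacle is the variance identity $\mathbb{E}[|\mathrm{Tr}(U^k)|^2] = \min(k,d)$. Done directly from the Weyl integration formula, it requires a somewhat technical expansion of the Vandermonde squared and a delicate matching of monomials; the cleaner representation-theoretic route through Schur functions bypasses this, at the cost of importing the character theory of $U(d)$. Either way, this moment identity --- originally established by Diaconis and Shahshahani --- is the technical core of the proof, and once it is in hand the passage from $L^2$-smallness of traces to weak convergence via Chebyshev and Stone-Weierstrass is entirely routine.
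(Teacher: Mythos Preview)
The paper does not supply a proof of this theorem: it is quoted in Appendix~\ref{ap:haar} as a background result due to Diaconis and Shahshahani~\cite{diaconis1994eigenvalues}, with no argument given. There is therefore nothing in the paper to compare your proposal against.

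That said, your sketch is correct and is essentially the original Diaconis--Shahshahani method of moments: reduce weak convergence to the vanishing in probability of the normalised power-sum traces $\tfrac{1}{d}\mathrm{Tr}(U^k)$, compute $\mathbb{E}[\mathrm{Tr}(U^k)]=0$ and $\mathbb{E}[|\mathrm{Tr}(U^k)|^2]=\min(k,d)$ via Schur orthogonality (or directly from the Weyl integration formula), apply Chebyshev, and then pass from trigonometric polynomials to all of $C(\mathbb{S}^1)$ by density. One small point worth making explicit in the final step: because $\tilde{\mu}_U$ and $\nu$ are both probability measures, the approximation $\|f-p\|_\infty<\epsilon$ by a trigonometric polynomial $p$ gives $|\int f\,d\tilde{\mu}_U-\int p\,d\tilde{\mu}_U|<\epsilon$ and likewise for $\nu$, uniformly in the randomness of $U$; this is what makes the three-$\epsilon$ argument go through in probability rather than merely in expectation.
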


Finally, we use the following result by Wieand~\cite{wieand2002eigenvalue}:

\begin{theorem}\label{th:wieand}
Let $U$ be a unitary matrix chosen from Haar measure in $U(d)$, and let $\{e^{i\theta_1},\dots,e^{i\theta_d}\}$ be the eigenvalues of $U$. Fix a finite number of intervals on the unit circle $I_1 =(e^{i\theta_{1j}} , e^{i\theta_{1l}}),\dots,I_m =(e^{i\theta_{mj}} , e^{i\theta_{ml}})$. Define the random variables $N_{\theta_1},\dots,N_{\theta_m}$ to be the number of eigenvalues in each arc defined by the intervals. In the limit of large $d$, the mean and variance of $N_{\theta_k}$ are as follows:
\begin{equation}
    \mathbb{E}_d[N_{\theta_k}] = \frac{d(\theta_{kj} - \theta_{kl})}{2\pi}
\end{equation}
and 
\begin{equation}
     Var(N_{\theta_k}) = \frac{1}{\pi^2}(\log(d) + 1 + \gamma + \log|2\sin(\frac{\theta_{kj} - \theta_{kl}}{2})|) + o(1).
\end{equation}
where $\gamma \approx 0.577$ is the Euler's constant.
\end{theorem}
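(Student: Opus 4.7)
The plan is to prove this via the determinantal structure of the Circular Unitary Ensemble (CUE). The mean computation is elementary, while the variance requires careful asymptotic analysis of an integral of a squared Dirichlet-type kernel.

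\textbf{Mean computation.} First, I would invoke the Weyl integration formula, already stated above, to write the joint density of the unordered eigenangles as $\frac{1}{d!(2\pi)^d}\prod_{j<k}|e^{i\theta_j}-e^{i\theta_k}|^2$. Since this density is invariant under the shift $\theta_j\mapsto\theta_j+\alpha$ for any fixed $\alpha$, the marginal distribution of any single eigenangle is uniform on $[0,2\pi)$. Writing $N_{\theta_k}=\sum_{j=1}^d \mathbf{1}_{I_k}(e^{i\theta_j})$ and using linearity of expectation gives $\mathbb{E}_d[N_{\theta_k}]=d\cdot \Pr[e^{i\theta_1}\in I_k]=d(\theta_{kj}-\theta_{kl})/(2\pi)$ immediately.

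\textbf{Variance computation via DPP structure.} The main work lies in the variance. The key structural fact I would invoke is that the CUE eigenangles form a determinantal point process on $[0,2\pi)$ with kernel
\begin{equation}
K_d(\theta,\phi)=\frac{1}{2\pi}\sum_{k=0}^{d-1}e^{ik(\theta-\phi)}=\frac{1}{2\pi}\cdot\frac{\sin(d(\theta-\phi)/2)}{\sin((\theta-\phi)/2)}e^{i(d-1)(\theta-\phi)/2},
\end{equation}
which follows from the Vandermonde square in the Weyl formula after expressing it as a determinant of exponentials. For any such DPP, the standard identity for counting variances applies:
\begin{equation}
\mathrm{Var}(N_I)=\int_I K_d(\theta,\theta)\,d\theta-\int_I\!\int_I |K_d(\theta,\phi)|^2\,d\theta\,d\phi=\int_I\!\int_{I^c}|K_d(\theta,\phi)|^2\,d\theta\,d\phi,
\end{equation}
where the second equality uses $\int K_d(\theta,\phi)K_d(\phi,\theta)\,d\phi=K_d(\theta,\theta)$ (the reproducing property of the projection kernel). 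Substituting the Dirichlet-kernel expression reduces the variance to
\begin{equation}
\mathrm{Var}(N_{\theta_k})=\frac{1}{4\pi^2}\int_I\!\int_{I^c}\frac{\sin^2(d(\theta-\phi)/2)}{\sin^2((\theta-\phi)/2)}\,d\theta\,d\phi.
\end{equation}

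\textbf{Extracting the asymptotics.} The hard part is the asymptotic evaluation of this double integral and pinning down the constant that contains Euler's $\gamma$. I would change variables to $u=\theta-\phi$ and split the inner integral into a short-distance piece, where $\sin(u/2)\approx u/2$ and the integrand behaves like $\sin^2(du/2)/(u/2)^2$, and a long-distance piece, where the denominator $\sin^2(u/2)$ sees the endpoints of the arc $I$. The short-distance piece, via the standard identity $\int_0^T \sin^2(dx)/x^2\,dx=\frac{\pi d}{2}-\tfrac12 \mathrm{Si}(2dT)+\text{lower order}$ together with $\int_0^\infty \sin^2(x)/x^2\,dx=\pi/2$, produces the $\log d$ growth with coefficient $1/\pi^2$ and contributes $1+\gamma$ after invoking the classical asymptotic $\sum_{k=1}^d 1/k=\log d+\gamma+o(1)$ (or equivalently Mertens-type estimates on $\int \sin^2$). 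The long-distance piece contributes the endpoint-dependent constant $\log|2\sin((\theta_{kj}-\theta_{kl})/2)|/\pi^2$ from integrating $1/\sin^2(u/2)$ near the boundary of the arc. Summing and tracking the $o(1)$ errors yields the claimed formula
\begin{equation}
\mathrm{Var}(N_{\theta_k})=\frac{1}{\pi^2}\bigl(\log d+1+\gamma+\log|2\sin((\theta_{kj}-\theta_{kl})/2)|\bigr)+o(1).
\end{equation}

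\textbf{Main obstacle.} The main obstacle is the bookkeeping in the second step: isolating the logarithmic divergence cleanly and identifying the exact $1+\gamma$ constant. This requires either a careful Riemann-sum comparison of $\sum_{k=1}^d 1/k$ with $\int \sin^2(dx)/x^2\,dx$, or an appeal to an explicit Fourier-series expansion of the indicator of $I$ in terms of Fej\'er/Dirichlet kernels, so that the variance becomes $\frac{1}{\pi^2}\sum_{k=1}^{d-1}\frac{1}{k}|\hat{\mathbf{1}}_I(k)|^2$ with an essentially closed-form tail. Either route requires careful control of boundary terms to avoid absorbing parts of the constant into the $o(1)$. The multivariate extension to $N_{\theta_1},\dots,N_{\theta_m}$ jointly would then follow by the same DPP covariance formula applied to cross-intervals, but the one-arc case carries all the analytic difficulty.
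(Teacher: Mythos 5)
First, note that the paper does not prove Theorem~\ref{th:wieand}: it is imported from Wieand~\cite{wieand2002eigenvalue} as background material in the appendix, so there is no internal proof to compare your attempt against. Judged on its own terms, your plan follows the standard route to this result and is sound in outline: the mean computation via translation invariance of the Weyl density is complete as written, and the reduction of the variance to $\frac{1}{4\pi^2}\int_I\int_{I^c}\sin^2\!\big(d(\theta-\phi)/2\big)/\sin^2\!\big((\theta-\phi)/2\big)\,d\theta\,d\phi$ via the determinantal (projection-kernel) structure of the CUE is exactly how this is handled in the literature.

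Two caveats on the asymptotic step, which is where all the content lives. (i) The one-dimensional identity you quote for the short-distance piece is not what generates the logarithm: after integrating out one variable of the double integral, the co-area of the level sets $\{\theta-\phi=v\}$ contributes an extra factor of $v$, so the divergence is governed by $\int \sin^2(dv/2)\,v^{-1}\,dv \sim \log d$ per endpoint of the arc (two endpoints giving $\frac{1}{\pi^2}\log d$ in total), not by $\int\sin^2(dx)\,x^{-2}\,dx$, which is $O(d)$ and is what cancels against the diagonal term $\int_I K_d(\theta,\theta)\,d\theta$. (ii) The cleaner of the two routes you mention is the Fourier one: writing $N_I=\sum_k c_k\,\mathrm{Tr}(U^k)$ with $|c_k|^2=\sin^2(k\theta/2)/(\pi^2k^2)$ and using the exact Diaconis--Shahshahani identity $\mathbb{E}\,|\mathrm{Tr}(U^k)|^2=\min(k,d)$ turns the variance into the exact sum $\frac{2}{\pi^2}\sum_{k\ge1}\frac{\sin^2(k\theta/2)}{k^2}\min(k,d)$; the head $k\le d$ yields $\frac{1}{\pi^2}\big(\log d+\gamma+\log|2\sin(\theta/2)|\big)$ via $\sum_{k\ge 1}k^{-1}\cos(k\theta)=-\log|2\sin(\theta/2)|$, and the tail $k>d$ supplies the remaining $\frac{1}{\pi^2}$. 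That route sidesteps the boundary bookkeeping you correctly flag as the main obstacle, and it is where the constant $1+\gamma$ is most transparently pinned down.
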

This theorem, gives a concrete formula for calculating the expectation value and variance of the random variable that represents the number of eigenvalues of a random unitary matrix, in each arc of the unit circle and hence can be used to study the distribution of eigenvalues of random matrices.

%%%%%%%%%%%%%%%%%%%%%%%%%%%%%%%%%%%%%%%%%%%%%%%%%%%%%%%%%%%%%%%%%
\section{qPUF-based Identification Protocols}\label{ap:qpufid-protocols}
In this appendix we give the full description of the qPUF-based identification protocols introduced in~\cite{doosti2020client} which we briefly describes in Section~\ref{sec:efficient-qpufid}.

\subsection{Identification with high-resource verifier}\label{ap:high-verifier}
This protocol, is run between the Alice, the verifier, and Bob, the prover and it is divided into three sequential phases,\\ \\
\begin{framed}
\begin{enumerate}
    \item \emph{Setup phase}:
            \begin{enumerate}
                \item Alice has the qPUF device. 
                \item She randomly picks $K \in \mathcal{O}(\text{poly} \log D)$ classical strings $\phi_i \in \{0,1\}^{\log D}$.
                \item Alice selects and applies a Haar-random state generator operation denoted by the channel $\E$ to locally create the corresponding quantum states in $\HilD$: $\phi_i \overset{\E}{\rightarrow} \kc,\hspace{2mm} \forall i \in [K]$.
                \item She queries the qPUF individually with each challenge $ \kc$ a total of $M$ number of times to obtain $M$ copies of the response state $\kr$ and stores them in their local database $S \equiv \{\kc, \krm\}_{i=1}^{K}$. 
                \item Alice publicly transfers the qPUF to Bob.
            \end{enumerate}
    \item \emph{Identification phase}:
            \begin{enumerate}
                \item Alice uniformly selects a challenge labelled ($i \xleftarrow{\$} [K]$), and sends the state $\kc$ over a public quantum channel to Bob.
                \item Bob generates the output $\kp$ by querying the challenge received from Alice to the qPUF device.
                \item The output state $\kp$ is sent to Alice over a public quantum channel.
                \item This procedure is repeated with the same or different states a total of $R \leq K$ times. 
            \end{enumerate}

    \item \emph{Verification phase}:
            \begin{enumerate}
                \item Alice runs a quantum equality test algorithm on the received response from Bob and the $M$ copies of the correct response that she has in the database. This algorithm is run for all the $R$ CRP pairs.
                \item She outputs `1' implying successful identification if the test algorithm returns `1' on all CRPs. Otherwise, she outputs `0'. 
            \end{enumerate}
    The quantum verification algorithm run by Alice can be both SWAP or GSWAP tests described in the preliminary section.
\end{enumerate}
\end{framed}

\subsection{Identification with low-resource verifier}\label{ap:low-verifier}
This protocol is run between Alice, the verifier, and Bob, the prover in three sequential phases,
\begin{framed}{\textbf{Low-resource verifier qPUF-based protocol}}

\begin{enumerate}
    \item \emph{Setup phase}:
            \begin{enumerate}
                \item Alice has the qPUF device. 
                \item Alice randomly picks $K \in \mathcal{O}(\text{poly} \log D)$ classical strings $\phi_i \in \{0,1\}^{\log D}$.
                \item Alice selects and applies a Haar-random state generator operation denoted by the channel $\E$ to locally create the corresponding quantum states in $\HilD$: $\phi_i \overset{\E}{\rightarrow} \kc,\hspace{2mm} \forall i \in [K]$.
                \item Alice queries the qPUF individually with each quantum challenge $\kc$ to obtain the response state $\kr$.
                \item Alice creates states $\ket{\phi_i^{\perp}}$ orthogonal to $\kc$ and queries the qPUF device with them to obtain the trap states labelled as $\ket{\phi_i^{\text{trap}}}$. The unitary property of qPUF device ensures that $\langle \phi_i^{\text{trap}}|\phi_i^r\rangle = 0$. 
                \item She creates a local database $S \equiv \{\kc,\{ \kr, \ket{\phi_i^{\text{trap}}}\}\}$ for all $i \in [K]$. Thus the $S$ registers stores the challenge state $\kp$ and the corresponding valid response state and the trap state which is orthogonal to the response state.
                \item Alice publicly transfers the qPUF to Bob.
            \end{enumerate}
            The transition is non-secure and Eve is allowed $\mathcal{O}(\text{poly} \log D)$ query access to the qPUF to build their database. 
    \item \emph{Identification phase}:
            \begin{enumerate}
                \item Alice randomly selects a subset $N \subseteq K$ different challenges $\kc$ and sends them over a public channel to Bob.
                \item She randomly selects $N/2$ positions, marks them $b = 1$ and sends the valid response states $\ket{\phi_i^1} = \kr$ to Bob. On the remaining $N/2$ positions, marked as $b = 0$, she sends the trap states $\ket{\phi_i^0} = \ket{\phi_i^{\text{trap}}}$.
                %\item This procedure is repeated for $N \subseteq K$ different challenges.
            \end{enumerate}
    \item \emph{Verification phase}:
            \begin{enumerate}
                \item Bob queries the qPUF device with the challenge states received from Alice to generate the response states $\kp$ for all $i \in [N]$. 
                \item He performs a quantum equality test algorithm by performing a SWAP test between $\kp$ and the response state $\ket{\phi_i^b}$ received from Alice. This algorithm is repeated for all the $N$ distinct challenges.   
                \item Bob labels the outcome of $N$ instances of the SWAP test algorithm by $s_i \in \{0,1\}$ and sends them over a classical channel to Alice.
                \item Alice runs a classical verification algorithm \texttt{cVer($s_1,...,s_N$)} and outputs `1' implying that Bob's qPUF device has been successfully identified. She outputs `0' otherwise. 
            \end{enumerate}
\end{enumerate}
\end{framed}

The classical verification algorithm, \texttt{cVer}, receives an $N$-bit binary string $S_N$ as input. The algorithm is divided into two tests as is as follows:
\begin{algorithm}[ht!]
\SetAlgoLined
\textbf{Description:} Let $S_N = \{0,1\}^N$ be the input $N$-bit string. Let $P=\{i_k\}^{N/2}_{k=1}$ be the set of indices showing the rounds of the protocol where $b=1$. Algorithm consists of two tests, \texttt{test1} and \texttt{test2} as follows:\\
%  initialization\;
   \texttt{test1:}\\
   \ForAll{$i$ in P}{
    \If{$s_{i} = 0$}{
        $count\gets count+1$\;
    }}
    \eIf{$count = \frac{N}{2}$}{
        \Return 1\;
    }{\Return 0\;}

\hrulefill\\
\texttt{test2:}\\
    \eIf{\texttt{test1} = 0}{
        \Return 0\;
    }
    {
    \ForAll{$i$ not in P}{
    \If{$s_{i} = 1$}{
        $count\gets count+1$\;
    }}
    \eIf{$\lvert count - \delta\frac{N}{2} \rvert \leqslant \delta_{er}$}{
        \Return 1\;
    }{\Return 0\;}
    }
 \caption{\texttt{cVer} algorithm}
\end{algorithm}\label{alg:cver}

\end{document}